\newtheorem{theorem}{Theorem}[section]
\newtheorem{lemma}[theorem]{Lemma}
\newtheorem{meta-theorem}[theorem]{Meta-Theorem}
\newtheorem{corollary}[theorem]{Corollary}
\newtheorem{definition}[theorem]{Definition}
\newtheorem{fact}[theorem]{Fact}
\definecolor{darkgreen}{rgb}{0,0.5,0}
\definecolor{darkgreen}{rgb}{0,0.5,0}
\definecolor{ddarkgreen}{rgb}{0,0.35,0}
\definecolor{darkblue}{rgb}{0.0,0.0,0.5}
\definecolor{ddarkblue}{rgb}{0.0,0.0,0.4}
\definecolor{darkred}{rgb}{0.45,0.0,0.0}
\definecolor{ddarkred}{rgb}{0.35,0.0,0.0}
\crefname{theorem}{Theorem}{Theorems}
\Crefname{lemma}{Lemma}{Lemmas}
\algnewcommand\algorithmicswitch{\textbf{switch}}
\algnewcommand\algorithmiccase{\textbf{case}}
\newcommand{\set}[1]{\left\{ #1 \right\}}
\newcommand{\Hcal}{\mathcal{H}}
\newcommand{\ceil}[1]{\lceil #1 \rceil}
\newcommand{\ybar}{Y}
\newcommand{\rhobar}{\bar{\rho}}
\newcommand{\E}{\mathbb{E}}
\newcommand{\poly}{\text{poly}}
\newenvironment{theorem-repeat}[1]{\begin{trivlist}
		\item[\hspace{\labelsep}{\bf\noindent Theorem \ref{#1} }]\em }%
	{\end{trivlist}}
\newcommand*\samethanks[1][\value{footnote}]{\footnotemark[#1]}
\begin{document}
\begin{titlepage}
\date{}

\title{Derandomizing Local Distributed Algorithms under \\ Bandwidth Restrictions}

\author{
 Keren Censor-Hillel\thanks{Supported in part by the Israel Science Foundation (grant 1696/14).}\\
  \small Technion \\
  \small ckeren@cs.technion.ac.il
	\and
	Merav Parter\\
	\small MIT \\
	\small parter@mit.edu
 \and
 Gregory Schwartzman\samethanks[1]\\
  \small Technion \\
  \small gregory.schwartzman@gmail.com
}

\maketitle

\begin{abstract}
This paper addresses the cornerstone family of \emph{local problems} in distributed computing, and investigates the curious gap between randomized and deterministic solutions under bandwidth restrictions.

Our main contribution is in providing tools for derandomizing solutions to local problems, when the $n$ nodes can only send $O(\log n)$-bit messages in each round of communication. We combine bounded independence, which we show to be sufficient for some algorithms, with the method of conditional expectations and with additional machinery, to obtain the following results.

Our techniques give a deterministic maximal independent set (MIS) algorithm in the CONGEST model, where the communication graph is identical to the input graph, in $O(D\log^2 n)$ rounds, where $D$ is the diameter of the graph. The best known running time in terms of $n$ alone is $2^{O(\sqrt{\log n})}$, which is super-polylogarithmic, and requires large messages. For the CONGEST model, the only known previous solution is a coloring-based $O(\Delta + \log^* n)$-round algorithm, where $\Delta$ is the maximal degree in the graph. To the best of our knowledge, ours is the first deterministic MIS algorithm for the CONGEST model, which for polylogarithmic values of $D$ is only a polylogarithmic factor off compared with its randomized counterparts.

On the way to obtaining the above, we show that in the \emph{Congested Clique} model, which allows all-to-all communication, there is a deterministic MIS algorithm that runs in $O(\log \Delta \log n)$ rounds.
When $\Delta=O(n^{1/3})$, the bound improves to $O(\log \Delta)$ and holds also for $(\Delta+1)$-coloring.

In addition, we deterministically construct a $(2k-1)$-spanner with $O(kn^{1+1/k}\log n)$ edges in $O(k \log n)$ rounds. For comparison, in the more stringent CONGEST model, the best deterministic algorithm for constructing a $(2k-1)$-spanner with $O(kn^{1+1/k})$ edges runs in $O(n^{1-1/k})$ rounds.
\end{abstract}

\thispagestyle{empty}
\end{titlepage}

\section{Introduction}
\subsection{Motivation}
A cornerstone family of problems in distributed computing are the so-called \emph{local problems}. These include finding a maximal independent set (MIS), a $(\Delta+1)$-coloring where $\Delta$ is the maximal degree in the network graph, finding a maximal matching, constructing multiplicative spanners, and more. Intuitively, as opposed to \emph{global problems}, local problems admit solutions that do not require communication over the entire network graph.

One fundamental characteristic of distributed algorithms for local problems is whether they are deterministic or randomized. Currently, there exists a curious gap between the known complexities of randomized and deterministic solutions for local problems. Interestingly, the main indistinguishability-based technique used for obtaining the relatively few lower bounds that are known seems unsuitable for separating these cases.
Building upon an important new lower bound technique of Brandt et al.~\cite{BrandtFHKLRSU16}, a beautiful recent work of Chang et al.~\cite{ChangKP16} sheds some light over this question, by proving that the randomized complexity of any local problem is at least its deterministic complexity on instances of size $\sqrt{\log n}$. In addition, they show an exponential separation between the randomized and deterministic complexity of $\Delta$-coloring trees. These results hold in the \emph{LOCAL} model, which allows unbounded messages.

The size of messages that are sent throughout the computation is indeed a second major attribute of distributed algorithms. Another central model for distributed computing is the \emph{CONGEST} model, in which message sizes are restricted, typically to $O(\log n)$ bits. For global problems, such as computing a minimum spanning tree (MST) and more, separations between LOCAL and CONGEST are known~\cite{Elkin-2004,Peleg-Rubinovich-1999,DasSarma-11}. Intriguingly, such separations are in general not known for local problems and are a central open question. Hence, it is crucial to study the complexity of local problems under bandwidth restrictions. Surprisingly, some (but not all) of the algorithms for local problems already use only small messages. This includes, for example, the classic MIS algorithms of Luby~\cite{luby1986simple} and Cole and Vishkin~\cite{ColeV86}, as well as the deterministic coloring algorithm of Barenboim~\cite{Barenboim15}.

This paper investigates the complexity of local problems in the domain that lies in the intersection of the randomized-deterministic gaps and the natural assumption of restricted bandwidth. Specifically, we show how to derandomize distributed algorithms for local problems in the CONGEST model and in its relaxation known as the congested clique model, which further allows all-to-all communication (regardless of the input graph for which the problem needs to be solved), thereby abstracting away the effect of distances and focusing on the bandwidth restrictions. The essence of our derandomization technique lies in the design of a \emph{sample} space that is easy to search by communicating only little information to a leader.

\subsection{Our Contribution}
\paragraph{Maximal Independent Set (MIS):} We begin by derandomizing the MIS algorithm of Ghaffari~\cite{ghaffari2016improved}, which runs in $O(\log\Delta)+2^{O(\sqrt{\log\log n})}$ rounds, w.h.p\footnote{As standard in this area, \emph{with high probability} means with probability that is at least $1-1/n$.}. In a nutshell, this algorithm works in constant-round phases, in which nodes choose to mark themselves with probabilities that evolve depending on the previous probabilities of neighbors. A marked node that does not have any marked neighbors joins the MIS and all of its neighbors remove themselves from the graph. The analysis shows that after $O(\log\Delta)$ phases the graph consists of a convenient decomposition into small clusters for which the problem can be solved fast. This is called the shattering phenomena, and it appears in several distributed algorithms for local problems (see, e.g.,~\cite{barenboim2012locality}).

We first show that a tighter analysis for the congested clique model of Ghaffari's MIS algorithm can improve its running time from $O(\log\Delta+\log^* n)$ (which follows from combining~\cite{ghaffari2016improved} with the new connectivity result of ~\cite{Ghaffari-Parter-MST}) to $O(\log \Delta)$ rounds.
\newcommand{\TheoremMISlogDelta}
{
There is a randomized algorithm that computes MIS in the congested clique model within $O(\log \Delta)$ rounds with high probability.
}
\begin{theorem}
\label{theorem:randcongestedclique}
\TheoremMISlogDelta
\end{theorem}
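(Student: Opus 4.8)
The starting point is Ghaffari's MIS algorithm \cite{ghaffari2016improved}, which runs in $O(\log\Delta) + 2^{O(\sqrt{\log\log n})}$ rounds in the LOCAL model and, with small messages, in the CONGEST model as well. The algorithm has two parts: a \emph{main phase} of $O(\log\Delta)$ constant-round iterations, after which (w.h.p.) the still-undecided nodes induce a graph whose connected components are of size $O(\poly(\Delta)\cdot\log n)$ and which moreover have $O(\log_\Delta n)$-small components after removing a few edges (the shattering guarantee); and a \emph{post-shattering phase} that solves MIS on these leftover components. The $2^{O(\sqrt{\log\log n})}$ term comes entirely from the post-shattering phase, where one runs a deterministic network-decomposition-based algorithm locally on each small component. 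My plan is to show that in the \emph{congested clique} the post-shattering phase can be executed in $O(1)$ additional rounds --- or more precisely folded into a total of $O(\log\Delta)$ rounds --- by exploiting all-to-all communication, leaving the main phase's $O(\log\Delta)$ bound as the bottleneck.

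**Key steps.** First, I would run Ghaffari's main phase verbatim for $O(\log\Delta)$ rounds; since each message is a constant number of probability values and IDs, it fits in $O(\log n)$ bits, so this costs $O(\log\Delta)$ rounds in the congested clique. Second, after the main phase, each leftover component has $O(\poly(\Delta)\cdot\log n)$ nodes and, crucially, each leftover node has at most $O(\poly(\Delta)\cdot\log n)$ neighbors remaining (its degree is bounded by the original $\Delta$). The idea is to \emph{gather each component onto a single node} of the congested clique. A component $C$ has $O(|C|\cdot\Delta) = O(\poly(\Delta)\cdot\log n)$ edges; if $|C|\cdot\Delta = O(n)$ we can route all of $C$'s topology to one designated node in $O(1)$ rounds using Lenzen's routing \cite{Lenzen-routing} (each node holds $O(\log n)$ words, total $O(n\log n)$ words, and Lenzen's protocol delivers any such degree-$O(n)$ message pattern in $O(1)$ rounds). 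That node solves MIS locally at no communication cost and reports the answer back in $O(1)$ rounds. Third, I would need to argue that the leftover components are small enough that this gathering respects the per-node $O(n)$-message budget globally: since the components are vertex-disjoint and there are at most $n$ nodes total, the aggregate amount of data to route is $O(n\cdot\Delta\cdot\poly\log n)$ words, which exceeds the $O(n)$-per-round budget unless $\Delta$ is small. This is the step that needs care.

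**The main obstacle.** The difficulty is precisely the last point: when $\Delta$ is large (say $\Delta = n^{1/2}$), a single leftover component may have $\poly(\Delta)\cdot\log n = \omega(n)$ nodes, so it cannot be gathered onto one machine in $O(1)$ rounds, and even if each component is small, the total edge count across all components can exceed $O(n)$. I expect the resolution to mirror the known congested-clique toolkit: either (i) show that after $O(\log\Delta)$ main-phase iterations the leftover graph is sparse enough --- $O(n)$ edges total --- using the stronger shattering bound (each leftover node survives with probability $\le \Delta^{-c}$ independently enough that w.h.p. the leftover graph has $O(n/\Delta^{c-1})$ edges, comfortably $O(n)$ for $c\ge 2$), so a single round of Lenzen routing suffices to collect the \emph{entire} leftover graph onto one node; or (ii) if components are still too large, recurse $O(1)$ more times or invoke a deterministic congested-clique MIS on the sparse leftover graph. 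I would pursue route (i): the key lemma to nail down is that w.h.p. the number of leftover nodes is $O(n/\poly\Delta)$ \emph{and} each has degree $O(\poly\Delta\cdot\log n)$, so the leftover graph has $O(n)$ edges and fits in the $O(n)$-word-per-round routing budget, letting a leader finish the job in $O(1)$ rounds. Combining, the total is $O(\log\Delta) + O(1) = O(\log\Delta)$ rounds w.h.p., as claimed.
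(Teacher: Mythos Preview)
Your route (i) is exactly what the paper does: run Ghaffari's first part for $O(\log\Delta)$ phases, argue that w.h.p.\ only $O(n/\Delta)$ nodes remain undecided (hence $O(n)$ leftover edges), and ship the entire leftover graph to a single leader via Lenzen's routing in $O(1)$ rounds. So the overall architecture matches.

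The one place where you are hand-wavy and the paper is precise is the concentration step behind ``independently enough.'' The survival events are \emph{not} independent---whether $v$ is decided by phase $t$ depends on randomness in its $2$-neighborhood, so survival indicators can be correlated within distance~$4$. The paper handles this by invoking a Chernoff bound for random variables with bounded dependency degree (Fact~\ref{fc:chernoffbounded}, from Pemmaraju): with dependency $\widehat d = \Delta^4$ and $\mu = n/\Delta^c$, taking $c$ large enough gives $\Pr[X \ge n/\Delta] \le O(\Delta^4)\, e^{-\Theta(n)}$. Your sketch would go through once you plug this in; without it, Markov alone on the expectation $n/\Delta^c$ does not give a high-probability bound. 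Also, a minor cleanup: in your last paragraph the degree bound you need on leftover nodes is simply $\Delta$ (the original maximum degree), not $O(\poly\Delta\cdot\log n)$.
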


For the derandomization, we use the method of conditional expectations (see e.g.,~\cite[Chapter 6.3]{UpfalMBook}). In our context, this shows the existence of an assignment to the random choices made by the nodes that attains the desired property of removing a sufficiently large part of the graph in each iteration, where removal is due to a node already having an output (whether the vertex is in the MIS or not). As in many uses of this method, we need to reduce the number of random choices that are made in order to be able to efficiently \emph{compute} the above assignment.

However, we need to overcome several obstacles. First, we need to reduce the search space of a good assignment to the random choices of the nodes, by showing that pairwise independence (see, e.g.,~\cite[Chapter 13]{UpfalMBook}) is sufficient for the algorithm to work. Unfortunately, this does not hold directly in the original algorithm.

The first key ingredient is a slight modification of the constants used by Ghaffari's algorithm. Ghaffari's analysis is based on a definition of \emph{golden nodes}, which are nodes that have a constant probability of being removed in the given phase. We show that this removal-probability guarantee holds also with pairwise independence upon our slight adaptation of the constants used by the algorithm.

Second, the shattering effect that occurs after $O(\log \Delta)$ rounds of Ghaffari's algorithm with \emph{full independence}, no longer holds under pairwise independence. Instead, we take advantage of the fact that in the congested clique model, once the remaining graph has a linear number of edges then the problem can be solved locally in constant many rounds using Lenzen's routing algorithm~\cite{Lenzen13}. Thus, we modify the algorithm so that after $O(\log \Delta)$ rounds, the remaining graph (containing all undecided nodes) contains $O(n)$ edges. The crux in obtaining this is that during the first $O(\log \Delta)$ phases, we favor the removal of \emph{old} nodes, which, roughly speaking, are nodes that had many rounds in which they had a good probability of being removed. This prioritized (or biased) removal strategy allows us to employ an amortized (or accounting) argument to claim that every node that survives $O(\log \Delta)$ rounds, can blame a distinct set of $\Delta$ nodes for not being removed earlier. Hence, the total number of remaining nodes is bounded by $O(n/\Delta)$, implying a remaining number of edges of $O(n)$.

To simulate the $O(\log \Delta)$ randomized rounds of Ghaffari's algorithm, we enjoy the small search space (due to pairwise independence) and employ the method of conditional expectations on a random seed of length $O(\log n)$.
Note that once we start conditioning on random variables in the seed, the random choices are no longer pairwise independent as they are in the unconditioned setting. However, we do not use the pairwise independence in the conditioning process. That is, the pairwise independence is important in showing that the \emph{unconditional expectation} is large, and from that point on the conditioning does not reduce this value.

As typical in MIS algorithms, the probability of a node being removed stems from the random choices made in its $2$-neighborhood. With a logarithmic bandwidth, collecting this information is too costly. Instead, we use a pessimistic estimator to \emph{bound} the conditional probabilities rather than compute them.

Finally, to make the decision of the partial assignment and inform the nodes, we leverage the power of the congested clique by having a leader that collects the relevant information for coordinating the decision regarding the partial assignment. In fact, the algorithm works in the more restricted \emph{Broadcast Congested Clique} model, in which a node must send the same $O(\log n)$-bit message to all other nodes in any single round.
Carefully placing all the pieces of the toolbox we develop, gives the following.
\newcommand{\TheoremMIS}
{
There is a deterministic MIS algorithm for the broadcast congested clique model that completes in $O(\log \Delta \log n)$ rounds.
}
\begin{theorem}
\label{theorem:MIS}
\TheoremMIS
\end{theorem}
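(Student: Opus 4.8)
The plan is to derandomize the modified version of Ghaffari's algorithm (the one underlying Theorem~\ref{theorem:randcongestedclique}) phase by phase, using the method of conditional expectations over a short random seed. First I would fix the structural backbone: the algorithm runs in $O(\log\Delta)$ constant-round phases, and in each phase every node marks itself with its current probability $p_v$; the random choices of all nodes in a given phase are generated from a single seed $s$ of length $O(\log n)$ via a pairwise-independent generator, so that the marking indicators $\{X_v\}$ are pairwise independent with the prescribed marginals. The key potential function to track is a weighted count of ``removed'' nodes — weighted so that \emph{old} nodes (those that have accumulated many rounds of good removal probability) count more — and the claim, inherited from the analysis behind Theorem~\ref{theorem:randcongestedclique} together with the golden-node guarantee under pairwise independence, is that $\E_s[\Phi]$ is at least some explicit positive quantity in every phase. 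Because $\Phi$ is a sum over nodes of locally-determined contributions, linearity of expectation lets us lower-bound $\E_s[\Phi]$ by summing the (constant) per-node removal probabilities of golden/old nodes, and this bound survives conditioning.

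Next I would implement the conditional-expectations search. The seed $s$ has $O(\log n)$ bits $b_1,\dots,b_{O(\log n)}$; we fix them one at a time, at each step keeping the bit value that does not decrease the current conditional expectation $\E_s[\Phi \mid b_1,\dots,b_i]$. After all bits are fixed we obtain a concrete assignment achieving $\Phi \ge \E_s[\Phi]$, hence the desired fractional removal. The obstacle here — and the step I expect to be the main difficulty — is that a node cannot evaluate its own conditional expectation: the probability that $v$ is removed depends on the marking choices in its $2$-neighborhood, and collecting that much information per round costs far more than $O(\log n)$ bits. The fix is a \emph{pessimistic estimator}: replace the true conditional removal probability of each node by an easily-computable lower bound (e.g., via inclusion–exclusion truncation / union-bound-style expressions that each node can compute from $O(\log n)$-bit quantities like $\sum_{u\in N(v)} p_u$ and its own $p_v$, conditioned on the already-fixed bits). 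One verifies the standard two properties of a pessimistic estimator: it lower-bounds the true conditional expectation, and for each of the two extensions of the next bit, the better of the two estimator values is at least the current one. This keeps a valid invariant throughout the $O(\log n)$ conditioning steps.

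To actually run the conditioning in the model, I would use the leader/broadcast structure: to decide bit $b_{i+1}$, each node broadcasts the $O(\log n)$-bit local statistics it needs to report (its $p_v$, the conditioned estimator contributions, and the sums over its neighborhood it has gathered — note neighborhood sums of $O(\log n)$-bit values are themselves $O(\log n)$ bits), a designated leader aggregates the global estimator value for $b_{i+1}=0$ and for $b_{i+1}=1$, picks the better, and broadcasts the one-bit decision. Each bit costs $O(1)$ rounds of broadcast-congested-clique communication, so fixing the whole seed costs $O(\log n)$ rounds per phase; across $O(\log\Delta)$ phases this is $O(\log\Delta\log n)$ rounds. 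Finally, the biased-removal accounting argument from the proof of Theorem~\ref{theorem:randcongestedclique} guarantees that after these $O(\log\Delta)$ derandomized phases the surviving graph has $O(n)$ edges; at that point we invoke Lenzen's routing~\cite{Lenzen13} to collect the entire remaining instance at the leader and solve MIS on it locally in $O(1)$ rounds, and broadcast the answers. Summing the two stages gives the claimed $O(\log\Delta\log n)$-round deterministic bound, and the whole procedure uses only broadcasts of $O(\log n)$-bit messages.
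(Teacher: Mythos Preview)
Your proposal is correct and follows essentially the same route as the paper: pairwise-independent seed of length $O(\log n)$, an age-weighted pessimistic estimator for golden-node removal, bit-by-bit conditional expectations aggregated at a leader via broadcast, then Lenzen routing on the $O(n)$-edge residual graph. One point of attribution to fix: the ``biased-removal accounting argument'' you cite from Theorem~\ref{theorem:randcongestedclique} is not there --- that proof uses a bounded-dependency Chernoff bound (Corollary~\ref{cor:shater}) which collapses under pairwise independence, and the age-weighted amortization (Lemma~\ref{lem:amortized} here) is precisely the new ingredient introduced to replace it in the deterministic setting.
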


If the maximal degree satisfies $\Delta=O(n^{1/3})$ then we can improve the running time in the congested clique model. In fact, under this assumption we can obtain the same complexity for $(\Delta+1)$-coloring.
\newcommand{\TheoremMISBoundedDelta}
{
If $\Delta=O(n^{1/3})$ then there is a deterministic MIS algorithm (and a $(\Delta+1)$-coloring algorithm) for the congested clique model that completes in $O(\log \Delta)$ rounds.
}
\begin{theorem}
\label{theorem:MISboundedDelta}
\TheoremMISBoundedDelta
\end{theorem}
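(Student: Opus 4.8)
The plan is to run the same $O(\log\Delta)$-phase derandomization of the congested-clique version of Ghaffari's MIS algorithm that underlies \Cref{theorem:MIS}, but to perform the per-phase search over the $O(\log n)$-bit random seed in $O(1)$ rounds instead of $O(\log n)$ rounds, by exploiting point-to-point routing (Lenzen's scheme~\cite{Lenzen13}) in place of broadcast. Recall the setup: the random choices of one phase are produced by a pairwise-independent generator whose seed $Y$ has $O(\log n)$ bits and whose support is $\poly(n)$ (e.g.\ $Y=(a,b)\in\mathbb{F}_q^2$ with $q$ a power of two and $q\ge\max(n,2^{O(\log\Delta)})$, which accommodates the $O(\log\Delta)$-bit marking probabilities); by the adjustment of Ghaffari's constants and the golden-node argument already established for \Cref{theorem:randcongestedclique,theorem:MIS}, even under pairwise independence a constant fraction of the vertices that a full-independence phase would remove are removed in expectation, and the method of conditional expectations fixes $Y$ so that the realized number of removed vertices is at least this expectation. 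For $(\Delta+1)$-coloring we replace Ghaffari's phase by the natural trial-coloring phase — each still-uncolored vertex draws a uniform color from $[\Delta+1]$ and keeps it if no neighbor took it — whose per-phase progress (measured against each vertex's remaining palette slack) admits the same pairwise-independence analysis, and whose colors are again produced from an $O(\log n)$-bit seed of $\poly(n)$ support.

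The role of the hypothesis $\Delta=O(n^{1/3})$ is that the $2$-hop neighborhood of every vertex then spans only $O(\Delta^3)=O(n)$ edges and every vertex lies in the $2$-hop neighborhood of only $O(\Delta^2)$ vertices, so Lenzen's routing delivers to each vertex $v$, in $O(1)$ rounds, the entire induced topology of its $2$-neighborhood together with the current probabilities (resp.\ palette states) of those vertices. Since whether $v$ is removed in a phase is a function of the random choices inside $v$'s $2$-neighborhood, $v$ can now compute locally its own exact conditional removal probability $\phi_v(z)=\Pr[\,v\text{ removed}\mid Y\text{ agrees with }z\,]$ for every partial assignment $z$ to the bits of $Y$ (a finite local computation over the $\poly(n)$ seeds consistent with $z$), with no further communication. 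We then fix $Y$ in blocks of $t=\lfloor\log n\rfloor$ bits, a constant number of blocks: for one block there are at most $n$ candidate extensions $z$; assign candidate $z$ to the vertex with identifier $z$, have every vertex $v$ send $\phi_v(z)$ to vertex $z$ for all $z$ (each vertex sends and each receives at most $n$ messages, so $O(1)$ routing rounds), have vertex $z$ form $\Phi(z)=\sum_v\phi_v(z)$ and report it to a fixed leader, and have the leader broadcast the maximizing block; vertices then recompute their $\phi_v$ on the refined partial assignment for the next block. Thus each block, hence each phase, costs $O(1)$ rounds, and all $O(\log\Delta)$ phases cost $O(\log\Delta)$ rounds. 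Exactly as for \Cref{theorem:MIS}, biasing removal toward \emph{old} vertices makes every vertex that survives these phases chargeable to a private set of $\Delta$ vertices removed on its behalf, so only $O(n/\Delta)$ vertices and hence $O(n)$ edges remain; the leader collects these $O(n)$ edges (again $O(1)$ routing rounds), solves the residual MIS / $(\Delta+1)$-coloring instance locally, and informs the vertices.

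The main point to get right is the interaction between locality and the progress guarantee: the estimator whose conditional expectation we follow must be $2$-neighborhood-local (so that the $O(1)$-round parallel aggregation above is possible) while still having an unconditional expectation large enough to drive the amortized accounting that leaves $O(n)$ edges after $O(\log\Delta)$ phases. For MIS both hold essentially by reuse of \Cref{theorem:randcongestedclique,theorem:MIS}: the removal event is already $2$-neighborhood-local, so taking $\phi_v$ to be the exact conditional removal probability is both locally computable here and a legitimate (indeed exact) potential, and its unconditional value is the golden-node bound from those theorems. The one genuinely new verification is the $(\Delta+1)$-coloring variant — that under a pairwise-independent palette a constant fraction of a vertex's remaining slack is eliminated per phase, that the resulting estimator is $2$-neighborhood-local, and that the accounting again leaves $O(n)$ edges after $O(\log\Delta)$ phases — but this is routine along the lines of the MIS analysis. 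Everything else (routing-load bounds, the arithmetic of the pairwise-independent family, and the final local solve) is bookkeeping.
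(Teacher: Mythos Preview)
Your MIS argument is essentially the paper's own proof: collect the $2$-neighborhood in $O(1)$ rounds via Lenzen's routing (using $\Delta=O(n^{1/3})$ so that each node sends/receives $O(\Delta^3)=O(n)$ words), then fix the $O(\log n)$-bit pairwise-independent seed in $\lfloor\log n\rfloor$-bit chunks by farming out each of the $n$ candidate extensions to a distinct node, and finish by the age-weighted accounting of \Cref{theorem:MIS}. Two cosmetic points: you describe the aggregated potential as the unweighted $\Phi(z)=\sum_v\phi_v(z)$ but then invoke the age-biased accounting; to make the accounting go through you must aggregate the age-weighted version $\sum_v(1/(1-\alpha))^{age(v)}\phi_v(z)$, exactly as in the paper. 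And since you now know the full $2$-neighborhood, you are free (as you note) to replace the pessimistic estimator $\psi_{v,t}$ by the exact indicator $r_{v,t}$; this is a harmless improvement.

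The substantive divergence is in the $(\Delta+1)$-coloring claim. The paper does \emph{not} derandomize trial-coloring directly; it invokes the standard reduction of $(\Delta+1)$-coloring to MIS on the blow-up graph (each vertex replaced by a $(\Delta+1)$-clique with perfect matchings between neighboring cliques), and observes that the $2$-neighborhood in the blow-up is determined by the $2$-neighborhood in $G$, so the same $O(1)$-round routing and chunked-seed search go through. Your proposed route---derandomize the phase ``pick a uniform color in $[\Delta+1]$, keep it if no neighbor picked it'' via a pairwise-independent seed---is not routine, and as stated has a gap. Under pairwise independence the only bound you get on a vertex's per-phase success probability is $1-\deg(v)/(\Delta+1)\ge 1/(\Delta+1)$ (by the union bound over neighbors choosing the same color), and the ``slack'' $|P_v|-d_v$ need not grow: whenever a neighbor of $v$ gets colored, $d_v$ drops by one but $|P_v|$ may drop by one as well, so the slack can stay at $1$ forever. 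Hence neither ``a constant fraction of the remaining slack is eliminated per phase'' nor ``$O(n)$ edges remain after $O(\log\Delta)$ phases'' follows from the MIS analysis in any routine way. If you want to salvage a direct coloring argument you would need a genuinely new per-phase potential (not just the MIS golden-node bound transplanted), whereas the reduction-to-MIS route gives the result for free.
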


Our techniques immediately extend to the CONGEST model. In that setting we show that MIS can be computed in $O(D \cdot \log^2 n)$ rounds where $D$ is the diameter of the graph. Here, we simulate $O(\log n)$ rounds of Ghaffari's algorithm rather than $O(\log \Delta)$ rounds as before. Each such randomized round is simulated by using $O(D\cdot \log n)$ deterministic rounds in which the nodes compute a $O(\log n)$ seed. Computing each bit of the seed, requires aggregation of the statistics to a leader which can be done in $O(D)$ rounds, and since the seed is of length $O(\log n)$, we have the following:
\newcommand{\TheoremMISCongest}
{
There is a deterministic MIS algorithm for the CONGEST model that completes in $O(D\log^2 n)$ rounds.
}
\begin{theorem}
\label{theorem:MIScongest}
\TheoremMISCongest
\end{theorem}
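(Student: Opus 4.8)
The plan is to reuse the entire derandomization machinery developed for the broadcast congested clique (Theorem \ref{theorem:MIS}) essentially verbatim, paying only for the fact that aggregating information to a leader and broadcasting back now costs $O(D)$ rounds instead of $O(1)$. Recall that the deterministic congested-clique MIS algorithm simulates the randomized phases of Ghaffari's algorithm one at a time; in each simulated phase the nodes jointly fix a random seed of length $O(\log n)$ bit-by-bit via the method of conditional expectations, each bit-decision requiring a single aggregation of a pessimistic-estimator statistic to a leader and one broadcast of the chosen bit. The key structural facts we invoke, all established earlier, are: (i) pairwise independence on an $O(\log n)$-bit seed suffices, after the constant adjustment, for the golden-node removal guarantee, so that the expected (potential-weighted) number of nodes removed in a phase is a constant fraction; (ii) a leader can, from the conditional expectation inequalities, extract each successive seed bit using only $O(\log n)$-sized aggregated quantities; and (iii) after enough phases the shattered/remaining graph is small enough to finish. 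The only model-dependent change is the cost of the leader's bookkeeping.

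The first step is to replace the shattering/linear-edges endgame used on the clique with the brute-force global solution available in CONGEST: after $O(\log n)$ simulated phases of Ghaffari's algorithm run with \emph{full} independence the graph is shattered into components of size $\poly(\log n)$, but under pairwise independence we instead simply run enough phases that the potential argument drives the number of undecided nodes — and hence the problem — down, and then collect the entire residual instance to the leader (or solve it by a deterministic global computation) in $O(D \cdot \poly(\log n))$ rounds. Concretely, simulating $O(\log n)$ phases, each contributing the constant-factor expected removal of (potential-weighted) nodes guaranteed by item (i), leaves an expected $O(1)$ mass; derandomizing each phase makes this deterministic, so after $O(\log n)$ phases we are done, and nothing beyond the endgame needs the clique. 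This is why the round count is $O(\log n)$ phases $\times$ ($O(\log n)$ seed bits per phase) $\times$ $O(D)$ rounds per bit $= O(D \log^2 n)$.

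The second step is to verify that every communication primitive used by the leader in the clique algorithm has an $O(D)$-round CONGEST implementation. Building a BFS tree rooted at the leader once, up front, costs $O(D)$ rounds. Thereafter, each bit of each seed requires: summing up the pessimistic-estimator contributions (which are, by construction, sums over nodes and over their incident edges of $O(\log n)$-bit quantities) — this is a standard convergecast on the BFS tree in $O(D)$ rounds, where at each internal node we aggregate the at-most-two candidate partial sums corresponding to setting the next bit to $0$ or to $1$; the leader then compares the two aggregates, picks the bit keeping the estimator on the good side of the threshold, and broadcasts it down the tree in $O(D)$ rounds. Crucially, as in the clique argument, the pessimistic estimator is defined so that it is computable from \emph{local} per-node statistics of the $2$-neighborhood choices without any node ever having to learn its $2$-neighborhood — the estimator dominates the true conditional removal probability and obeys the telescoping inequality, so the leader's greedy bit choices preserve a lower bound that starts at the unconditional (pairwise-independent) expectation and never decreases. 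Hence each phase costs $O(\log n \cdot D)$ rounds.

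The main obstacle is making sure the pessimistic estimator and the accounting/potential argument transfer cleanly to the CONGEST setting: in particular, that the quantity the leader aggregates is genuinely a sum of $O(\log n)$-bit local contributions (so it fits in the bandwidth along the BFS tree — partial sums may need $O(\log n)$ bits, which is fine, but we must confirm no node needs to forward more than $O(1)$ such numbers per round), and that the "favor removal of old nodes" amortization used on the clique, which relied on the clique's constant-round finishing, is no longer needed here because we can afford the crude $O(D\cdot\poly\log n)$ endgame — so the simpler $O(\log n)$-phase potential argument suffices. Assembling these, the total is $O(\log n)$ phases, each taking $O(D \log n)$ rounds for the seed construction plus $O(D)$ rounds to execute the resulting deterministic round of Ghaffari's algorithm, followed by an $O(D\cdot \poly\log n)$ endgame, for an overall bound of $O(D \log^2 n)$ rounds, as claimed. $\hfill\qed$
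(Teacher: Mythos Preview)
Your high-level structure is right and matches the paper: simulate $O(\log n)$ phases of (modified) Ghaffari, derandomize each phase by fixing an $O(\log n)$-bit pairwise-independent seed one bit at a time, and implement each bit-decision by a convergecast/broadcast on a BFS tree in $O(D)$ rounds, for a total of $O(D\log^2 n)$.

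The gap is in your treatment of the ``endgame'' and the age-based weighting. You write that the favor-old-nodes amortization ``is no longer needed here because we can afford the crude $O(D\cdot\poly\log n)$ endgame,'' and that a ``simpler $O(\log n)$-phase potential argument suffices.'' This is backwards. Without the age weighting, the derandomization of a single phase only guarantees that a constant fraction of the \emph{golden} nodes in that phase are removed; it gives no guarantee for any \emph{particular} node. Since the golden set changes from phase to phase, a fixed node $v$ could be golden in every phase and yet never be selected for removal by the conditional-expectations choice (which always has the option of removing other golden nodes instead). So you cannot bound the size of the residual after $O(\log n)$ phases, and therefore your proposed endgame of ``collect the residual to the leader'' has no size bound to work with in CONGEST.

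What the paper actually does is keep the age-based weighting exactly as in the clique algorithm and simply run it for $O(\log n)$ phases instead of $O(\log\Delta)$. The charging/bag argument of Lemma~\ref{lem:amortized} then says that any node still undecided after $\Theta(\log n)$ phases has a bag of size $(1/(1-\alpha))^{\Omega(\log n)}=\Omega(n^c)$ disjoint (fractions of) removed nodes, which is impossible since there are only $n$ nodes. Hence \emph{no} node remains undecided and there is no endgame at all. So: retain the age-weighted pessimistic estimator $\Psi'_t$, aggregate it over the BFS tree, and drop the separate endgame paragraph.
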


The significance of the latter is that it is the first deterministic MIS algorithm in CONGEST to have only a polylogarithmic gap compared to its randomized counterpart when $D$ is polylogarithmic. Notice that this logarithmic complexity is the best that is known even in the LOCAL model. In \cite{PanconesiS96} it is shown that an MIS can be computed deterministically in $O(2^{\sqrt{\log n}})$ rounds via network decomposition, which is super-polylogarithmic in $n$. Moreover, the algorithm requires large messages and hence is suitable for the LOCAL model but not for CONGEST.
Focusing on deterministic algorithms in CONGEST, the only known non-trivial solution is to use any $(\Delta+1)$-coloring algorithm running in $O(\Delta + \log^* n)$ rounds (for example \cite{barenboim2009distributed,Barenboim15}) to obtain the same complexity for deterministic MIS in CONGEST (notice that there are faster coloring algorithms, but the reduction has to pay for the number of colors anyhow). Our $O(D\log^2 n)$-round MIS algorithm is therefore unique in its parameters.

\paragraph{Multiplicative Spanners:} We further exemplify our techniques in order to derandomize the Baswana-Sen algorithm for constructing a $(2k-1)$-spanner. Their algorithm runs in $O(k^2)$ rounds and produces a $(2k-1)$-spanner with $O(kn^{1+1/k})$ edges. In a nutshell, the algorithm starts with a clustering defined by all singletons and proceeds with $k$ iterations, in each of which the clusters get sampled with probability $n^{1/k}$ and each node joins a neighboring sampled cluster or adds edges to unsampled clusters.

We need to make several technical modifications of our tools for this to work. The key technical difficulty is that we cannot have a single target function. This arises from the very nature of spanners, in that a small-stretch spanner always exists, but the delicate part is to balance between the stretch and the number of edges. This means that a single function which takes care of having a good stretch alone will simply result in taking all the edges into the spanner, as this gives the smallest stretch. We overcome this challenge by defining two types of bad events which the algorithm tries to avoid simultaneously. One is that too many clusters get sampled, and the other is that too many nodes add too many edges to the spanner in this iteration. The careful balance between the two promises that we can indeed get the desired stretch and almost the same bound on the number of edges.

Additional changes we handle are that when we reduce the independence, we cannot go all the way down to pairwise independence and we need settle for $d$-wise independence, where $d=\Theta(\log n)$. Further, we can improve the iterative procedure to handle chunks of $\log n$ random bits, and evaluate them in parallel by assigning a different leader to each possible assignment for them. A careful analysis gives a logarithmic overhead compared to the original Baswana-Sen algorithm, but we also save a factor of $k$ since the congested clique allows us to save the $k$ rounds needed in an iteration of Baswana-Sen for communicating with the center of the cluster. This gives the following.

\newcommand{\SpannerTheorem}
{
There is a deterministic $(2k-1)$-spanner algorithm for the congested clique model that completes in $O(k\log n)$ rounds and produces a spanner with $O(kn^{1+1/k}\log n)$ edges.
}
\begin{theorem}
\label{theorem:Spanner}
\SpannerTheorem
\end{theorem}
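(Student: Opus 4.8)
The plan is to apply the derandomization template behind Theorems~\ref{theorem:MIS}--\ref{theorem:MIScongest} to the Baswana--Sen algorithm, with the modifications flagged above. Recall that Baswana--Sen runs in $k$ phases, maintaining clusterings $\mathcal{R}_0,\mathcal{R}_1,\dots$ with $\mathcal{R}_0$ the singletons; in phase $i$ each cluster of $\mathcal{R}_{i-1}$ is ``sampled'' independently with probability $p=n^{-1/k}$, an active vertex in a sampled cluster stays in it, an active vertex adjacent to a sampled cluster joins one and adds a connecting edge to the spanner, and every remaining active vertex adds one spanner edge to each adjacent cluster of $\mathcal{R}_{i-1}$ and becomes inactive (the last phase uses $p=0$). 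The stretch $2k-1$ is \emph{purely structural}: after phase $i$ each cluster carries a tree of radius $\le i$ inside the spanner, and any non-spanner edge is spanned by a detour through at most two cluster centers; this holds for \emph{every} outcome of the sampling, so the derandomized algorithm inherits it verbatim. Hence the entire task is to fix the sampling deterministically while keeping the total number of spanner edges at $O(kn^{1+1/k}\log n)$.

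First I would cut down the randomness. Pairwise independence is too weak here: what phase $i$ must control is, for every vertex $v$, the event that \emph{all} of $v$'s adjacent clusters are left unsampled, and forbidding this for vertices of large cluster-degree requires a high-moment concentration bound. So phase $i$ is run with a $d$-wise independent sampling of the clusters of $\mathcal{R}_{i-1}$ with marginals $p$, where $d=\Theta(\log n)$; this lives in a sample space of size $\poly(n)^d=2^{O(\log^2 n)}$ and is generated from an $O(\log^2 n)$-bit seed. The heart of the argument is an existence statement proved by a union bound over two \emph{bad events} for phase $i$: $B_1^i$, that more than $p\,|\mathcal{R}_{i-1}|$ clusters --- up to an additive $\Theta(\log n)$ --- get sampled; and $B_2^i$, that some active vertex with at least $c\,n^{1/k}\log n$ adjacent clusters in $\mathcal{R}_{i-1}$ is left unsampled by all of them, for a suitable constant $c$. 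Using tail bounds for sums of $d$-wise independent indicators one gets $\Pr[B_1^i]+\Pr[B_2^i]<1$. This two-event formulation is exactly what a single target misses: any function that only penalizes stretch, or only the current phase's edges, is minimized by sampling \emph{all} clusters, which makes no progress and ultimately forces $\Theta(m)$ edges. Instead, $\neg B_1^i$ shrinks the clustering at the right rate --- telescoping the slack gives $|\mathcal{R}_{k-1}|=O(n^{1/k}\log n)$, so the final phase adds only $O(n^{1+1/k}\log n)$ edges --- while $\neg B_2^i$ caps every inactive vertex at $O(n^{1/k}\log n)$ new edges; together they bound the edges added in a phase by $O(n^{1+1/k}\log n)$, hence $O(kn^{1+1/k}\log n)$ overall, and the tree edges contribute $O(kn)$ more.

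To realize this deterministically within the round budget I would, in each phase, run the method of conditional expectations on the $O(\log^2 n)$-bit seed with the pessimistic estimator $\phi^i=\phi_1^i+\phi_2^i$, where $\phi_1^i,\phi_2^i$ are the natural moment/union-bound upper estimates of $\Pr[B_1^i\mid\text{seed prefix}]$ and $\Pr[B_2^i\mid\text{seed prefix}]$: each is a sum of per-cluster (resp.\ per-vertex) contributions, is non-increasing as more seed bits are fixed, and satisfies $\phi^i(\emptyset)<1$. Evaluating such an estimator only requires aggregating $O(\log n)$-bit per-cluster/per-vertex summaries at a leader, which the congested clique does in $O(1)$ rounds via Lenzen's routing~\cite{Lenzen13}. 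To avoid spending $\Theta(\log^2 n)$ rounds per phase by fixing the seed one bit at a time, I would fix it in chunks of $\log n$ bits and evaluate the $2^{\log n}=n$ candidate extensions of a chunk in parallel, with one dedicated leader per candidate; a chunk then costs $O(1)$ rounds and a phase $O(\log n)$ rounds. All remaining Baswana--Sen bookkeeping of a phase --- propagating cluster identities, discovering adjacent clusters, installing the chosen edges --- moves at most $O(n^2\log n)$ bits in total and is likewise done in $O(1)$ congested-clique rounds; this is precisely where we shave the $\Theta(k)$-per-phase cost that Baswana--Sen pays in CONGEST for talking to cluster centers. Summing over the $k$ phases gives $O(k\log n)$ rounds.

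The step I expect to be the main obstacle is the joint design of $\phi_1^i$ and $\phi_2^i$. They must simultaneously (i) be genuine upper bounds on the conditional bad-event probabilities \emph{under only $d$-wise independence}, which is what forces $d=\Theta(\log n)$ rather than a Chernoff-type argument; (ii) be monotone under fixing seed bits and sum to strictly less than $1$ at the root, which pins down the constants in the thresholds of $B_1^i$ and $B_2^i$ and ties them to $d$ and $p$; and (iii) decompose into $O(\log n)$-bit local quantities so the parallel leaders can recompute them within the per-chunk budget. Getting all three at once --- and verifying that the telescoped bound $|\mathcal{R}_{k-1}|=O(n^{1/k}\log n)$ really does keep the final phase at $O(n^{1+1/k}\log n)$ edges --- is the delicate part; the stretch guarantee and the round accounting are then routine.
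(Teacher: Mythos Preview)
Your plan matches the paper's almost point for point: $d$-wise independence with $d=\Theta(\log n)$, two bad events per phase (too many clusters sampled; some vertex would contribute too many edges), a union-bound pessimistic estimator, conditional expectations over the $O(\log^2 n)$-bit seed processed in $\log n$-bit chunks with one leader per candidate chunk value, and the observation that cluster-internal communication collapses to $O(1)$ rounds in the clique. Two differences of detail, neither fatal: you use the unordered Baswana--Sen variant (join a sampled neighbouring cluster if one exists, else add one edge per adjacent cluster), for which your $B_2^i$ is indeed the right event; the paper keeps the weighted, ordered variant, so its bad event is ``the $t$ lightest adjacent clusters are all unsampled'' and is bounded via the Benjamini--Gurel-Gurevich--Peled estimate $M(t,d,p)$ on the AND of $d$-wise independent events. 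And the paper does not build moment-based estimators at all---it computes $\Pr[X_A=1\mid\text{prefix}]$ and each $\Pr[X_v=1\mid\text{prefix}]$ \emph{exactly} by enumerating all seed completions consistent with the prefix, which is free local computation in this model.

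There is, however, one genuine gap. Your threshold for $B_1^i$---an additive $\Theta(\log n)$ over $p\,|\mathcal{R}_{i-1}|$---cannot yield $\Pr[B_1^i]<1/2$. When $\mu_i:=p\,|\mathcal{R}_{i-1}|$ is large (for instance $\mu_1\approx n^{1-1/k}$), the number of sampled clusters has standard deviation $\Theta(\sqrt{\mu_i})\gg\log n$ even under full independence, so $\Pr[X>\mu_i+c\log n]$ is close to $1/2$; no $d$-wise tail bound rescues this. The paper instead uses a \emph{multiplicative} slack $\epsilon_i=1/(k-i)$ together with a global buffer $\xi=\Theta(\log n)$ carried through every phase: the bad event is $|\mathcal{C}_i|\ge\xi\,\alpha_i\,n^{1-i/k}$ with $\alpha_i=\prod_{j\le i}(1+1/(k-j))$, and precisely this choice lands the $d$-wise Chernoff bound of Schmidt--Siegel--Srinivasan in its first (usable) regime with $d=2\log 2n$. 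Telescoping then gives $|\mathcal{C}_{k-1}|=O(kn^{1/k}\log n)$ rather than your $O(n^{1/k}\log n)$, but the final edge count is still $O(kn^{1+1/k}\log n)$. A smaller point: your estimator need not be ``non-increasing as more seed bits are fixed''---that is generally false and not what conditional expectations gives you---it only needs the averaging property $\min_b\phi(Y_i,b)\le\phi(Y_i)$, which holds whenever $\phi$ is the conditional expectation of a fixed random variable.
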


As in the MIS algorithm, the above algorithm works also in the broadcast congested clique model, albeit here we lose the ability to parallelize over many leaders and thus we pay another logarithmic factor in the number of rounds, resulting in $O(k\log^2 n)$ rounds.

\subsection{Related Work}
\paragraph{Distributed computation of MIS.} The complexity of finding a maximal independent set is a central problem in distributed computing and hence has been extensively studied. The $O(\log n)$-round randomized algorithms date back to 1986, and were given by Luby~\cite{luby1986simple}, Alon et al.~\cite{alon1986fast} and Israeli and Itai~\cite{IsraelI86}. A recent breakthrough by Ghaffari~\cite{ghaffari2016improved} obtained a randomized algorithm in $O(\log\Delta)+2^{O(\sqrt{\log \log n})}$ rounds. The best deterministic algorithm is by Panconesi and Srinivasan~\cite{panconesi1992improved}, and completes in $2^{O(\sqrt{\log n})}$ rounds.
On the lower bound side, Linial~\cite{linial1992locality} gave an $\Omega(\log^* n)$ lower bounds for $3$-coloring the ring, which also applies to finding an MIS. Kuhn et al.~\cite{KuhnMW16} gave lower bounds of $\Omega(\sqrt{\log n/\log\log n})$ and $\Omega(\sqrt{\log \Delta/\log\log \Delta})$ for finding an MIS.

Barenboim and Elkin~\cite{barenboim2013monograph} provide a thorough tour on coloring algorithms (though there are some additional recent results). An excellent survey on local problems is given by Suomela~\cite{Suomela13}.

\paragraph{Distributed constructions of spanners.}
The construction of spanners in the distribute setting has been studied extensively
both in the randomized and deterministic setting \cite{derbel2006fast,derbel2007deterministic,DerbelGPV08, derbel2009local,pettie2010distributed}.
A randomized construction was given by Baswana and Sen in \cite{baswana2007simple}. They show that their well-known centralized algorithm can be implemented in the distributed setting even when using small message size. In particular, they show that a $(2k-1)$ spanner with an expected number of $O(n^{1+1/k})$ edges can be constructed in $O(k^2)$ rounds in the CONGEST model. Derandomization of similar randomized algorithms has been addressed mainly in the \emph{centralized} setting \cite{RodittyTZ05}. We emphasize that we need entirely different techniques to derandomize the Baswana-Sen algorithm compared with the centralized derandomization of  \cite{RodittyTZ05}.

The existing \emph{deterministic} distributed algorithms for spanner are not based on derandomization of the randomized construction. They mostly use messages of \emph{unbounded} size and are mainly based on sparse partitions or network decomposition. The state of the art is due to Derbel et al~\cite{DerbelGPV08}.
They provide a \emph{tight} algorithm for constructing $(2k-1)$-spanners with optimal stretch, size and construction time of $k$ rounds. This was complemented by a matching lower bound, showing that any (even randomized) distributed algorithm requires $k$ rounds in expectation.  Much less efficient deterministic algorithms are known for the CONGEST model. The best bounds for constructing a $(2k-1)$-spanner are $O(n^{1-1/k})$ rounds due to \cite{derbel2010sublinear}.

\paragraph{Algorithms in the congested clique.}
The congested clique model was first addressed in Lotker et al.~\cite{lotker2003mst}, who raised the question of whether the global problem of constructing a minimum spanning tree (MST) can be solved faster on a communication graph with diameter $1$. Since then, the model gained much attention, with results about its computational power given by Drucker et al.~\cite{DruckerKO13}, faster MST algorithms by Hegeman et al.~\cite{HegemanPPSS15} and Ghaffari and Parter~\cite{Ghaffari-Parter-MST}, distance computation by Nanongkai~\cite{Nanongkai14,HenzingerKN16} and Holzer and Pinsker~\cite{HolzerP15}, subgraph detection by Dolev et al.~\cite{DolevLP12}, algebraic computations by Censor-Hillel et al.~\cite{Censor-HillelKK15}, and routing and load balancing by Lenzen~\cite{Lenzen13}, Lenzen and Wattenhoffer~\cite{LenzenW11}, and Patt-Shamir and Teplitsky~\cite{Patt-ShamirT11}. Local problems were addressed by Hegeman et al~\cite{HegemanPS14} who study ruling sets. Connections to the MapReduce model is given by Hegeman and Pemmaraju~\cite{HegemanP14}.

\paragraph{Derandomization in the parallel setting.}
Derandomization of local algorithms has attracted much attention in the \emph{parallel} setting \cite{alon1986fast,IsraelI86,motwani1989probabilistic,pantziou1989fast,KarpWigderson, goldberg1989new,berger1991simulating, han1996fast,kliemann2008parallel,chandrasekaran2013deterministic}. Luby~\cite{luby1993removing}  showed that his MIS algorithm (and more) can be derandomized in the PRAM model using $O(m)$ machines and $O(\log^3 n \log \log n)$ time. In fact, this much simpler algorithm can also be executed on the congested clique model, resulting in an $O(\log^4 n)$ running time.
Similar variants of derandomization for MIS, maximal matching and $(\Delta+1)$-coloring were presented in \cite{alon1986fast, IsraelI86}.  Berger and Rompel \cite{berger1991simulating} developed a general framework for removing randomness from RNC algorithms when polylogarithmic independence is sufficient.
The parallel setting bears some similarity to the all-to-all communication model but
the barriers in these two models are different mainly because the complexity measure in the parallel setting is the computation time while in our setting local computation is for free. This raises the possibility of obtaining much better results in the congested clique model compared to what is known in the parallel setting.

\section{Preliminaries and Notation}
\label{section:preliminaries}
Our derandomization approach consists of first reducing the independence between the coin flips of the nodes. Then, we find some target function we wish to maintain during each iteration of the derandomized algorithm. Finally, we find a pessimistic estimator for the target function and apply the method of conditional expectations to get a deterministic algorithm. Below we elaborate upon the above ingredients.

\paragraph{$d$-wise independent random variables.}
In the algorithms we derandomize in the paper, a node $v \in V$ flips coins with probability $p$ of being heads. As we show, it is enough to assume only $d$-wise independence between the coin flips of nodes. We show how to use a randomness \emph{seed} of only $t=d \max \set{\log n, \log 1/p}$ bits to generate a coin flip for each $v\in V$, such that the coin flips are $d$-wise independent.

We first need the notion of $d$-wise independent hash functions as presented in \cite{Vadhan12}.
\begin{definition}[{\cite[Definition 3.31]{Vadhan12}}]
	\label{def: d-wise independent}
	For	$N,M,d \in \mathbb{N} $ such that $d \leq N$, a family of functions $\Hcal = \set{h : [N] \rightarrow
		[M]}$ is $d$-wise independent if for all distinct $x_1,x_2,...,x_d \in [N],$ the
	random variables $H(x_1),...,H(x_d)$ are independent and uniformly distributed
	in $[M]$ when $H$ is chosen randomly from $\Hcal$.
\end{definition}

In \cite{Vadhan12} an explicit construction of $\Hcal$ is presented, with parameters as stated in the next Lemma.

\begin{lemma}[{\cite[Corollary 3.34]{Vadhan12}}]
	\label{lem: d-wise independent}
	For every $\gamma,\beta,d \in \mathbb{N},$ there is a family of $d$-wise independent functions $\mathcal{H}_{\gamma,\beta} = \set{h : \set{0,1}^\gamma \rightarrow \set{0,1}^\beta}$ such that choosing a random function from $\mathcal{H}_{\gamma,\beta}$ takes $d \cdot \max \set{\gamma,\beta}$ random bits, and evaluating
	a function from $\mathcal{H}_{\gamma,\beta}$ takes time $poly(\gamma,\beta,d)$.
\end{lemma}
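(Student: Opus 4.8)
The plan is to realize $\Hcal_{\gamma,\beta}$ by the classical polynomial construction over a binary finite field. Set $\ell=\max\{\gamma,\beta\}$ and work in $\mathbb{F}_q$ with $q=2^{\ell}$, fixed once and for all by choosing a degree-$\ell$ irreducible polynomial over $\mathbb{F}_2$ and identifying $\mathbb{F}_q$ with $\{0,1\}^{\ell}$ via the induced $\mathbb{F}_2$-basis (so ``uniform on $\mathbb{F}_q$'' and ``uniform on $\{0,1\}^{\ell}$'' coincide). Since $\gamma\le\ell$, embed the domain $\{0,1\}^{\gamma}$ into $\mathbb{F}_q$ by zero-padding; since $\beta\le\ell$, fix the coordinate projection $\pi:\{0,1\}^{\ell}\to\{0,1\}^{\beta}$ onto the first $\beta$ bits. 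A member of $\Hcal_{\gamma,\beta}$ is named by a vector $a=(a_0,\dots,a_{d-1})\in\mathbb{F}_q^{\,d}$ and acts by $h_a(x)=\pi\!\left(\sum_{i=0}^{d-1}a_i\,x^{i}\right)$, the sum computed in $\mathbb{F}_q$ with $x$ the embedded input. Choosing $h$ uniformly from the family is the same as choosing $a$ uniformly, which uses exactly $d\cdot\ell=d\cdot\max\{\gamma,\beta\}$ random bits, as claimed.

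Next I would verify $d$-wise independence directly from Definition~\ref{def: d-wise independent}. Fix distinct $x_1,\dots,x_d\in\{0,1\}^{\gamma}$; their images in $\mathbb{F}_q$ are still distinct (this uses $q\ge d$, which holds since $d\le 2^{\gamma}\le q$; if instead $d>2^{\gamma}$ the condition is vacuous). The evaluation map $a\mapsto\big(p_a(x_1),\dots,p_a(x_d)\big)$, where $p_a$ is the degree-$<d$ polynomial with coefficient vector $a$, is $\mathbb{F}_q$-linear with matrix the $d\times d$ Vandermonde matrix in $x_1,\dots,x_d$, which is invertible precisely because the $x_j$ are distinct. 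Hence, when $a$ is uniform on $\mathbb{F}_q^{\,d}$, the tuple $\big(p_a(x_1),\dots,p_a(x_d)\big)$ is uniform on $\mathbb{F}_q^{\,d}$, i.e.\ the $p_a(x_j)$ are independent and uniform on $\mathbb{F}_q$. Applying $\pi$ coordinatewise preserves independence, and $\pi$ pushes the uniform distribution on $\{0,1\}^{\ell}$ to the uniform distribution on $\{0,1\}^{\beta}$ because every fiber $\pi^{-1}(y)$ has exactly $2^{\ell-\beta}$ elements. Therefore $\big(h(x_1),\dots,h(x_d)\big)$ is uniform on $\big(\{0,1\}^{\beta}\big)^{d}$, which is exactly the required property.

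For the complexity, constructing $\mathbb{F}_q$ reduces to finding a degree-$\ell$ irreducible polynomial over $\mathbb{F}_2$, which is computable in time $\poly(\ell)$ by a standard deterministic algorithm; thereafter each addition and multiplication in $\mathbb{F}_q$ costs $\poly(\ell)$. Evaluating $h(x)$ is one embedding, one Horner-rule evaluation of a degree-$<d$ polynomial ($O(d)$ field operations), and one projection, for a total of $\poly(\ell,d)=\poly(\gamma,\beta,d)$ time.

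The step that needs genuine care rather than being automatic is bridging the gap between the ``native'' alphabet size $q$ of the construction and the requested sizes $2^{\gamma}$ and $2^{\beta}$: one must confirm that shrinking the domain (embedding a smaller cube) cannot destroy $d$-wise independence, which holds because distinct inputs stay distinct, and that truncating the range to $\beta<\ell$ bits also preserves it, which holds by the equal-fiber property of $\pi$. The only other technical point is making precise that the irreducible polynomial representing $\mathbb{F}_{2^{\ell}}$ is computable within the stated time bound; this is standard but should be invoked explicitly.
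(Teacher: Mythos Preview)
Your proof is correct and is the standard polynomial-over-a-finite-field construction. Note, however, that the paper does not actually prove this lemma: it is stated as a citation of \cite[Corollary~3.34]{Vadhan12} and used as a black box, so there is no ``paper's own proof'' to compare against. What you have written is essentially the argument behind that corollary in Vadhan's text, so in that sense your approach matches the cited source.
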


Let us now consider some node $v \in V$ which needs to flip a coin with probability $p$ that is $d$-wise independent with respect to the coin flips of other nodes. Using Lemma~\ref{lem: d-wise independent} with parameters $\gamma=\ceil{\log n}$ and $\beta=\ceil{\log {1/p}}$, we can construct $\Hcal$ such that every function $h\in \Hcal$ maps the ID of a node to the result of its coin flip. Using only $t$ random bits we can flip $d$-wise independent biased coins with probability $p$ for all nodes in $v$.

We define $\ybar$ to be a vector of $t$ random coins. Note we can also look at $\ybar$ as a vector of length $t / \log n$ where each entry takes values in $[\log n]$. We use the latter when dealing with $\ybar$. From $\ybar$ each node $v$ can generate its random coin toss by accessing the corresponding $h \in \Hcal$ and checking whether $h(ID(v)) = 0$. From Definition~\ref{def: d-wise independent} it holds that $Pr[h(ID(v)) = 0] = 1/p$, as needed.

\paragraph{The method of conditional expectations.}
Next, we consider the method of conditional expectations. Let $\phi: A^\ell \rightarrow \mathbb{R}$, and let $X = (X_1,...,X_\ell)$ be a vector of random variables taking values in $A$. If $E[\phi(X)] \geq \alpha$ then there is an assignment of values $Z=(z_1,..., z_\ell)$ such that $\phi(Z) \geq \alpha$. We describe how to \emph{find} the vector $Z$.
We first note that from the law of total expectation it holds that
$E[\phi(X)] = \sum_{a\in A} E[\phi(X) \mid X_1 = a]Pr[X_1=a]$, and therefore for at least some $a \in A$ it holds that $E[\phi(X) \mid X_1 = a] \geq \alpha$. We set this value to be $z_1$. We then repeat this process for the rest of the values in $X$, which results in the vector $Z$. In order for this method to work we need it to be possible to \emph{compute} the conditional expectation of $\phi(X)$.
We now wish to use the method of conditional expectations after reducing the number of random bits used by the algorithm. Let us denote by $\rhobar$ the original vector of random bits used by the algorithm. Taking $\ybar$ as before to be the seed vector for $\rhobar$, we have that $\rhobar$ is a function of $\ybar$.
We need to be able to compute $E[\phi(\rhobar(\ybar)) \mid y[1]=a_1, \dots, y[i]=a_i]$ for all possible values of $i$ and $a_j, j\leq i$.

Computing the conditional expectations for $\phi$ might be expensive. For this reason we use a \emph{pessimistic estimator}. A pessimistic estimator of $\phi$ is a function $\phi': A^\ell \rightarrow \mathbb{R}$ such that that for all values of $i$ and $a_j,j\leq i$ it holds that $E[\phi(\rhobar(\ybar)) \mid y_1=b_1, \dots, y_i=b_i] \geq E[\phi'(\rhobar(\ybar)) \mid y_1=b_1, \dots, y_i=b_i]$. If $\phi'$ is a pessimistic estimator of $\phi$, then we can use the method of conditional expectations on $\phi'$ and obtain $z_1,\dots,z_n$, such that $\phi(z_1,\dots, z_n) \geq \phi'(z_1,\dots, z_n) \geq \alpha$.


\paragraph{Lenzen's routing algorithm.}
One important building block for algorithms for the congested clique model is Lenzen's routing algorithm~\cite{Lenzen13}. This procedure guarantees that if there is a component of an algorithm in which each node needs to send at most $O(n\log n)$ bits and receive at most $O(n \log n)$ bits then $O(1)$ rounds are sufficient. This corresponds to sending and receiving $O(n)$ pieces of information of size $O(\log n)$ for each node. Intuitively, this is easy when each piece of information of a node has a distinct destination, by a direct message. The power of Lenzen's routing is that the source-destination partition does not have to be uniform, and that, in fact, it can be not predetermined.

\section{Deterministic MIS}
\label{section:MIS}
We consider the following modification of the randomized algorithm of Ghaffari \cite{ghaffari2016improved}, where the constants are slightly changed. The algorithm of Ghaffari consists of two parts. The first part (shown to have a good local complexity) consists of $O(\log \Delta)$ phases, each with $O(1)$ rounds. 
After this first part, it is shown that sufficiently many nodes are removed from the graph. The MIS for what remains is computed in the second part deterministically in time $O(2^{\sqrt{\log\log n}})$.\\ 

\begin{mdframed}[hidealllines=false,backgroundcolor=gray!25]
\textbf{The modification to the first part of Ghaffari's MIS Algorithm}\\
Set $p_0(v)=1/4$.
$$
p_{t+1}(v)=
\begin{cases}
1/2\cdot p_t(v), \mbox{~~~if~} d_t(v)\geq 1/2\\
\min\{2p_t(v),1/4\}, \mbox{~~~if~} d_t(v)< 1/2,
\end{cases}
$$
where $d_t(v)=\sum_{u \in N(v)}p_t(u)$ is the \emph{effective degree} of node $v$ in phase $t$.  In each phase $t$, the node $v$ gets marked with probability $p_t(v)$ and if none of its neighbors is marked, $v$ joins the MIS and gets removed along with its neighbors.
\end{mdframed}

\subsection{An $O(\log \Delta)$ round randomized MIS algorithm in the congested clique}
\label{subsec:MISlogDelta}
We begin by observing that in the congested clique, what remains after $O(\log \Delta)$ phases of Ghaffari's algorithm can be solved in $O(1)$ rounds. This provides an improved \emph{randomized} runtime compared to~\cite{ghaffari2016improved}, and specifically, has no dependence on $n$.

The algorithm consists of two parts. In the first part, we run Ghaffari's algorithm for $O(\log \Delta)$ phases. We emphasize that this works with both Ghaffari's algorithm and with our modified Ghaffari's algorithm, since the values of the constants do not affect the running time and correctness of the randomized first part of the algorithm.

Then, in the second part, a leader collects all surviving edges and solves the remaining MIS deterministically on that subgraph.
We show that the total number of edges incident to these nodes is $O(n)$ w.h.p., and hence using the deterministic routing algorithm of Lenzen~\cite{Lenzen13}, the second part can be completed in $O(1)$ rounds w.h.p.

\begin{lemma}[{~\cite[Theorem 3.1]{ghaffari2016improved}}]
\label{lem:ghaffariremove}
For every $\epsilon \in (0,1)$, there exists a constant $c'$, such that for each node $v$, the probability that $v$ has not made its decision within the first $c' \cdot (\log \Delta+\log 1/\epsilon)$ phases is at most $\epsilon$.
\end{lemma}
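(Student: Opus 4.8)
The plan is to re-run the ``golden phases'' analysis of Ghaffari~\cite{ghaffari2016improved} and to check that replacing his cap $\hat p=1/2$ by $\hat p=1/4$ leaves every inequality intact (indeed, with more slack). Call a phase $t$ \emph{golden} for $v$ if either (type~1) $p_t(v)=1/4$ and $d_t(v)<1/2$, or (type~2) $d_t(v)\ge 1/2$ and at least a constant fraction of the effective degree $d_t(v)$ is contributed by neighbours $u$ whose own effective degree $d_t(u)$ is below an absolute constant. The argument then has two parts: a \emph{probabilistic} part, showing that conditioned on any history in which $v$ is still undecided and phase $t$ is golden, $v$ becomes decided in phase $t$ with probability at least some absolute constant $\delta>0$; and a \emph{combinatorial} part, showing that among the first $\beta:=c'(\log\Delta+\log 1/\epsilon)$ phases of any fixed node $v$ at least $\Omega(\beta)$ are golden. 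Combining the two by conditioning on $v$ surviving up to each successive golden phase gives $\Pr[v\text{ undecided after }\beta\text{ phases}]\le (1-\delta)^{\Omega(\beta)}$, which is at most $\epsilon$ once $c'$ is a large enough constant.

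For the probabilistic part: in a type-1 golden phase $v$ is marked with probability $p_t(v)=1/4$ and, since every $p_t(u)\le 1/4\le 1/2$ and $1-x\ge 4^{-x}$ for $x\in[0,1/2]$, none of its neighbours is marked with probability $\prod_{u\in N(v)}(1-p_t(u))\ge 4^{-\sum_{u}p_t(u)}=4^{-d_t(v)}>4^{-1/2}=1/2$, so $v$ joins the MIS with probability $>1/8$. In a type-2 golden phase one argues, exactly as in~\cite{ghaffari2016improved}, that with constant probability some neighbour $u$ of $v$ with bounded effective degree is marked and has no marked neighbour, whence $u$ joins the MIS and $v$ is removed. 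Both estimates use only $\hat p\le 1/2$, so they survive the change of constants; this is precisely the removal-probability guarantee that the paper needs to re-establish later under pairwise independence, and the modified constants are chosen exactly so that it has a positive constant margin.

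The combinatorial part is the crux. I would split the first $\beta$ phases of $v$ into the set $S_{<}$ of phases with $d_t(v)<1/2$ and the set $S_{\ge}$ of phases with $d_t(v)\ge 1/2$. Viewing $\ell_t:=-\log_2(4p_t(v))\ge 0$ as a walk that starts at $0$ (since $p_0(v)=1/4$), increases by $1$ on each $S_{\ge}$ phase, and decreases by $1$ but is reflected at $0$ on each $S_{<}$ phase, a one-line count shows that the number of decreases that are ``wasted'' at $0$ equals the number of $S_{<}$ phases with $p_t(v)=1/4$ and is at least $|S_{<}|-|S_{\ge}|$; every such phase is type-1 golden. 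For the phases in $S_{\ge}$, a potential/charging argument on the effective degrees --- a phase in $S_{\ge}$ that is not type-2 golden has almost all of $d_t(v)$ supplied by neighbours that themselves halve, so $d_{t+1}(v)$ drops by a constant factor --- bounds the number of phases in $S_{\ge}$ that are \emph{not} type-2 golden by $O(\log\Delta)$, using $d_0(v)\le\Delta/4$; this is exactly where the additive $\log\Delta$ in the statement comes from. Putting the two bounds together: either $|S_{\ge}|$ is a small fraction of $\beta$ and then $|S_{<}|-|S_{\ge}|=\Omega(\beta)$ type-1 golden phases appear, or $|S_{\ge}|=\Omega(\beta)$ and then $|S_{\ge}|-O(\log\Delta)=\Omega(\beta)$ type-2 golden phases appear once $c'$ dominates the hidden constant in $O(\log\Delta)$. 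Since this combinatorial argument does not depend on the value of $\hat p$, it transfers verbatim from~\cite{ghaffari2016improved}. I expect the main obstacle in a from-scratch proof to be exactly this last step --- turning the effective-degree potential into a genuine one-step contraction on non-golden high-degree phases and bounding their total number --- whereas for the present paper's purposes the only genuinely new point to verify is the one in the preceding paragraph, namely that the removal probability $\delta$ stays a positive constant with $\hat p=1/4$.
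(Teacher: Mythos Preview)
The paper does not prove this lemma at all: it is imported verbatim as \cite[Theorem~3.1]{ghaffari2016improved}, and the surrounding text simply remarks that the change of cap from $1/2$ to $1/4$ does not affect it (``the values of the constants do not affect the running time and correctness of the randomized first part''). So there is no in-paper proof to compare your proposal against.

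That said, your reconstruction of Ghaffari's argument is faithful. The split into a probabilistic part (constant removal probability in a golden phase) and a combinatorial part ($\Omega(\beta)$ golden phases among the first $\beta$), the reflected-walk count for type-1 phases, and the effective-degree potential for type-2 phases are exactly the structure of the original proof. Your type-1 computation is correct; note that the paper re-derives the same $1/8$ bound in Lemma~\ref{lem:gonepairwise} under the weaker hypothesis of pairwise independence, via inclusion--exclusion rather than your $1-x\ge 4^{-x}$ product bound. Your thresholds for type-2 do not quite match Definition~\ref{dec:golden} (the paper uses $d_t(v)>1/4$, not $d_t(v)\ge 1/2$), but this is immaterial for the present lemma. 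Finally, your closing caveat is apt: the paper, in the paragraph after Lemma~\ref{lem:manygold}, says explicitly that the combinatorial count ``does not assume independence and is only affected by the update rule of the probabilities,'' which is precisely your point that this part transfers verbatim and is where the real work in a self-contained proof would lie.
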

Since the decision whether to join the MIS or to be removed by having a neighbor in the MIS depends only on the $2$-neighborhood of a node, decisions made by nodes that are at least $4$ hops from each other are independent. We make use of the following variant of Chernoff's bound.

\begin{fact}[A Chernoff bound with bounded dependency~\cite{Pemmaraju01}]
\label{fc:chernoffbounded}
Let $X_1, \ldots, X_n$ denote a set of binary random variables with bounded dependency $\widehat{d}$, and let $\mu=\mathbb{E}(\sum_i X_i)$. Then:
$$\Pr[\sum_i X_i \geq (1+\delta)\mu]\leq O(\widehat{d}) \cdot e^{-\Theta(\delta^2 \mu/\widehat{d})}.$$ 
\end{fact}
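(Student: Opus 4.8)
The plan is to prove Fact~\ref{fc:chernoffbounded} by reducing the dependent sum to a constant number of sums of \emph{independent} binary variables and invoking the ordinary Chernoff--Hoeffding bound on each. Let $\Gamma$ be the dependency graph witnessing $\widehat d$: a graph on $[n]$ of maximum degree at most $\widehat d$ such that each $X_i$ is independent of the family $\{X_j:\{i,j\}\notin E(\Gamma)\}$. Since $\Gamma$ has maximum degree at most $\widehat d$, a greedy coloring partitions $[n]$ into $\chi\le\widehat d+1$ color classes $C_1,\dots,C_\chi$, each an independent set of $\Gamma$. I would first establish the standard ``peeling'' fact that the variables inside one class are \emph{mutually} (not merely pairwise) independent: ordering a class as $i_1,\dots,i_m$, each $X_{i_k}$ is independent of $\sigma(X_{i_1},\dots,X_{i_{k-1}})$ because $i_1,\dots,i_{k-1}\notin N_\Gamma[i_k]$, so induction on $k$ gives mutual independence of $\{X_i : i\in C_r\}$.

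Given this, I would control the moment generating function of $S:=\sum_i X_i$ \emph{across} classes via H\"older's inequality (a direct product factorization being unavailable there). Writing $Z_r=\prod_{i\in C_r}e^{\lambda X_i}$ and applying the generalized H\"older inequality with all $\chi$ exponents equal to $\chi$,
\[\mathbb E\!\left[e^{\lambda S}\right]=\mathbb E\!\left[\prod_{r=1}^{\chi}Z_r\right]\le\prod_{r=1}^{\chi}\bigl(\mathbb E[Z_r^{\chi}]\bigr)^{1/\chi}=\prod_{r=1}^{\chi}\Bigl(\prod_{i\in C_r}\mathbb E\bigl[e^{\chi\lambda X_i}\bigr]\Bigr)^{1/\chi}=\prod_{i=1}^{n}\bigl(\mathbb E[e^{\chi\lambda X_i}]\bigr)^{1/\chi},\]
the middle equality using within-class mutual independence. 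With $\mathbb E[e^{sX_i}]=1+p_i(e^{s}-1)\le e^{p_i(e^{s}-1)}$, $p_i:=\mathbb E[X_i]$, $\sum_i p_i=\mu$, this is at most $\exp\!\bigl((e^{\chi\lambda}-1)\mu/\chi\bigr)$ --- exactly the MGF of a sum of independent Bernoullis of total mean $\mu/\chi$. A Markov step $\Pr[S\ge(1+\delta)\mu]\le e^{-\lambda(1+\delta)\mu}\mathbb E[e^{\lambda S}]$, the substitution $\phi=\chi\lambda$, and the optimal choice $\phi=\ln(1+\delta)$ then give the exponent $-\tfrac{\mu}{\chi}\bigl((1+\delta)\ln(1+\delta)-\delta\bigr)\le-\Theta\!\bigl(\min\{\delta,\delta^2\}\,\mu/\widehat d\bigr)$, since $\chi=O(\widehat d)$. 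This already yields the claim, the $O(\widehat d)$ prefactor in the statement being merely a further weakening.

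An alternative, slightly lossier route --- probably closer to the cited argument --- avoids MGFs: setting $\mu_r:=\mathbb E[\sum_{i\in C_r}X_i]$ so $\sum_r\mu_r=\mu$, note that $S\ge(1+\delta)\mu$ forces $\sum_{i\in C_r}X_i\ge(1+\delta)\mu_r$ for some $r$; one then applies the ordinary Chernoff bound inside that (independent) class and union-bounds over the $\chi\le\widehat d+1$ classes. The main obstacle is precisely this combine step: a naive union bound over classes with very unequal means $\mu_r$ does \emph{not} collapse into a single clean $e^{-\Theta(\delta^2\mu/\widehat d)}$ term (one risks a $\widehat d^2$, or an additive $\widehat d$, rather than $\widehat d$ in the exponent), which is exactly why the H\"older/MGF packaging above is the more robust way to recover the stated form. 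The only other delicate point is fixing the exact meaning of ``bounded dependency $\widehat d$'' --- that each $X_i$ is jointly independent of \emph{all} its non-neighbors in $\Gamma$ --- since the peeling step, hence within-class mutual independence, relies on that strong form rather than on mere pairwise independence.
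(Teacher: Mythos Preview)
The paper does not prove this statement at all: it is quoted as a \emph{Fact} from~\cite{Pemmaraju01} and used as a black box in the proof of Corollary~\ref{cor:shater}. So there is no ``paper's own proof'' to compare against; your proposal stands on its own.

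Your H\"older/MGF argument is correct. The coloring of the dependency graph into $\chi\le\widehat d+1$ classes, the peeling argument for mutual independence within a class (which, as you rightly flag, requires the strong form of the dependency-graph condition), the generalized H\"older step with all exponents equal to $\chi$, and the subsequent optimization are all sound and yield $\exp\bigl(-\tfrac{\mu}{\chi}[(1+\delta)\ln(1+\delta)-\delta]\bigr)$. This is in fact \emph{sharper} than the statement as written: you get no $O(\widehat d)$ prefactor, and the exponent is $\Theta(\min\{\delta,\delta^2\}\,\mu/\widehat d)$ rather than the unqualified $\delta^2$ form (the latter is only literally correct for bounded $\delta$).

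The approach most commonly attributed to the cited source is indeed your ``alternative route'': color, observe by pigeonhole that $S\ge(1+\delta)\mu$ forces some class to overshoot, Chernoff within that class, and union-bound over classes --- which is exactly where the $O(\widehat d)$ prefactor comes from. Your diagnosis of its weakness is accurate: without care in how one pigeonholes the excess (e.g., splitting the additive excess $\delta\mu$ evenly across classes rather than the multiplicative factor), the per-class exponent can degrade when the $\mu_r$ are very unequal. Your H\"older packaging sidesteps this entirely and is the cleaner way to obtain the stated bound.
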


\begin{corollary}
\label{cor:shater}
After $O(\log \Delta)$ phases, the number of edges incident to nodes that have not made their decisions is $O(n)$.
\end{corollary}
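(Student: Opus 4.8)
The plan is to instantiate Lemma~\ref{lem:ghaffariremove} with a failure probability $\epsilon$ that is inverse-polynomial in $\Delta$, run the first part for the resulting $O(\log\Delta)$ phases, and then argue that the set $R$ of still-undecided nodes is so small that even if every one of them had the maximum degree $\Delta$, the edges incident to $R$ would number only $O(n)$. Concretely, fix a large constant $c$ and set $\epsilon=\Delta^{-c}$; since $\log(1/\epsilon)=c\log\Delta$, Lemma~\ref{lem:ghaffariremove} guarantees that after $T=c'(1+c)\log\Delta=O(\log\Delta)$ phases every individual node is still undecided with probability at most $\Delta^{-c}$. Hence $\E[\,|R|\,]=\sum_v\Pr[v\in R]\le n\Delta^{-c}$, so the expected number of edges incident to $R$ is at most $\Delta\cdot\E[\,|R|\,]\le n\Delta^{1-c}=O(n)$. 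The whole issue is to upgrade this expectation bound to a high-probability statement; note that it suffices to show $|R|=O(n/\Delta)$ with high probability, because every node has degree at most $\Delta$.

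For the concentration step I would use Fact~\ref{fc:chernoffbounded}. As observed just before the corollary, whether a node ends up undecided is a function of its $2$-neighborhood, so the indicator variables $\{\mathbbm{1}[v\in R]\}_{v\in V}$ have bounded dependency $\widehat d=\Delta^{O(1)}$ (nodes more than $4$ hops apart behave independently). Applying Fact~\ref{fc:chernoffbounded} to $|R|=\sum_v\mathbbm{1}[v\in R]$, whose mean is $\mu\le n\Delta^{-c}$, with a deviation target of $\Theta(n/\Delta)$, yields that with probability $1-O(\widehat d)\cdot e^{-\Theta(n/\poly(\Delta))}$ we have $|R|=O(n/\Delta)$, and hence the number of edges incident to $R$ is $O(n)$. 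This failure probability is $1/\poly(n)$ as long as $\Delta$ is at most roughly $n^{1/5}$ (the precise exponent depending on the exact radius of dependency), i.e.\ exactly in the regime where the bound has enough slack.

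It remains to handle the complementary regime of large $\Delta$, say $\Delta\ge n^{\Omega(1)}$. There $\log n=O(\log\Delta)$, so I can instead plug $\epsilon=n^{-c}$ directly into Lemma~\ref{lem:ghaffariremove} while still keeping the number of phases at $c'(\log\Delta+\log 1/\epsilon)=O(\log\Delta)$. A union bound over the $n$ nodes then shows that with high probability $R=\emptyset$, so the number of incident edges is trivially $0=O(n)$. Choosing the two thresholds so that the two regimes overlap (which they do for $n$ large once $c$ is a sufficiently large constant) completes the proof.

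The step I expect to be the main obstacle is the concentration argument: the undecided-node indicators are not independent, and one must be careful both about the radius of dependency that actually comes out of the $O(\log\Delta)$-phase process and about the fact that the bounded-dependency Chernoff bound only buys high probability once $\Delta$ is polynomially smaller than $n$ — it is precisely this limitation that forces splitting into the small-$\Delta$ and large-$\Delta$ cases rather than giving a single uniform argument.
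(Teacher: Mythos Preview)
Your proposal is correct and follows essentially the same route as the paper: instantiate Lemma~\ref{lem:ghaffariremove} with $\epsilon=\Delta^{-c}$, bound $\E[|R|]\le n\Delta^{-c}$, and then apply the bounded-dependency Chernoff bound (Fact~\ref{fc:chernoffbounded}) with $\widehat d=\Delta^{4}$ to conclude $|R|=O(n/\Delta)$ w.h.p.

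The one place you diverge is the case split, and it is unnecessary: your estimate $e^{-\Theta(n/\poly(\Delta))}$ for the tail is too pessimistic. With target $(1+\delta)\mu=n/\Delta$ and $\mu=n\Delta^{-c}$ one has $\delta=\Theta(\Delta^{c-1})$, so
\[
\frac{\delta^{2}\mu}{\widehat d}\;=\;\Theta\!\left(\frac{\Delta^{2(c-1)}\cdot n\Delta^{-c}}{\Delta^{4}}\right)\;=\;\Theta\!\left(n\,\Delta^{\,c-6}\right),
\]
which is $\Omega(n)$ uniformly in $\Delta$ once $c$ is chosen large enough. The paper therefore gets $\Pr[|R|\ge n/\Delta]\le O(\Delta^{4})\,e^{-\Theta(n)}$ in a single stroke, with no need for your separate large-$\Delta$ union-bound argument (which is itself fine, just superfluous).
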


\begin{proof}
By Lemma~\ref{lem:ghaffariremove}, there is a constant $c$, such that the probability that a node has not made its decision after $O(\log \Delta)$ phases is at most $1/\Delta^c$. Letting $X$ denote the  random variable of the number of nodes that have not made their decision after $O(\log \Delta)$ phases gives that $\mu=\E[X]=n/\Delta^{c}$.

Each node is mutually independent from all nodes except in its $4$-neighborhood, in which there are up to $\Delta^4$ nodes. By using Fact~\ref{fc:chernoffbounded} with $\delta=\Theta(\Delta^{c-1})$ and $\widehat{d}=\Delta^{4}$, and taking $c$ to be large enough, we get that
$$\Pr[X \geq n/\Delta]\leq O(\Delta^{4}) \cdot e^{-\Theta(n)}.$$
Hence, w.h.p., there are $O(n/\Delta)$ remaining nodes and therefore the number of their incident edges is at most $O(n)$.
\end{proof}

We conclude:
\begin{theorem-repeat}{theorem:randcongestedclique}
\TheoremMISlogDelta
\end{theorem-repeat}

Note that the proof of Corollary~\ref{cor:shater} cannot be extended to the case of pairwise independence, which is needed for derandomization, since the concentration guarantees are rather weak. For this, we need to develop in the following section new machinery.

\subsection{Derandomizing the modified MIS algorithm}
We first turn to consider the modified Ghaffari's algorithm when the random decisions made by the nodes are only \emph{pairwise} independent.

\subsubsection{Ghaffari's algorithm with pairwise independence}
We review the main terminology and notation from~\cite{ghaffari2016improved}, up to our modification of constants.
Recall that $d_t(v)=\sum_{u \in N(v)}p_t(u)$. A node $v$ is called \emph{light} if $d_t(v)<1/4$.

\paragraph{Golden phases and golden nodes.}
We define two types of \emph{golden phases} for a node $v$. This definition is a modification of the corresponding definitions in~\cite{ghaffari2016improved}.

\begin{definition}
\label{dec:golden}
\item{Type-1 golden phase:} $p_t(v)=1/4$ and $d_t(v)\leq 1/2$;
\item{Type-2 golden phase:} $d_t(v)>1/4$ and at least $d_t(v)/10$ of it arises from light nodes.
\end{definition}

A node $v$ is called \emph{golden} in phase $t$, if phase $t$ is a golden phase for $v$ (of either type).
Intuitively, a node $v$ that is golden in phase $t$ is shown to have a constant probability of being removed.
Specifically, in a golden phase of type-1, $v$ has a constant probability to join the MIS and in a golden phase of type-2, there is a constant probability that $v$ has a neighbor that joins the MIS and hence $v$ is removed.

We now prove the analogue of Lemma 3.3 in \cite{ghaffari2016improved} for the setting in which the coin flips made by the nodes are not completely independent but are only \emph{pairwise independent}. We show that a golden node for phase $t$ is still removed with constant probability even under this weaker bounded independence guarantee.

\begin{lemma}[Type-1 golden nodes with pairwise independence]
\label{lem:gonepairwise}
Consider the modified Ghaffari's algorithm with pairwise independent coin flips 
If $t$ is a type-1 golden phase for a node $v$, then $v$ joins the MIS in phase $t$ with probability at least $1/8$. 
\end{lemma}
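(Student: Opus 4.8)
Let $M_w$ denote the event that node $w$ gets marked in phase $t$, so $\Pr[M_w]=p_t(w)$. Node $v$ joins the MIS in phase $t$ exactly when $M_v$ occurs and no neighbor of $v$ is marked, i.e.\ on the event $M_v \cap \bigcap_{u\in N(v)}\overline{M_u}$. The plan is to estimate this probability as
\[
\Pr\!\left[M_v \cap \bigcap_{u\in N(v)}\overline{M_u}\right] \;=\; \Pr[M_v]\cdot \Pr\!\left[\,\bigcap_{u\in N(v)}\overline{M_u}\ \middle|\ M_v\right].
\]
The first factor is immediate: since $t$ is a type-1 golden phase for $v$, we have $p_t(v)=1/4$, hence $\Pr[M_v]=1/4$.

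For the second factor, first I would invoke pairwise independence: for every neighbor $u\in N(v)$, the coin flips of $u$ and $v$ are independent, so $\Pr[M_u\mid M_v]=\Pr[M_u]=p_t(u)$. Then a union bound — which needs no independence at all among the neighbors themselves — gives
\[
\Pr\!\left[\,\bigcup_{u\in N(v)}M_u \ \middle|\ M_v\right]\ \le\ \sum_{u\in N(v)}\Pr[M_u\mid M_v]\ =\ \sum_{u\in N(v)}p_t(u)\ =\ d_t(v)\ \le\ 1/2,
\]
using the type-1 golden condition $d_t(v)\le 1/2$. Therefore $\Pr\big[\bigcap_{u\in N(v)}\overline{M_u}\mid M_v\big]\ge 1/2$, and multiplying the two factors yields $\Pr[v\text{ joins the MIS in phase }t]\ge \tfrac14\cdot\tfrac12=\tfrac18$.

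There is no real obstacle here; the proof is short precisely because of the modified constants. The only point worth flagging is \emph{why} pairwise independence suffices: the union bound step over $N(v)$ is distribution-free, and pairwise independence is exactly what is needed to strip the conditioning on $M_v$ from each individual $\Pr[M_u\mid M_v]$. This is also where the choice of constants $p_0(v)=1/4$ and the type-1 threshold $d_t(v)\le 1/2$ enters in an essential way: it keeps $1-d_t(v)$ bounded below by a positive constant using only the union bound, whereas the original constants in~\cite{ghaffari2016improved} would leave a vacuous bound under pairwise independence.
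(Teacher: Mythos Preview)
Your proof is correct and is essentially the same as the paper's: the paper writes the bound directly as $\Pr[M_v]-\sum_{u\in N(v)}\Pr[M_v\cap M_u]=p_t(v)(1-d_t(v))\ge 1/8$, which is algebraically identical to your conditioning-then-union-bound computation, with pairwise independence used at exactly the same step (to factor $\Pr[M_v\cap M_u]=p_t(v)p_t(u)$, equivalently $\Pr[M_u\mid M_v]=p_t(u)$). Your closing remark about why the modified constants are essential is a nice addition not spelled out in the paper's proof itself.
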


\begin{proof}
In each type-1 golden phase, $v$ gets marked with probability $p_t(v)=1/4$. By the inclusion-exclusion principle, the probability that $v$ is marked but none of its neighbors is marked in phase $t$ can be bounded by:
\begin{eqnarray*}
\Pr[v \mbox{~is the only marked node in~} N(v)] &\geq& p_t(v)-\sum_{u \in N(v)}p_t(v)\cdot p_t(u)\\
&\geq&
p_t(v)(1-d_t(v))\geq 1/4(1-1/2)=1/8.~
\end{eqnarray*}
Hence, $v$ joins the MIS with probability at least $1/8$.
\end{proof}

We next show that also the golden nodes of type-2 are removed with constant probability assuming only pairwise independence.
\begin{lemma}[Type-2 golden nodes with pairwise independence]
\label{lem:gtwopairwise}
Consider the modified Ghaffari's algorithm with pairwise independent coin flips. If $t$ is a type-2 golden phase for a node $v$ then $v$ is removed in phase $t$ with probability at least $\alpha=1/160$.
\end{lemma}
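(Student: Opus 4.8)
The plan is to exhibit a small family of light neighbors of $v$ such that, with constant probability, at least one member of this family joins the MIS; since any neighbor of $v$ that joins the MIS causes $v$ to be removed in phase $t$, this lower-bounds the probability that $v$ is removed. The whole point of working with a carefully chosen \emph{subset} of the light neighbors, rather than all of them, is that with only pairwise independence the error term in an inclusion–exclusion lower bound scales quadratically in the total marking mass we include, so that mass must be kept $O(1)$ while still being $\Omega(1)$.

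Concretely, since $t$ is a type-2 golden phase for $v$, the set of light neighbors $L=\{u\in N(v): d_t(u)<1/4\}$ satisfies $\sum_{u\in L}p_t(u)\ge d_t(v)/10>1/40$. The update rule gives $p_t(w)\le 1/4$ for every node $w$ and every phase (by induction: $p_0=1/4$, and each step either halves $p_t$ or caps it at $1/4$). Hence, adding light neighbors one at a time until the running sum first reaches $1/40$, I obtain a subset $L'\subseteq L$ with
\[
\beta:=\sum_{u\in L'}p_t(u)\in\Bigl[\tfrac{1}{40},\ \tfrac{11}{40}\Bigr),
\]
the upper bound because the last element added has $p_t(u)\le 1/4$. (If a single light neighbor already has $p_t(u)\ge 1/40$ this collapses to a singleton $L'$, which is fine.)

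For $u\in L'$ let $A_u$ be the event that $u$ is marked and no neighbor of $u$ is marked, i.e. $u$ joins the MIS. Exactly as in the proof of Lemma~\ref{lem:gonepairwise}, inclusion–exclusion and pairwise independence give
\[
\Pr[A_u]\ \ge\ p_t(u)-\sum_{w\in N(u)}p_t(u)\,p_t(w)\ =\ p_t(u)\bigl(1-d_t(u)\bigr)\ \ge\ \tfrac34\,p_t(u),
\]
using $d_t(u)<1/4$. For the pairwise correction, $A_u\cap A_{u'}\subseteq\{u\text{ marked}\}\cap\{u'\text{ marked}\}$, so pairwise independence gives $\Pr[A_u\cap A_{u'}]\le p_t(u)p_t(u')$ (indeed $0$ when $u\sim u'$, though this is not needed). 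A Bonferroni lower bound then yields
\[
\Pr[v\text{ is removed in phase }t]\ \ge\ \Pr\Bigl[\bigcup_{u\in L'}A_u\Bigr]\ \ge\ \sum_{u\in L'}\Pr[A_u]-\sum_{\{u,u'\}\subseteq L'}\Pr[A_u\cap A_{u'}]\ \ge\ \tfrac34\beta-\beta^2 .
\]
Since $\beta\mapsto\tfrac34\beta-\beta^2$ is increasing on $[1/40,11/40]\subseteq(-\infty,3/8]$, the right-hand side is minimized at $\beta=1/40$, giving $\Pr[v\text{ removed}]\ge \tfrac{3}{160}-\tfrac{1}{1600}=\tfrac{29}{1600}\ge \tfrac{1}{160}=\alpha$.

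The main obstacle is the second step: one has to notice that passing to all light neighbors at once is fatal (the $\beta^2$ error term could dominate), and instead greedily trim to a subset whose total marking probability is pinned between two constants; establishing that $p_t(w)\le 1/4$ always is what lets the trimming keep $\beta$ below a constant. Once $L'$ is in hand, the remaining steps are the same inclusion–exclusion manipulations already used for type-1 golden nodes together with one elementary single-variable estimate.
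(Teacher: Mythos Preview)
Your proof is correct and follows essentially the same approach as the paper: both arguments greedily extract a subset of light neighbors whose total marking mass lies between two fixed constants, and then lower-bound the probability that some member of this subset joins the MIS using only pairwise correlations. The one technical difference is in how the union is handled: the paper defines \emph{mutually exclusive} events $\Upsilon_t(v,u)$ (``$u$ is marked, no neighbor of $u$ is marked, \emph{and} $u$ is the unique marked node of the chosen subset''), so that the removal probability is simply $\sum_u \Pr[\Upsilon_t(v,u)]$, whereas you keep the overlapping events $A_u$ and apply a Bonferroni lower bound to $\Pr[\bigcup A_u]$. Both routes yield the required $1/160$; the paper's mutually-exclusive formulation has the advantage that it is exactly the quantity $\psi_{v,t}$ used later as the pessimistic estimator in the derandomization, so their proof simultaneously establishes $\mathbb{E}[\psi_{v,t}]\ge\alpha$ without a separate argument.
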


\begin{proof}
For node $v$, fix a subset of light neighbors $W(v)\subseteq N(v)$ satisfying that $\sum_{u \in W(v)}p_t(u) \in [1/40, 1/4]$. Such a set exists because by the definition of a type-2 golden phase, the sum of probabilities of light neighbors of $v$ is at least $1/40$, and every single probability is at most $1/4$ by the constants taken in the algorithm (this probability either halves in each phase, or is bounded from above by $1/4$).

For $u \in W(v)$, let $\Upsilon_t(v,u)$ denote the indicator variable of the event that in phase $t$ the following happens: $u$ gets marked, \emph{none} of $u$'s neighbors get marked and $u$ is the \emph{only} neighbor of $v$ in $W(v)$ that got marked. For node $u,v$ and phase $t$, let $m_{u,v,t}$ be the indicator random variable that both $u$ and $v$ get marked in phase $t$
Due to the pairwise independence, we have:
$\Pr[m_{u,v,t}=1]=p_t(u) \cdot p_{t}(v)$. Hence, the probability of the event indicated by $\Upsilon_t(v,u)$ can be bounded by:
\begin{eqnarray*}
\Pr[\Upsilon_t(v,u)=1]& \geq & p_{t}(u)-\sum_{w \in N(u)}\Pr[m_{u,w,t}=1]-\sum_{u' \in W\setminus\{u\}}\Pr[m_{u,u',t}=1]
\\&=&
p_{t}(u)-\sum_{w \in N(u)}p_t(u)p_t(w)-\sum_{u' \in W\setminus\{u\}}p_t(u)p_t(u')
\\ &\geq&
p_{t}(u)\left( 1- \sum_{w \in N(u)}p_t(w) -\sum_{u' \in W\setminus\{u\}}p_t(u') \right)
\\ &\geq&
p_{t}(u)\left( 1- d_t(u)-1/4 \right)\geq p_t(u)\left( 1-1/2-1/4 \right)= 1/4 \cdot p_t(u).
\end{eqnarray*}
Since the events indicated by $\Upsilon_t(v,u)$ are mutually exclusive for different $u,u' \in N(v)$, it holds that
the probability that $v$ gets removed is at least
\begin{eqnarray*}
\Pr[v\mbox{~is removed in phase~} t]\geq \sum_{u \in W(v)}\Pr[\Upsilon_t(v,u)=1]\geq 1/4 \cdot \sum_{u \in W(v)} p_t(u)\geq 1/160~.
\end{eqnarray*}
\end{proof}

Finally, we claim the analogue of Lemma 3.2 in~\cite{ghaffari2016improved}.
\begin{lemma}
\label{lem:manygold}
Consider the modified Ghaffari's algorithm with pairwise independent randomness and $\epsilon=1/\Delta^c$. For a large enough $c'$, for every $v$, at the end of $c' \cdot \log \Delta$ phases, either $v$ has joined the MIS, or it has a neighbor in the MIS, or at least one of its golden phase counts reached $c \cdot \log \Delta$. 
\end{lemma}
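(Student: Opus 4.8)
The plan is to read \cref{lem:manygold} as a purely \emph{deterministic} statement about the evolution of the effective degrees: the marking probabilities $p_t(\cdot)$, and hence the effective degrees $d_t(v)=\sum_{u\in N(v)}p_t(u)$, are determined by the algorithm's update rule alone, and the (pairwise independent) coin flips enter only through which marked nodes actually join the MIS and get removed. That removal is exactly what the first two alternatives in the statement, together with \cref{lem:gonepairwise,lem:gtwopairwise}, are there to absorb. So I fix a node $v$, assume it has not been removed during the first $c'\log\Delta$ phases (otherwise the first or second alternative holds), and show that one of its two golden-phase counters reaches $c\log\Delta$ once $c'$ is a sufficiently large constant multiple of $c$.

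The first step is the type-1 bookkeeping, via a reflected walk on $p_t(v)$. Call a phase $t$ \emph{halving} if $d_t(v)\geq 1/2$ and \emph{doubling} otherwise. Since $p_0(v)=1/4$ and the update rule keeps $p_t(v)$ in $\{1/4,1/8,1/16,\dots\}$, write $p_t(v)=\frac14\cdot 2^{-k_t}$ with $k_t\in\mathbb{N}_{\geq 0}$; a halving phase sets $k_{t+1}=k_t+1$ and a doubling phase sets $k_{t+1}=\max\{k_t-1,0\}$. As $k_{c'\log\Delta}\geq 0$, the number of halving phases is at least the number of doubling phases in which $k_t\geq 1$, so the number of doubling phases with $p_t(v)=1/4$ is at least $(\#\text{doubling})-(\#\text{halving})$. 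Each of these is a type-1 golden phase, because there $p_t(v)=1/4$ and $d_t(v)<1/2$. Hence, if $v$'s type-1 counter never reaches $c\log\Delta$, then $(\#\text{doubling})-(\#\text{halving})<c\log\Delta$, which together with $(\#\text{doubling})+(\#\text{halving})=c'\log\Delta$ forces at least $\frac{c'-c}{2}\log\Delta$ halving phases.

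The second step handles type-2. Every halving phase has $d_t(v)\geq 1/2>1/4$, so it is either type-2 golden for $v$ or \emph{heavy-dominated}, meaning more than $\frac{9}{10}d_t(v)\geq\frac{9}{20}$ of $d_t(v)$ is contributed by neighbors that are not light (those $u\in N(v)$ with $d_t(u)\geq 1/4$); these two cases are disjoint and exhaustive. If $v$'s type-2 counter never reaches $c\log\Delta$, then at least $\frac{c'-3c}{2}\log\Delta$ halving phases are heavy-dominated. I would then derive a contradiction by bounding the number of heavy-dominated phases of $v$, via an amortization over the edges of $v$'s $2$-neighborhood in the spirit of Lemma~3.2 of~\cite{ghaffari2016improved}: in a heavy-dominated phase the $\frac{9}{20}$ units of marking mass must come from neighbors $u$ whose own effective degree is at least $1/4$, so each such $u$ is either pinned at the cap $p_t(u)=1/4$ or is having its probability halved, and charging this mass against a suitable potential defined on $v$'s $2$-neighborhood bounds the number of heavy-dominated phases by a quantity that is of lower order than $c'\log\Delta$, contradicting the previous paragraph once $c'$ is large enough.

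I expect this last amortization to be the main obstacle, and the one place where the modified constants must be re-examined: one has to pick the right potential on the $2$-neighborhood, show that each heavy-dominated halving phase pays for a genuine, non-recoverable decrease of it (the caps $1/4$, the halving threshold $1/2$, the lightness threshold $1/4$, and the $1/10$ fraction in the type-2 definition are exactly what must cooperate here), and verify that the resulting bound is dominated by $\frac{c'-3c}{2}\log\Delta$. Everything in the argument concerns the deterministic $p,d$-dynamics, so none of it is affected by the reduction to pairwise independence; that reduction matters only for \cref{lem:gonepairwise,lem:gtwopairwise}, which convert golden phases into removal.
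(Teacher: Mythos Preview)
Your central observation --- that the lemma is a purely deterministic statement about the $(p_t,d_t)$ update dynamics, so the randomness model is irrelevant --- is exactly the paper's reasoning. In fact that is the entire content of the paper's proof: it does not reprove the lemma, but simply notes that Ghaffari's original proof of his Lemma~3.2 uses only the update rule and hence applies unchanged under pairwise independence.

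Your type-1 bookkeeping via the reflected walk on $k_t$ is correct. The type-2 sketch, however, contains a concrete error at precisely the point you yourself flag as constant-sensitive. You claim that in a heavy-dominated phase each non-light neighbor $u$ (meaning $d_t(u)\ge 1/4$) is ``either pinned at the cap $p_t(u)=1/4$ or is having its probability halved.'' With the modified constants this is false: halving fires only when $d_t(u)\ge 1/2$, so a non-light $u$ with $d_t(u)\in[1/4,1/2)$ and $p_t(u)<1/4$ is \emph{doubled}. Consequently the $2$-neighborhood potential you envision need not decrease in a heavy-dominated phase, and the charging argument as you describe it does not close. Whichever route you take to bound the number of heavy-dominated phases --- and you are right that this is where all the work is --- the factor-$2$ gap between the lightness threshold ($1/4$) and the halving threshold ($1/2$) in the modified algorithm must be handled explicitly; the single mechanism you name does not do it.
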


The proof here is exactly that same as in \cite{ghaffari2016improved}. The reason is that the proof does not assume independence and is only affected by the update rule of the probabilities. Note that similarly to \cite{ghaffari2016improved}, we had to define type-2 with a threshold on $d_t(v)$ which is factor $2$ smaller than that of type-1.
As a result, the following holds in the pairwise independence setting:
\begin{lemma}
Within $O(\log \Delta)$ phases, every node remains with probability at most $1/\Delta$.
\end{lemma}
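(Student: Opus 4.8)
The plan is to derive this from Lemma~\ref{lem:manygold} together with the two golden-node removal bounds, Lemmas~\ref{lem:gonepairwise} and~\ref{lem:gtwopairwise}. Fix a node $v$ and set $\alpha=1/160$, which lower bounds the probability that $v$ is removed in a golden phase of either type (it is exactly the bound of Lemma~\ref{lem:gtwopairwise}, and smaller than the $1/8$ bound of Lemma~\ref{lem:gonepairwise}). By Lemma~\ref{lem:manygold}, if $v$ is still undecided after $c'\log\Delta$ phases then at least one of its two golden-phase counters has reached $c\log\Delta$; hence $v$ has gone through, and survived, at least $c\log\Delta$ golden phases that are all of a single type. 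So it suffices to show that the probability of surviving $c\log\Delta$ golden phases of a fixed type is at most $\tfrac1{2}\Delta^{-1}$, and then union bound over the two types.

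First I would set up a conditional version of the golden-node lemmas. For each phase $t$, the current surviving graph, its neighborhoods, and the marking probabilities $p_t(\cdot)$ are deterministic functions of the random choices made in phases $1,\dots,t-1$; in particular the event ``$t$ is a golden phase for $v$'' (of a given type) and all the quantities $d_t(u),p_t(u)$ appearing in the proofs of Lemmas~\ref{lem:gonepairwise} and~\ref{lem:gtwopairwise} are measurable with respect to this history, whereas the coin flips performed in phase $t$ are drawn afresh, pairwise independently, and independently of the history. Consequently the arguments of those two lemmas go through verbatim after conditioning on the history, giving: on the event that $v$ is alive at the start of phase $t$ and $t$ is golden for $v$, the conditional probability that $v$ is removed in phase $t$ given the history is at least $\alpha$.

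Next I would chain these bounds. Let $T_1<T_2<\cdots$ be the phases golden for $v$ of the relevant fixed type, each a stopping time set to $\infty$ once $v$ is removed or once no further such golden phase occurs. The previous paragraph says $\Pr[\,v\text{ survives phase }T_i \mid \mathcal{F}_{T_i-1},\ T_i<\infty\,]\le 1-\alpha$, where $\mathcal{F}_{T_i-1}$ is the history before phase $T_i$, so sequentially conditioning on phases $T_1,\dots,T_{c\log\Delta}$ (a standard optional-stopping argument, needed because the number and location of golden phases is itself random) yields $\Pr[\,v\text{ survives its first }c\log\Delta\text{ golden phases of this type}\,]\le(1-\alpha)^{c\log\Delta}$. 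Combining with the first paragraph and a union bound over the two types, the probability that $v$ is still undecided after $c'\log\Delta$ phases is at most $2(1-\alpha)^{c\log\Delta}$, which is below $1/\Delta$ once the constant $c$ is chosen large enough; fixing this $c$ determines $c'$ via Lemma~\ref{lem:manygold}, and $c'\log\Delta=O(\log\Delta)$.

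The step I expect to be the crux is the conditional version of the golden-node lemmas, i.e.\ checking that conditioning on the outcomes of earlier phases does not spoil the removal bounds. The point to nail down is the measurability claim: ``being golden'' and all effective degrees at phase $t$ depend only on which nodes have been removed so far (hence on earlier randomness), so that after conditioning, the sole remaining randomness in phase $t$ is the pairwise-independent batch of fresh coin flips --- precisely the setting of Lemmas~\ref{lem:gonepairwise} and~\ref{lem:gtwopairwise}. A secondary subtlety, already accounted for above, is that the number of golden phases a node experiences is random, which is why the chaining must be phrased as a stopping-time argument rather than a naive product of $c\log\Delta$ independent factors.
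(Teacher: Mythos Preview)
Your proposal is correct and matches the paper's intended approach: the paper states this lemma without proof, merely noting it follows ``as a result'' of Lemma~\ref{lem:manygold} together with the pairwise-independent removal bounds of Lemmas~\ref{lem:gonepairwise} and~\ref{lem:gtwopairwise}, which is precisely the combination you spell out. Your treatment is in fact more careful than the paper's, since you make explicit the conditional-on-history version of the golden-node lemmas and the stopping-time chaining needed to multiply the $(1-\alpha)$ factors, details the paper leaves implicit.
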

Recall that the proof of Lemma~\ref{cor:shater} cannot be extended to pairwise independence since the concentration guarantees are rather weak. Our algorithm will use pairwise independence but with some crucial modifications required in order to guarantee that after $O(\log \Delta)$ phases, only $O(n/\Delta)$ nodes remain undecided.
%

\subsubsection{Deterministic $O(\log \Delta \log n)$-Round Algorithm in the Congested Clique}
We now turn to consider the derandomization procedure.
We show the following:
\begin{theorem-repeat}{theorem:MIS}
\TheoremMIS
\end{theorem-repeat}

\paragraph{The challenge.}
Consider phase $t$ in the modified Ghaffari's algorithm and let $V_t$ 
be the set of golden nodes in this phase. Our goal is to select additional nodes into the MIS so that at least a constant fraction of the golden nodes are removed.
Let $v_1, \ldots, v_{n'}$ be the nodes that are \emph{not} removed in phase $t$.
For each node, define corresponding random variables $x_1, \ldots, x_{n'}$ indicating whether $v_i$ is marked in phase $t$ or not. Let $X_i=(x_1=b_1, \ldots, x_{i}=b_{i})$ define a partial assignment for the nodes $v_1, \ldots, v_{i}$ (i.e., whether or not they are in the MIS in phase $t$). Let $X_0=\emptyset$ denote the case where none of the decisions is fixed.

For a golden node $v$, let $r_{v,t}$ be the random variable indicating whether $v$ gets removed in phase $t$, and let $R_t$ be the random variable of the number of removed golden nodes.
By linearity of expectation, $\mathbb{E}(R_t)=\sum_{v} \mathbb{E}(r_{v,t})$ is the \emph{expected} number of removed golden nodes in phase $t$.
By Lemmas~\ref{lem:gonepairwise} and~\ref{lem:gtwopairwise}, there is a constant $c$ such that
$\mathbb{E}(R_t)\geq c \cdot |V_{t}|$.

Potentially, we could aim for the following: Given the partial assignment $X_i$, compute the two expectations of the number of removed golden nodes conditioned on the two possible assignments for $x_{i+1}$, $\mathbb{E}(R_t \mid~ X_i, x_{i+1}=0)$ and $\mathbb{E}(R_t \mid~ X_i, x_{i+1}=1)$, and choose $x_{i+1}$ according to the larger expectation.  

However, towards the above goal we face the following main challenges.
\begin{enumerate}
\item[(C1)]
The value of $R_t$ cannot be easily computed, since when using probabilities of neighboring nodes we might be double-counting: a node might be removed while having more than single neighbor that joins the MIS.
\item[(C2)]
The search space of size $2^n$ is too large and in particular, the conditional expectation computation consists of $n$ steps.
\item[(C3)]
Even when using pairwise independence to enjoy a $O(\log n)$-bit seed, searching a good point in a space of size $O(\poly~n)$ in a brute force manner cannot be done efficiently in the congested clique.
\item[(C4)]
Despite our proof that golden nodes are removed with constant probability even with pairwise independence, it is still not clear how to implement the second part of the MIS algorithm, because showing that only $O(n/\Delta)$ nodes survive cannot be done with pairwise independence. That is, the proof of Corollary~\ref{cor:shater} inherently needs full independence.
\end{enumerate}

Addressing (C4) requires the use of a priority-based scheme for choosing the nodes that join the MIS, which requires a novel age-based weighting approach to be added to the MIS algorithm. Next, we describe our main derandomization tools and then provide our algorithm.

\paragraph{Derandomization tools.}
We define a pessimistic estimator to the conditional expectation $\mathbb{E}(R_t ~\mid~ X_i)$, which can be computed efficiently in our model. Then, we describe how to reduce the search space using pairwise independence. In our algorithm, the nodes will apply the method of conditional expectations on the estimator in order to find a ``good" seed of length $O(\log n)$.

\paragraph{Tool 1: The pessimistic estimator function.}
Consider phase $t$ and recall that $V_t$ are the golden nodes in this phase.
Similarly to the clever approach of~\cite{luby1986simple,luby1993removing},
we define a variable $\psi_{v,t}$ that will satisfy that $r_{v,t} \geq \psi_{v,t}$. 
The idea is to account for a removed node of type-2 only in the case that it is removed because a \emph{single} one of its neighbors joins the MIS. Since this can only occur for one of its neighbors, we avoid double-counting when computing the probabilities. This allows us to cope with challenge (C1).

Let $m_{v,t}$ be the random variable indicating the event that $v$ is \emph{marked}.
Let $m_{v,u,t}$ indicate the event that both $u$ and $v$ are marked.
Define
$$
\psi_{v,t}=
\begin{cases}
m_{v,t}-\sum_{u \in N(v)}m_{v,u,t}, \mbox{~if~} v \mbox{~is of type-1}.\\ \\
\sum_{u \in N(v)}\left(m_{v,u,t}-\sum_{w \in N(u)}m_{u,w,t}-\sum_{w' \in N(v)\setminus \{u\}}m_{u,w',t}\right), \mbox{~if~} v \mbox{~is of type-2}.
\end{cases}
$$
Denoting $\Psi_t=\sum_{v \in V_t}\psi_{v,t}$ gives that $\Psi_t$ is a lower bound on the number of removed golden nodes, i.e., $\Psi_t \leq R_t$.
For a partial assignment $X_i=(x_1=b_1, \ldots, x_{i}=b_i)$ indicating which of the nodes are in the MIS, we have
\begin{equation}
\label{eq:psi-golden}
\mathbb{E}(\psi_{v,t} ~\mid~ X_i)=
\begin{cases}
\Pr[m_{v,t}=1 ~\mid ~X_i]-\sum_{u \in N(v)}\Pr[m_{v,u,t}~\mid ~X_i], \mbox{~~if~} v \mbox{~is of type-1}.\\ \\
\sum_{u \in N(v)}(\Pr[m_{v,u,t}=1 ~\mid~ X_i]-\\
\sum_{w \in N(u)}\Pr[m_{u,w,t}=1 ~\mid~ X_i]-\\
\sum_{w' \in W(v)\setminus \{u\}}\Pr[m_{u,w',t}=1 ~\mid~ X_i)], \mbox{~if~} v \mbox{~is of type-2},
\end{cases}
\end{equation}
where $W(v) \subseteq N(v)$ is a subset of $v$'s neighbors satisfying that
$\sum_{w \in W(v)} p_t(w) \in [1/40,1/4]$ (as used in the proof of  Lemma~\ref{lem:gtwopairwise}).
By Lemmas~\ref{lem:gonepairwise} and~\ref{lem:gtwopairwise}, it holds that $\mathbb{E}(\psi_{v,t})\geq \alpha$ for $v \in V_t$. Hence, we have that:
$$\mathbb{E}(r_{v,t})\geq \mathbb{E}(\psi_{v,t})\geq \alpha.$$
Since $r_{v,t}\geq \psi_{v,t}$ even upon conditioning on the partial assignment $X_i$, we get:
\begin{eqnarray*}
\mathbb{E}(R_{v,t} ~\mid~ X_i)&\geq &\mathbb{E}(\Psi_{t} ~\mid X_i) =\sum_{v \in V_t}{\E}(\psi_{v,t} ~\mid~ X_i)\geq \alpha \cdot |V_{t}|.
\end{eqnarray*}
Our algorithm will employ the method of conditional expectations on a \emph{weighted} version of $\mathbb{E}(\Psi_{t} ~\mid X_i)$, as will be discussed later.

\paragraph{Tool 2: Pairwise independence.}
We now combine the method of conditional expectations with a small search space.
We use Lemma \ref{lem: d-wise independent} with $d=2$, $\gamma=\Theta(\log n)$ and a prime number
$\beta=O(\log \Delta)$. This is because we need the marking probability, $p_t(v)$, to be $\Omega(1/\poly~\Delta)$.

Consider phase $t$. Using the explicit construction of Lemma~\ref{lem: d-wise independent}, if all nodes are given a shared random seed of length $\gamma$, they can sample a random hash function $h:\{0,1\}^\gamma \to \{0,1\}^\beta$ from $\mathcal{H}_{\gamma,\beta}$ which yields $n$ pairwise independent choices. Specifically, flipping a biased coin with probability of $p_t(v)$ can be trivially simulated using the hash value $h(ID_v)$ where $ID_v$ is a $O(\log n)$-bit ID of $v$.\footnote{Flipping a biased coin with probability $1/2^i$, is the same as getting a uniformly distributed number $y$ in $[1,b]$ and outputting $1$ if and only if $y \in [1, 2^{b-i}]$.} Since $h$ is a random function in the family, all random choices are pairwise independent and the analysis of of the golden phases goes through. This standard approach takes care of challenge (C2).

Even though using a seed of length $O(\log n)$ reduces the search space to be of polynomial size, still, exploring all possible $2^{O(\log n)}=O(n^c)$ seeds in a brute force manner is too time consuming. Instead, we employ the method of conditional expectations to find a \emph{good seed}. That is, we will consider $\mathbb{E}(\Psi_t ~\mid~ Y_i)$ where $Y_i=(y_1=b_1, \ldots, y_i=b_i)$ is a partial assignment to the \emph{seed} $Y=(y_1, \ldots, y_a)$. The crux here is that since a \emph{random} seed is good, then so is the expectation over seeds that are sampled uniformly at random. Hence, the method of conditional expectations will find a seed that is at least as good as the random selection. Specifically, we still use the pessimistic estimator of Equation (\ref{eq:psi-golden}), but we condition on the small seed $Y_i$ rather than on $X_i$. This addresses challenge (C3).

\paragraph{Tool 3: An age-based weighted adaptation.}
To handle challenge (C4), we compute the expectation of a weighted version of $\Psi_t$, which favors \emph{old} nodes where the age of a node is counted as the number of golden phases it experienced. Let $age(v)$ be the number of golden phases $v$ has till phase $t$ and recall that a golden node is removed with probability at least $\alpha$.
Define $\psi'_{v,t}= (1/(1-\alpha))^{age(v)} \cdot \psi_{v,t}$, and $\Psi'_{t} =\sum_{v \in V_t}\psi'_{v,t}$. We use the method of conditional expectations for:
\begin{equation}
\label{eq:weightedexp}
\mathbb{E}(\Psi'_{t} ~\mid Y_i) =\sum_{v \in V_t} \mathbb{E}(\psi'_{v,t} ~\mid~ Y_i)~,
\end{equation}
rather than for $\mathbb{E}(\Psi'_{t} ~\mid Y_i)$. The choice of this function will be made clear in the proof of Lemma~\ref{lem:amortized}.
%

\paragraph{Algorithm Description.}
The first part of the algorithm consists of $\Theta(\log \Delta)$ phases, where in phase $t$, we derandomize phase $t$ in the modified Ghaffari's algorithm using $O(\log n)$ deterministic rounds. In the second part, all nodes that remain undecided after the first part, send their edges to the leader using the deterministic routing algorithm of Lenzen.
The leader then solves locally and notifies the relevant nodes to join the MIS. In the analysis section, we show that after the first part, only $O(n/\Delta)$ nodes remain undecided, and hence the second part can be implemented in $O(1)$ rounds.

From now on we focus on the first part. Consider phase $t$ in the modified Ghaffari's algorithm. Note that at phase $t$, some of the nodes are already removed from the graph (either because they are part of the MIS or because they have a neighbor in the MIS). Hence, when we refer to nodes or neighboring nodes, we refer to the remaining graph induced on the undecided nodes.

Let $Y=(y_1, \ldots, y_\gamma)$ be the $\gamma$ random variables that are used to select a hash function and hence induce a deterministic algorithm. We now describe how to compute the value of $y_i$ in the seed, given that we already computed $y_1=b_1, \ldots, y_{i-1}=b_{i-1}$.  By exchanging IDs (of $\Theta(\log n)$ bits), as well as the values $p_t(v)$ and $d_t(v)$  with its neighbors, a node can check if it is a golden type-1 or type-2 node according to the conditions of Definition~\ref{dec:golden}. In addition, every node maintains a counter, $age(v)$ referred to as the age of $v$, which measures the number of golden phases it had so far.

Depending on whether the node $v$ is a golden type-1 or type-2 node,
based on Equation (\ref{eq:psi-golden}), it computes a lower bound on the conditional probability that it is removed given the partial seed assignment $Y_{i,b}=(y_1, \ldots, y_{i}=b)$ for every $b \in \{0,1\}$. These lower bound values are computed according to the proofs of Lemmas~\ref{lem:gonepairwise} and~\ref{lem:gtwopairwise}.

Specifically, a golden node $v$ of type-1, uses the IDs of its neighbors and their $p_t(u)$ values to compute the following:
$$\mathbb{E}(\psi_{v,t} ~\mid~ Y_{i,b})=\Pr[m_{v,t}=1 ~\mid~ Y_{i,b}]-\sum_{u \in N(v)}\Pr[m_{u,t}=1 ~\mid~ Y_{i,b}],$$
where $\Pr[m_{v,t}=1 ~\mid~ Y_{i,b}]$ is the conditional probability that $v$ is marked in phase $t$  (see Sec. \ref{sec:code} for full details about this computation).

For a golden node $v$ of type-2 the lower bound is computed differently. First, $v$ defines a subset of neighbors $W(v) \subseteq N(v)$, satisfying that $\sum_{w \in W(v)}p_t(w)\in [1/40,1/4]$, as in the proof of Lemma~\ref{lem:gtwopairwise}. Let $M_{t,b}(u)$ be the conditional probability on $Y_{i,b}$ that $u$ is marked but none of its neighbors are marked. Let $M_{t,b}(u,W(v))$ be the conditional probability on $Y_{i,b}$ that another node other than $u$ is marked in $W(v)$.\footnote{The term $M_{t,b}(u,W(v))$ is important as it is what prevents double counting, because the corresponding random variables defined by the neighbors of $v$ are mutually exclusive.} By exchanging the values $M_{t,b}(u)$, $v$ computes:
$$\mathbb{E}(\psi_{v,t} ~\mid~ Y_{i,b})=\sum_{u \in W(v)}\Pr[m_{u,t} =1 ~\mid~ Y_{i,b}]-M_{t,b}(u)-M_{t,b}(u,W(v)).$$

Finally, as in Equation (\ref{eq:weightedexp}), the node sends to the leader the values $\mathbb{E}(\psi'_{v,t} ~\mid~ Y_{i,b})=1/(1-\alpha)^{age(v)}\cdot \mathbb{E}(\psi_{v,t} ~\mid~ Y_{i,b})$ for $b \in \{0,1\}$.
The leader computes the sum of the $\mathbb{E}(\psi'_{v,t} ~\mid~ Y_{i,b})$ values of all golden nodes $V_t$, and declares that $y_{i}=0$ if $\sum_{v \in V_t}\mathbb{E}(\psi'_{v,t} ~\mid~ Y_{i,b})\geq \sum_{v \in V_t}\mathbb{E}(\psi'_{v,t} ~\mid~ Y_{i,b})$, and $y_{i}=1$ otherwise. This completes the description of computing the seed $Y$.

Once the nodes compute $Y$, they can simulate phase $t$ of the modified Ghaffari's algorithm. In particular, the seed $Y$ defines a hash function $h \in \mathcal{H}_{\gamma,\beta}$ and $h(ID(v))$ can be used to simulate the random choice with probability $p_t(v)$.  The nodes that got marked send a notification to neighbors and if none of their neighbors got marked as well, they join the MIS and notify their neighbors. Nodes that receive join notification from their neighbors are removed from the graph. This completes the description of the first part of the algorithm. For completeness, a pseudocode appears in Appendix~\ref{sec:MISdetailed}.

\paragraph{Analysis.}
The correctness proof of the algorithm uses a different argument than that of Ghaffari~\cite{ghaffari2016improved}. Our proof does not involve claiming that a constant fraction of the golden nodes are removed, because in order to be left with $O(n/\Delta)$ undecided nodes we have to favor removal of old nodes. The entire correctness is based upon the following lemma, which justifies the definition of the expectation given in Equation (\ref{eq:weightedexp}).

\begin{lemma}
\label{lem:amortized}
The number of undecided nodes after $\Theta(\log \Delta)$ phases is $O(n/\Delta)$ and hence the total number of edges incident to these nodes is $O(n)$.
\end{lemma}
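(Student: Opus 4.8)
The plan is to attach to each undecided node a weight that can only grow by a controlled amount when the node ages, to show that the weighted bookkeeping makes the total weight of undecided nodes non‑increasing across phases, and then to conclude that since every survivor of $\Theta(\log\Delta)$ phases must have had $\Omega(\log\Delta)$ golden phases (Lemma~\ref{lem:manygold}), its weight alone is already $\ge\Delta$, forcing at most $n/\Delta$ survivors. Concretely, for phase $t$ let $age_t(v)$ be the number of golden phases $v$ has experienced before phase $t$, set $w_t(v)=(1/(1-\alpha))^{age_t(v)}$ (the weight appearing in Equation~(\ref{eq:weightedexp})), and define the potential $\Phi_t=\sum_{v\text{ undecided in phase }t} w_t(v)$, so that $\Phi_0=n$.

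First I would establish the per‑phase guarantee $\sum_{v\in D_t} w_t(v)\ \ge\ \alpha\sum_{v\in V_t} w_t(v)$, where $V_t$ is the set of golden nodes in phase $t$ and $D_t\subseteq V_t$ is the subset actually removed in phase $t$. For the lower direction: a uniformly random seed $Y$ induces pairwise‑independent markings, so Lemmas~\ref{lem:gonepairwise} and~\ref{lem:gtwopairwise} give $\mathbb{E}_Y[\psi_{v,t}]\ge\alpha$ for every $v\in V_t$, hence $\mathbb{E}_Y[\Psi'_t]=\sum_{v\in V_t}w_t(v)\,\mathbb{E}_Y[\psi_{v,t}]\ge\alpha\sum_{v\in V_t}w_t(v)$; applying the method of conditional expectations bit by bit over the $\gamma$ bits of the seed (the leader can compute each conditional expectation from the per‑node values of Equation~(\ref{eq:psi-golden})) returns a concrete seed with $\Psi'_t\ge\alpha\sum_{v\in V_t}w_t(v)$. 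For the upper direction: once the seed is fixed, each $\psi_{v,t}$ is a determined number satisfying $\psi_{v,t}\le r_{v,t}$ pointwise (the positive contributions in the definition of $\psi_{v,t}$ are mutually exclusive — this is precisely the role of the correction terms, cf.\ the footnote on $M_{t,b}(u,W(v))$ — so $\psi_{v,t}\le 1$, and it is $\le 0$ whenever $v$ is not removed), whence $\Psi'_t=\sum_{v\in V_t}w_t(v)\psi_{v,t}\le\sum_{v\in D_t}w_t(v)$. Chaining the two inequalities yields the claim.

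Next I would show $\Phi_{t+1}\le\Phi_t$. From phase $t$ to $t{+}1$: every removed node (in particular every node of $D_t$, and any other node removed in phase $t$) disappears and its weight is subtracted; every node golden in phase $t$ that survives, i.e.\ $v\in V_t\setminus D_t$, has its age increase by one, so its weight grows by $w_t(v)\bigl(\tfrac{1}{1-\alpha}-1\bigr)=\tfrac{\alpha}{1-\alpha}w_t(v)$; every undecided node that was not golden in phase $t$ keeps its weight, since ages are monotone and never reset. Hence $\Phi_{t+1}\le\Phi_t-\sum_{v\in D_t}w_t(v)+\tfrac{\alpha}{1-\alpha}\sum_{v\in V_t\setminus D_t}w_t(v)$. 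The per‑phase guarantee rewrites as $(1-\alpha)\sum_{v\in D_t}w_t(v)\ge\alpha\sum_{v\in V_t\setminus D_t}w_t(v)$, i.e.\ $\sum_{v\in D_t}w_t(v)\ge\tfrac{\alpha}{1-\alpha}\sum_{v\in V_t\setminus D_t}w_t(v)$, so the last two terms above cancel and $\Phi_{t+1}\le\Phi_t$. Iterating, $\Phi_T\le\Phi_0=n$ for $T=\Theta(\log\Delta)$.

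Finally I would close using Lemma~\ref{lem:manygold}: fix the golden‑phase threshold $c$ large enough that $(1/(1-\alpha))^{c\log\Delta}\ge\Delta$ — possible because $\alpha=1/160$ is a constant, so it suffices to take $c\ge 1/\log(1/(1-\alpha))$ — and then take $c'$ large enough for Lemma~\ref{lem:manygold} to apply with this $c$. Then any node still undecided after $T=c'\log\Delta$ phases has $age_T(v)\ge c\log\Delta$, hence $w_T(v)\ge\Delta$. Summing, $(\text{number of undecided nodes})\cdot\Delta\le\sum_{v\text{ undecided}}w_T(v)=\Phi_T\le n$, so at most $n/\Delta$ nodes remain undecided; each has degree at most $\Delta$, so the number of edges incident to these nodes is $O(n)$. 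The part I expect to require the most care is the interface between the conditional‑expectations step and the accounting: one must make sure that once the seed is fixed, $\psi_{v,t}$ is still a bona fide pointwise lower bound on $r_{v,t}$ (so a statement about $\mathbb{E}_Y[\Psi'_t]$ becomes a statement about the weight of golden nodes that are \emph{actually} removed), and that the $\tfrac{\alpha}{1-\alpha}$‑slack created by old surviving golden nodes is exactly what the removed‑weight absorbs — any looser bookkeeping would let $\Phi_t$ drift upward and break the $n/\Delta$ bound.
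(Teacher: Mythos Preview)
Your proposal is correct and follows essentially the same approach as the paper: both arguments hinge on the age-weighted per-phase guarantee $\sum_{v\in D_t}(1/(1-\alpha))^{age(v)}\ge\alpha\sum_{v\in V_t}(1/(1-\alpha))^{age(v)}$ (obtained from Lemmas~\ref{lem:gonepairwise}--\ref{lem:gtwopairwise} via conditional expectations together with the pointwise bound $\psi_{v,t}\le r_{v,t}$), use it to keep the total age-weight bounded by $n$, and then finish with Lemma~\ref{lem:manygold}. Your potential-function packaging $\Phi_t=\sum_{v\text{ undecided}}(1/(1-\alpha))^{age_t(v)}$ with $\Phi_{t+1}\le\Phi_t$ is a clean reformulation of the paper's ``bag'' accounting (Invariants~(I1)--(I2)), but the mathematical content is the same.
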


\begin{proof}
Consider phase $t$, and let $V_t$ be the set of golden nodes in this phase. For a golden node $v \in V_t$, let $age(v)$ be the number of golden phases it had so far, excluding the current one. We refer to this quantity as the \emph{age} of the node. Hence, intuitively, a node is \emph{old} if it has experienced many golden phases.

By Lemma \ref{lem:manygold}, a node that remains undecided after $\beta \log \Delta$ phases (for a sufficiently large constant $\beta$) is of age at least $\Omega(\log \Delta)$. Our goal is to show that the number of old nodes (with $age(v)\geq \Omega(\log \Delta$)) is bounded by $n/\Delta$. At a high level, we show that each old node can \emph{blame} or \emph{charge} a distinct set of $\Delta$ nodes that are removed. Hence, there are total of at most $n/\Delta$ old nodes.

Recall that in phase $t$, given the golden nodes $V_t$, the original goal is to find an assignment to the nodes so that the number of removed nodes is at least as the expected number of the removed nodes, say, at least $\alpha \cdot |V_t|$. Our algorithm modifies this objective by \emph{favoring} the removal of \emph{older} nodes.
To do that, recall that $r_{t,v}$ is the random variable indicating the event that a golden node $v$ is removed at phase $t$, and $\psi_{v,t}$ is used to obtain a pessimistic estimator for $v$ being removed. The age of $v$ at phase $t$, $age(v)$, is the number of golden phases $v$ had till phase $t$.
The nodes compute the conditional expectation for:
$$\mathbb{E}(\Psi'_t)=\sum_{v \in V_t}\Pr[\psi_{v,t} \geq 1]\cdot (1/(1-\alpha))^{age(v)}.$$
By Lemma \ref{lem:gonepairwise} and \ref{lem:gtwopairwise}, $\Pr[\psi_{v,t}\geq 1]\geq \alpha$, hence:
$$\mathbb{E}(\Psi'_t)\geq \alpha \sum_{v \in V_t}(1/(1-\alpha))^{age(v)}.$$

For every phase $t$, let $sb_t=\sum_{v \in V_t}(1/(1-\alpha))^{age(v)}$ be the total weighted age of the nodes in $V_t$. Let $V_{t,0} \subseteq V_t$ be the set of unremoved nodes in $V_t$ and let $V_{t
,1} \subseteq V_t$ be the set of removed nodes in $V_t$ (i.e., the decided ones).
Let $sb_{t,0}=\sum_{v \in V_{t,0}}(1/(1-\alpha))^{age(v)}$ and $sb_{t,1}=\sum_{v \in V_{t,1}}(1/(1-\alpha))^{age(v)}$ be the total weighted age of the unremoved nodes and the removed nodes respectively.

The method of conditional expectations guarantees finding a \emph{good} assignment to the nodes such that
$$sb_{t,1}\geq \mathbb{E}(\Psi'_t)\geq \alpha \cdot sb_t.$$
%

We now employ an accounting method to show that each remaining node (which by Lemma~\ref{lem:manygold} is proven to be old) can charge a distinct set of $O(\Delta)$ nodes that are removed.
For the sake of the analysis, we imagine every node having a \emph{bag} containing removed nodes or fractions of removed nodes, and that the bags of all nodes are disjoint (no piece of removed node is multiply counted in the bags). Let $b_t(v)$ be the size of the bag of node $v$ at the beginning of phase $t+1$. The size of the bag corresponds to the number of removed nodes (or fractions of removed nodes) that are charged due to $v$. As mentioned, our charging is fractional in the sense that we may split a removed node $u$ into several bags $v_1, \ldots, v_k$, meaning that each $v_i$ partially blames $u$ for not being removed yet, but the total charge for a given removed node $u$ is $1$.

Recall that $\alpha$ is the bound that a golden node is removed (in the unconditional setting).
For every $t \in \{0, \ldots, \Theta(\log \Delta)\}$, we maintain the following invariants.
\begin{description}
\item{(I1)}
$b_t(v)\geq (1/(1-\alpha))^{age(v)}.$
\item{(I2)}
$\sum_{v \in V_{t,0}} (b_{t}(v)-b_{t-1}(v))=\sum_{u \in V_{t,1}} b_{t}(u)$.
\end{description}
In other words, our accounting has the property that the number of removed nodes that can be blamed for the fact that a node $v$ is not yet removed, grows exponentially with the age of $v$.

We now describe the charging process inductively and show that it maintains the invariants.
For the induction base, let $b_0(v)=1$ correspond to $v$ itself and consider the first phase in which $age(v)=0$ for every $v$. Note that  $sb_0=|V_0|=\sum_{v \in V_0} |b_0(v)|$. 

The method of conditional expectations with respect to $\Psi'_t$ guarantees that we find an assignment to the nodes such that
$sb_{0,1} \geq \alpha \cdot sb_0=\alpha \cdot |V_0|$.
The age of the unremoved nodes of $V_{0,1}$ becomes $1$ and hence to establish (I1), we need to grow their bags.

Towards this end, we empty the content of the bags of the removed nodes $V_{0,1}$ into the bags of the unremoved nodes $V_{0,0}$. In this uniform case of $t=0$, since all nodes are of the same age, it simply means that every unremoved node in $V_{0,0}$ adds into its bag an $\alpha/(1-\alpha)$ fraction of a removed node (as $sb_{0,0} \leq (1-\alpha)\cdot |V_0|$, this is doable). Hence, the size of each bag of unremoved node $v \in V_{0,0}$ becomes at least $b_1(v)\geq 1+\alpha/(1-\alpha)=1/(1-\alpha)$ and property (I1) follows. Note that since every unremoved node gets a disjoint fraction of a removed node, property (I2) follows as well. This completes the description for phase $1$.

We continue this process phase by phase, by considering phase $t$ and assuming that the invariants hold up to $t-1$.
Again, by the method of conditional expectations, we find an assignment such that
$\mathbb{E}(\Psi'_t)\geq \alpha \cdot sb_t$ and that $sb_{t,1} \geq \alpha \cdot sb_t$.
By the induction hypothesis, $b_t(v)\geq (1/(1-\alpha))^{age(v)}$ and hence
$$sb_t \leq \sum_{v \in V_t} |b_t(v)|.$$
Since the age of the unremoved nodes in $V_{t,0}$ is increased by one, their bag content should be increased. Again, this is done by emptying the content of the bags of the removed nodes $V_{t,1}$ into the bags of the unremoved nodes. Since the total amount in the bags of the removed nodes is at least $\alpha \cdot sb_t$, their content can be divided into the bags of the unremoved nodes $V_{t,0}$ such that each unit in the bag of size $(1/(1-\alpha))^{age(v)}$ is given an addition of $\alpha/(1-\alpha)$ fraction of the size of that unit, and hence
overall that new size of the bag is $(1/(1-\alpha))^{age(v)+1}$. 

Let $t^*=\Theta(\log \Delta)$. By Lemma \ref{lem:manygold}, every node that remains undecided after $t^*$ phases is of age $\Omega(\log \Delta)$. By Invariant (I1), we have that $b_{t^*}(v)\geq \Delta$ for every $v \in V_{t^*}$. 
Since the contents of $b_{t^*}(v)$ and $b_{t^*}(v')$ for every $t$ and every $v \neq v' \in V_t$ are distinct, we get that $|V_t^*|\leq n/\Delta$, and overall the total number of edges left in the graph in $O(n)$.
\end{proof}

The remaining $O(n)$ edges incident to the undecided nodes can be collected at the leader in $O(1)$ rounds using the deterministic routing algorithm of Lenzen~\cite{Lenzen13}. The leader then solves MIS for the remaining graph locally and informs the nodes. This completes the correctness of the algorithm. Theorem~\ref{theorem:MIS} follows.

\newcommand{\MISdetailed}{
Let $\mathcal{H}=\mathcal{H}_{\gamma,\beta}$ with $\gamma=\Theta(\log n)$ and $\beta=\Theta(\log \Delta)$ be given by Lemma \ref{lem: d-wise independent}.
For a hash function $h \in \mathcal{H}$ and value $p=1/2^i$ representing the probability of a node to be marked, define $m_h(v,p)=1$ if $h(ID(v)) \in [1,2^{\beta-i}]$ and $m_h(v,p)=0$ otherwise. That is, marking a node $v$ with probability $p$ is simulated deterministically by computing $m_h(v,p)$, since we output $1$ only if the value $h(ID(v))$ appears in the top $1/2^i$ fraction of the range $[1, \beta]$.
For $p=1/2^i$ and $p'=1/2^{i'}$, define
$$m_h(v,u, p,p')=
\begin{cases}
1, \mbox{~~ if ~~} h(ID(v)) \in [1,2^{\beta-i}] \mbox{~and~} h(ID(u)) \in [1, 2^{\beta-i'}].\\
0 \mbox{~~otherwise.}
\end{cases}
$$
That is $m_h(v,u, p,p')=1$ is the deterministic simulation of having both $v$ and $u$ being marked when $v,u$ are marked with probability $p,p'$.
Let $\alpha \in (0,1]$ be the constant such that every golden node is removed with probability at least $\alpha$ when using pairwise independence. Algorithm~\ref{MIS} gives the pseudocode of our deterministic MIS algorithm.

\begin{algorithm}[t]
\caption{Code for node $u$ in step $i$ of phase $t$.}
\label{MISt}
\begin{algorithmic}[1]
\Procedure{$DetMIS(t,Y_i)$} {}
\begin{flushleft}
Input: Partial graph induced on the undecided nodes and partial assignment $Y_i=(y_1=b_1, \ldots, y_i=b_i)$. \\
Output: An assignment $Y_{i+1}=(y_1=b_1, \ldots, y_i=b_i,y_{i+1}=b_{i+1})$, i.e., the goal is to extend the assignment for one more variable in the seed.
\end{flushleft}
Consider the family $\mathcal{H}_{\gamma,\beta}$ from Lemma \ref{lem: d-wise independent}, and let $\mathcal{H}(Y_i) \subseteq \mathcal{H}_{\gamma,\beta}$ be the collection of all hash functions that agree with the partial seed $Y_i$. Each function $h \in \mathcal{H}(Y_i)$ corresponds to a deterministic MIS algorithm.
\\ For every $u \in N(v) \cup \{v\}$ and $b \in \{0,1\}$, compute $m_{t,b}(u)=\sum_{h \in \mathcal{H}(Y_{i,b})}m_h(u,p_t(u))$.\\
\\ For every $u \in N(u)$, and $b \in \{0,1\}$, compute $$m_{t,b}(v,u)=\sum_{h \in \mathcal{H}(Y_{i,j})}m_h(v,u,p_t(v),p_t(u)) \mbox{~and~} M_{t,b}(v)=\sum_{u \in N(u)}m_{t,b}(v,u).$$
\\ Exchange $M_{t,0}(v),M_{t,1}(v)$ with your neighbors.
\\ If $v$ is golden node type-1:
\\ compute $\chi(v,b)=m_{t,b}(v)-M_{t,b}(v)$ for $b \in \{0,1\}$.
\\ If $v$ is golden node type-2:
\\ Define $W \subseteq N(v)$, be such that $\sum_{u \in W} p_t(u) \in [1/40,1/4]$.
\\ For every $u \in W$, and $b \in \{0,1\}$
	\\$M_{t,b}(u,W) \gets \sum_{h \in \mathcal{H}(Y_{i,j})}\sum_{w \in W \setminus \{u\}} m_h(u,w, p_t(u),p_t(w))$.
	\\$\chi(v,b) \gets \sum_{u} m_{t,b}(u)-M_{t,b}(u)- M_{t,b}(u,W)$.
If $v$ is golden:
\\  Send to the leader $(1/(1-\alpha))^{age(v)} \cdot \chi(v,b)$ for $b \in \{0,1\}$ and set $age(v)=age(v)+1$.
\\ Receive $j^*$ from the leader and set $y_{i+1}=j^*$.
\EndProcedure
\end{algorithmic}
\end{algorithm}

\begin{algorithm}[t]
\caption{Code for node $u$ in phase $t$.}
\label{MIS}
\begin{algorithmic}[1]
\Procedure{$DetMIS(t)$} {}
\begin{flushleft}
Input: Partial graph induced on the undecided nodes.
Output: Deciding if to join the MIS, be removed from the graph or remain undecided.
\end{flushleft}
\\Let $p_t(v)$ be the desired level for joining MIS, initially set $p_0=1/2$.
\\Let $d_t(v)=\sum_{u \in N(v)}p_t(u)$ be the effective degree of node $v$ in phase $t$.
\\Let $age(v)$ be the number of golden phases it had so far. Initially, $age(v)=0$.
\\Exchange the $p_t(v)$ , $d_t(v)$ and ID's with your neighbors.
\\ Set $Y_0=\emptyset, \beta=\Theta(\log n)$.
\\For $i=0, \ldots, \beta$:
\\ $(b_{i+1}) \gets DetMIS(t,Y_i)$.
\\ $Y_{i+1}=(y_1=b_1, \ldots, y_{i+1}=b_{i+1})$.
\\
\\ Let $h_t$ be selected from $\mathcal{H}_{a,\beta}$ using $Y_{\beta}$.
\\ Let $m_t(v)=m_{h_t}(v,p_t(v))$ and exchange with your neighbor.
\\ If $m_t(v)=1$ and none of your neighbors has $m_t(u)=1$, join MIS and notify your neighbors.
\\If your neighbor join MIS, removed from the graph.
\EndProcedure
\end{algorithmic}
\end{algorithm}
}

\subsection{An $O(\log \Delta)$ deterministic MIS algorithm for $\Delta=O(n^{1/3})$}
In the case where the maximal degree is bounded by  $\Delta=O(n^{1/3})$, our deterministic bounds match the randomized ones.
\begin{theorem-repeat}{theorem:MISboundedDelta}
\TheoremMISBoundedDelta
\end{theorem-repeat}

\begin{proof}
The algorithm consists of two parts as before, namely, $O(\log \Delta)$ phases that simulate the modified Ghaffari's algorithm and collecting the remaining topology at a leader and solving MIS for it locally. The second part works exactly the same as before, and so we focus on the first part which simulates the $O(\log \Delta)$ phases of the modified Ghaffari's algorithm in $O(\log \Delta)$ \emph{deterministic} rounds. The main advantage of having a small degree $\Delta=O(n^{1/3})$ is that in the congested clique, it is possible for each node to collect the entire topology of its $2$-neighborhood in $O(1)$ rounds.
This because the $2$-neighborhood of a node contains $O(\Delta^2)$ nodes, and hence there are $O(\Delta^3)=O(n)$ messages a node needs to send or receive, which can be done in $O(1)$ rounds using Lenzen's routing algorithm~\cite{Lenzen13}.

We now consider phase $t$ of the modified Ghaffari's algorithm and explain how the seed of length $O(\log n)$ can be computed in $O(1)$ rounds. Unlike the algorithm of the previous section, which computes the seed bit by bit, here the nodes compute the assignment for a \emph{chunk} of $z=\lfloor \log n \rfloor$ variables at a time.

To do so, consider the $i$'th chunk of the seed $Y'_i=(y'_1,\ldots, y'_z)$. For each of the $n$ possible assignments $(b'_1,\ldots, b'_z) \in \{0,1\}^{z}$ to the $z$ variables in $Y'$, we assign a node $u$ that receives the conditional expectation values from all the golden nodes, where the conditional expectation is computed based on assigning $y'_1=b'_1,\ldots, y'_z=b'_z$. The node $u$ then sums up all these values and obtains the expected number of removed nodes conditioned on the assignment $y'_1=b'_1,\ldots, y'_z=b'_z$. Finally, all nodes send to the leader their computed sum and the leader selects the assignment $(b^*_1,\ldots, b^*_z) \in \{0,1\}^{z}$ of largest value.

Using a well known reduction from the $(\Delta+1)$-coloring to MIS \cite{lovasz1993combinatorial,luby1986simple}\footnote{In this reduction, every node $v$ is replaced by a clique of $\Delta+1$ nodes, and we add a complete matching between cliques of neighboring nodes. When computing an MIS in this graph, there must be exactly one copy, say, $j$, for node $v$ in the MIS, which implies that the color of $v$ is $j$.}, by collecting the topology of the 2-neighborhood, we can obtain the same bounds for $(\Delta+1)$-coloring. 
\end{proof}

\subsection{An $O(D\log^2 n)$ deterministic MIS algorithm for CONGEST}
Here we provide a fast deterministic MIS algorithm for the harsher CONGEST model. For comparison, in terms of $n$ alone, the best deterministic MIS algorithm is by Panconesi and Srinivasan~\cite{panconesi1992improved} from more than 20 years ago is bounded by $2^O(\sqrt{\log n})$ rounds. While we do not improve this for the general case, we significantly improve it for graphs with a polylogarithmic diameter. The following is our main result for CONGEST.

\begin{theorem-repeat}{theorem:MIScongest}
\TheoremMISCongest
\end{theorem-repeat} 

The algorithm is very similar to Theorem. \ref{theorem:MIS} with two main differences. First, we run Ghaffari's algorithm for $O(\log n)$ rounds instead of $O(\log \Delta)$ rounds.
Each round is simulated by a phase with $O(D \log n)$ rounds. Specifically, in each phase, we need to compute the seed of length $O(\log n)$, this is done bit by bit using the method of conditional expectations and aggregating the result at some leader node. The leader then notifies the assignment of the bit to the entire graph. Since, each bit in the seed is computed in $O(D)$ rounds, overall the run time is  $O(D \log^2 n)$. 

For the correctness, assume towards contradiction that after $\Omega(\log n)$ rounds, at least one node remains undecided. Then, by the proof Lemma \ref{lem:amortized}, every node that survives, can charge $\Omega(n^{c})$ nodes that are removed, contradiction as there only $n$ nodes. 
%
%
%
%
%
%
%
\section{Deterministic spanner construction}
In this section we present a derandomization algorithm in the congested clique for the spanner construction of Baswana-Sen~\cite{BaswanaS07}. In general, we use the same general outline as in the MIS derandomization: We first reduce the dependence between the coins used by the algorithm and then use the method of conditional expectations for every iteration of the algorithm. However, here we face different challenges that we need to overcome.

The following is the main theorem of this section.

\begin{theorem-repeat}{theorem:Spanner}
\SpannerTheorem
\end{theorem-repeat}

We first present the original algorithm of Baswana-Sen~\cite{BaswanaS07}, which constructs a $(2k-1)$-spanner with $O(kn^{1+1/k})$ edges in $O(k^2)$ rounds. Next, we consider the same algorithm with only limited independence between its coin tosses. We prove some properties of the algorithm and show it can be derandomized. Finally we present our deterministic algorithm for the congested clique which constructs a $(2k-1)$-spanner with $O(kn^{1+1/k}\log n)$ edges within $O(k\log n)$ rounds.

\paragraph{The randomized spanner algorithm.}
We begin by presenting a simplified version of the Baswana-Sen algorithm. For the full details of the Baswana-Sen algorithm we refer the reader to \cite{BaswanaS07}. 


At each iteration the algorithm maintains a \emph{clustering} of the nodes. A \emph{cluster} is a subset of nodes, and a clustering is a set of disjoint clusters. In the distributed setting each cluster has a leader, and a spanning tree rooted at the leader is maintained inside the cluster. We will abuse notation and say that a cluster performs a certain action. When we say this, it means that the leader gathers the required information from the cluster nodes to make a decision, and propagates relevant data down the cluster tree. We will also refer at times to the ID of a cluster, which is the ID of the cluster leader.

We denote the clustering maintained at iteration $i$ by $\mathcal{C}_i$, where initially $\mathcal{C}_0 = \set{\set{v} \mid v \in V}$. At each iteration, $\mathcal{C}_i$ is sampled from $\mathcal{C}_{i-1}$, by having every cluster in $\mathcal{C}_{i-1}$ join $\mathcal{C}_i$ with probability $n^{-1/k}$. In the final iteration we force $\mathcal{C}_k=\emptyset$.
A node $v$ that belongs to a cluster $C \in \mathcal{C}_i$ is called $i$-\emph{clustered}, and otherwise it is $i$-\emph{unlcustered}. 

The algorithm also maintains a set of edges, $E'$, initialized to $E$. For every edge $(u,v)$ removed from $E'$ during the algorithm, it is guaranteed that there is a path from $u$ to $v$ in the constructed spanner, $H$, consisting of at most $2k-1$ edges, each one of weight not greater than the weight of $(u,v)$.

Let $v \in V$ be a node that stops being clustered at iteration $i$, and let $E'(v,C)=\set{(v,u) \in E' \mid u\in C}$ be the set of edges between a node and a cluster $C$, for every $C \in \mathcal{C}_i$. Let $e_{v,C}$ be the lightest edge in $E'(v,C)$. 
Let $L$ be the set of lightest edges between $v$ and the clusters in $\mathcal{C}_{i-1}$. We go over $L$ in ascending order of edge weights and add every edge connecting $v$ to cluster $C$ and then we discard $E'(v,C)$ from $E'$. We say that $v$ adds these edges at iteration $i$. If we reach a cluster $C \in \mathcal{C}_i$, we continue to the next node. 
The pseudocode appears in Algorithms~\ref{alg-spanner-rand} and~\ref{alg-spanner-rand-iteration}.

\RestyleAlgo{boxruled}
\LinesNumbered
\begin{algorithm}[htbp]
	\caption{Randomized $(2k-1)$-spanner construction}
\label{alg-spanner-rand}

	$H= \emptyset$\\
	$\mathcal{C}_0 = \set{\set{v} \mid v \in V}$\\
	$E' = E$\\
	
	\For{$i$ from 1 to $k$}
	{
		\If{$i=k$}{
			$\mathcal{C}_k=\emptyset$
		}
		\Else
		{
			$\mathcal{C}_i$ is sampled from $\mathcal{C}_{i-1}$ by sampling each cluster with probability $n^{-1/k}$
		}
		Run Algorithm~\ref{alg-spanner-rand-iteration}
	}
	
\end{algorithm}
\RestyleAlgo{boxruled}
\LinesNumbered
\begin{algorithm}[htbp]
	\caption{Baswana-Sen iteration}
\label{alg-spanner-rand-iteration}
	
	\ForEach{node $v$ that stopped being clustered at iteration $i$}
	{
		$L=\set{e_{v,C} \mid C \in \mathcal{C}_{i-1}}$\\
		Let $e_j$ be the $j$-th edge in $L$ in ascending weight\\
		\For{$j=1$ to $j=\ell$}
		{
			
			$H = H\cup \set{e_j}$\\
			Let $C$ be the cluster corresponding to $e_j$\\
			$E' = E' \setminus E'(v,C)$\\
			\If{$C \in \mathcal{C}_i$}
			{
				break	
			} 	
		}
	}
	
\end{algorithm}

Algorithm~\ref{alg-spanner-rand} is guaranteed to finish after $O(k^2)$ communication rounds in the distributed setting, and return a $(2k-1)$-spanner of expected size $O(kn^{1+1/k})$.

%

\paragraph{$d$-wise independence and derandomization.} The Baswana-Sen algorithm does not work as is with reduced independence, because the bound on the spanner size relies on full independence between the coin flips of the clusters.
However, we proceed by establishing properties of the Basawana-Sen algorithm (Algorithm~\ref{alg-spanner-rand}) that do hold in the case of limited independence between its coin tosses. We use following result of Benjamini et al.~\cite{benjamini2012k}.

\begin{theorem}
\label{theorem:Mndp}
	Let $M(n,d,p)$ be the maximal probability of the AND event for $n$ binary $d$-wise independent random variables, each with probability $p$ of having the value $1$. If $d$ is even, then:
	$$M(n,d,p) \leq \frac{p^n}{\Pr[Bin(n,1-p) \leq d/2]},$$
	and if $d$ is odd, then:
	$$M(n,d,p) = pM(n-1,d-1,p).$$
\end{theorem}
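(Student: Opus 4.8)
The plan is to recast $M(n,d,p)$ as the value of a linear program over probability distributions on $\{0,1\}^n$. For a family $\{X_i\}$ of $\{0,1\}$-valued variables with $\Pr[X_i=1]=p$, being $d$-wise independent is equivalent, by inclusion--exclusion, to $\Pr[\bigwedge_{i\in S}X_i=1]=p^{|S|}$ for every $|S|\le d$. Both the objective $\Pr[\bigwedge_{i\in[n]}X_i=1]$ and all these constraints are invariant under permuting coordinates, so by averaging an optimal distribution over the symmetric group one may assume it is symmetric; a symmetric distribution is determined by the law of the Hamming weight $W=\sum_i X_i\in\{0,\dots,n\}$, and the constraints become the binomial-moment conditions $\mathbb{E}\binom{W}{s}=\binom{n}{s}p^s$ for $s=0,\dots,d$ (the same binomial moments as $\mathrm{Bin}(n,p)$), while the objective is $\Pr[W=n]$. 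Thus $M(n,d,p)$ equals $\max\Pr[W=n]$ over such $W$, and by LP weak duality $M(n,d,p)\le \mathbb{E}[P(W)]=\sum_{s\le d}c_s\binom{n}{s}p^s$ for every polynomial $P(w)=\sum_{s=0}^d c_s\binom{w}{s}$ of degree $\le d$ with $P\ge 0$ on $\{0,\dots,n\}$ and $P(n)\ge 1$, since $\mathbb{E}[P(W)]$ depends only on the fixed moments when $\deg P\le d$.

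For the even case, write $d=2e$ and take $P=Q^2/Q(n)^2$ for a degree-$e$ polynomial $Q$ with $Q(n)\neq 0$; then $P\ge 0$, $P(n)=1$, $\deg P\le d$, so $M(n,d,p)\le \mathbb{E}[Q(W)^2]/Q(n)^2$, and since $\deg Q^2\le d$ this equals $\mathbb{E}[Q(B)^2]/Q(n)^2$ for $B\sim \mathrm{Bin}(n,p)$. Viewing $Q(B)$ as a symmetric multilinear function $f$ of degree $\le e$ on the $p$-biased cube (product measure $\mu_p$) with $f(1^n)=Q(n)$, minimizing $\mathbb{E}_{\mu_p}[f^2]$ subject to $f(1^n)=1$ is a least-squares projection: by Cauchy--Schwarz and Parseval the minimum is $1/\sum_{|S|\le e}\chi_S(1^n)^2$, where $\chi_S$ is the $p$-biased Fourier basis, and $\chi_S(1^n)^2=((1-p)/p)^{|S|}$. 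Hence $\sum_{|S|\le e}\chi_S(1^n)^2=\sum_{s=0}^e\binom{n}{s}((1-p)/p)^s$ and $M(n,d,p)\le p^n/\sum_{s=0}^e\binom{n}{s}p^{n-s}(1-p)^s=p^n/\Pr[\mathrm{Bin}(n,1-p)\le d/2]$, as claimed. (Complementary slackness further shows this is an equality, since the extremal primal distribution is supported on $n$ together with $e$ more points and the matching dual polynomial is exactly such a square; only the $\le$ direction is needed here.)

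For the odd case, the inequality $M(n,d,p)\le pM(n-1,d-1,p)$ is the easy direction: in any $d$-wise independent family on $[n]$, conditioning on $X_n=1$ yields a family $(X_1,\dots,X_{n-1})$ that is $(d-1)$-wise independent with marginals $p$ — applying $d$-wise independence to the sets $T\cup\{n\}$ with $|T|\le d-1$ gives $\Pr[X_T=1^{|T|}\mid X_n=1]=p^{|T|}$ — so $\Pr[\bigwedge_{[n]}X_i=1]=p\cdot\Pr[\bigwedge_{[n-1]}X_i=1\mid X_n=1]\le p\,M(n-1,d-1,p)$.

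The reverse inequality is the heart of the matter. I would take the extremal symmetric $(d-1)$-wise independent distribution on $n-1$ variables, with weight law $W^\ast$; by moment-problem theory for Chebyshev systems this is a principal representation, supported on $n-1$ together with at most $(d-1)/2$ further points. I would then build the target distribution on $n$ variables as a mixture of weights: with probability $p$ use $W^\ast+1$, and with probability $1-p$ use an auxiliary weight $Z$ on $\{0,\dots,n-1\}$. This forces $\Pr[W=n]=p\,\Pr[W^\ast=n-1]=p\,M(n-1,d-1,p)$, and matching the binomial moments up to $d$ pins the lower binomial moments of $Z$ to the values $\binom{n-1}{s}p^s$ and its $d$-th binomial moment to a specific value $\vartheta_Z$ determined by $W^\ast$. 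The main obstacle is then to show that such a $Z$ exists, i.e.\ that $\vartheta_Z$ lies in the feasible interval $[\vartheta_{\min},\vartheta_{\max}]$ for the $d$-th binomial moment given the lower ones; here one must use that $W^\ast$, being the principal representation with an atom at the right endpoint $n-1$, realizes an \emph{extreme} value of the $d$-th moment, together with an explicit computation (via Krawtchouk polynomials, or directly with the binomial weights) showing that $\mathrm{Bin}(n-1,p)$'s own $d$-th binomial moment sits far enough inside $[\vartheta_{\min},\vartheta_{\max}]$ for the affine map $\vartheta\mapsto\vartheta_Z$ to keep $\vartheta^\ast$ in range. Granting this, $M(n,d,p)\ge p\,M(n-1,d-1,p)$, which combined with the easy direction gives the stated equality.
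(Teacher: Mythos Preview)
The paper does not prove this theorem at all: it is quoted verbatim as a result of Benjamini, Gurel-Gurevich, and Peled~\cite{benjamini2012k} and is used only as a black box in the proof of Lemma~\ref{lem: edges added prob}. So there is no ``paper's own proof'' to compare your attempt against.

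That said, your argument for the even case is correct and is essentially the standard LP-duality/reproducing-kernel argument for this type of extremal moment problem: symmetrize, reduce to the weight distribution, bound by $\mathbb{E}[Q(B)^2]/Q(n)^2$ for $B\sim\mathrm{Bin}(n,p)$, and optimize via the $p$-biased Fourier basis. The computation $\sum_{|S|\le e}\chi_S(1^n)^2=\sum_{s\le e}\binom{n}{s}((1-p)/p)^s$ and the identification with $p^n/\Pr[\mathrm{Bin}(n,1-p)\le d/2]$ are both right.

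For the odd case, your easy direction $M(n,d,p)\le p\,M(n-1,d-1,p)$ via conditioning on $X_n=1$ is fine (and is all the paper actually needs, since Lemma~\ref{lem: edges added prob} only uses the upper bound). Your reverse direction, however, is only a sketch: you explicitly write ``Granting this'' for the existence of the auxiliary weight distribution $Z$ with the prescribed first $d$ binomial moments. That feasibility step is precisely the content of the equality, and the principal-representation/Chebyshev-system considerations you invoke need to be carried out concretely (in particular, one must verify that the required $d$-th moment $\vartheta_Z$ lands in the admissible interval, which is not automatic from $W^\ast$ being extremal). As written, the odd-case equality is not established.
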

We also use the following Chernoff bound for $d$-wise independent random variables from \cite{SchmidtSS95}.

\begin{theorem}
	\label{thm:d-wise chernoff}
	Let $X_1,...,X_n$ be $d$-wise independent random variables taking values in $[0,1]$, where $X = \sum_{i=1}^n X_i$ and $\E[X]=\mu$. Then for all $\epsilon \leq 1$ we have that if $d \leq \lfloor \epsilon^2 \mu e^{-1/3} \rfloor$ then:
	$$\Pr[|X- \mu | \geq \epsilon \mu] \leq e^{-\lfloor d/2 \rfloor}.$$
	And if $d > \lfloor \epsilon^2 \mu e^{-1/3} \rfloor$ then:
	$$\Pr[|X- \mu | \geq \epsilon \mu] \leq e^{-\lfloor \epsilon^2 \mu /3 \rfloor}.$$
	
\end{theorem}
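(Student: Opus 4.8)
The plan is to prove this via the \emph{moment method}: the usual Chernoff bound goes through the moment generating function, which is unavailable under mere $d$-wise independence, so instead I would bound a single even central moment $\E[(X-\mu)^{k}]$ for a carefully chosen even integer $k\le d$ and then apply Markov's inequality to $(X-\mu)^{k}$. The first ingredient is that low-degree even moments are \emph{insensitive} to the independence assumption. Writing $S_i := X_i-\E[X_i]$ and $S:=X-\mu=\sum_i S_i$, I would expand
$$\E[S^{k}] = \sum_{(i_1,\dots,i_k)\in[n]^{k}} \E\!\big[\textstyle\prod_{j=1}^{k} S_{i_j}\big].$$
Every summand is the expectation of a product of centered variables supported on a multiset of at most $k\le d$ distinct indices, so by Definition~\ref{def: d-wise independent} it equals the value it would take under full independence; in particular it vanishes whenever some index appears with multiplicity one. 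Hence $\E[(X-\mu)^{k}]$ equals its full-independence value, and it suffices to bound that classical quantity --- the independence price is paid entirely through the cap $k\le d$.

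Next I would invoke (and, if desired, reprove) the standard even-moment estimate for a sum of \emph{independent} $[0,1]$-valued random variables, which is the technical core of~\cite{SchmidtSS95}: for $k$ even with $k\le\mu$,
$$\E[(X-\mu)^{k}] \;\le\; (k\mu)^{k/2}\,e^{-k/3} \;=\; \big(k\mu\,e^{-2/3}\big)^{k/2}.$$
Its proof groups the expansion above by the \emph{set} of distinct indices that appear in a term: a pattern using $\ell\le k/2$ distinct indices contributes at most $\prod\E[X_i]\le\mu^{\ell}$ per multiset, using that $|S_i|\le1$ forces $|\E[S_i^{a}]|\le\E[S_i^{2}]\le\E[X_i]$ for $a\ge2$ --- this is the only place the hypothesis $X_i\in[0,1]$ is used, and it is what makes the final bound depend on $\mu$ alone. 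Summing over all admissible multisets and applying Stirling-type estimates to the resulting combinatorial counts produces the $k^{k/2}$ factor together with the constant $e^{-2/3}$. I expect this to be the main obstacle: unlike the slick MGF route, extracting the $e^{-2/3}$-sharp constant requires a somewhat delicate count of multiset patterns and a careful Stirling bound.

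Finally, since $k$ is even, Markov's inequality applied to $(X-\mu)^{k}$ gives
$$\Pr[|X-\mu|\ge\epsilon\mu] \;\le\; \frac{\E[(X-\mu)^{k}]}{(\epsilon\mu)^{k}} \;\le\; \Big(\frac{k}{e^{2/3}\,\epsilon^{2}\mu}\Big)^{k/2},$$
and it remains only to choose $k$. I would take $k$ to be the largest even integer satisfying both $k\le d$ and $k\le\epsilon^{2}\mu\,e^{-1/3}$; the second constraint, combined with the hypothesis $\epsilon\le1$, forces $k<\mu$, so the even-moment estimate applies, and it also makes the parenthesized factor at most $e^{-1}$, so the probability is at most $e^{-k/2}$. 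If $d\le\lfloor\epsilon^{2}\mu\,e^{-1/3}\rfloor$ then $d$ is the binding constraint, $k/2=\lfloor d/2\rfloor$, and we get $\Pr[|X-\mu|\ge\epsilon\mu]\le e^{-\lfloor d/2\rfloor}$; otherwise $\epsilon^{2}\mu\,e^{-1/3}$ is binding, and a short calculation comparing the constant $e^{-1/3}$ with the target exponent $1/3$ converts $e^{-k/2}$ into the stated $e^{-\lfloor\epsilon^{2}\mu/3\rfloor}$, with the floor functions in the statement absorbing the rounding. This would complete the argument.
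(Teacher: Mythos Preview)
The paper does not actually prove this theorem: it is quoted verbatim as a black-box result from~\cite{SchmidtSS95} and used as a tool in the spanner derandomization, so there is no ``paper's own proof'' against which to compare your proposal. Your outline is the standard moment-method argument that~\cite{SchmidtSS95} itself employs --- use $d$-wise independence to equate the $k$-th central moment (for even $k\le d$) with its full-independence value, bound that value combinatorially, apply Markov to $(X-\mu)^k$, and optimize $k$ --- and the constants you track are the right ones, so the proposal is sound and faithful to the original source.
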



We consider Algorithm~\ref{alg-spanner-rand} with only $O(\log n)$-wise independence between the coin tosses of clusters at each iteration. We also assume that $k \leq 0.5\log n $.
We prove the following two lemmas that will be used later for derandomization.

Let $d=2\log 2n$ be the independence parameter, and define $\xi = e^{1/3}2\log 2n$, and $\alpha_i = \prod_{j=1}^i (1+1/(k-j))$.

\begin{lemma}
	\label{lem: cluster size prob}
For every $1 \leq i \leq k-1$, if $|\mathcal{C}_{i-1}| \leq \xi \alpha_{i-1}n^{1-(i-1)/k } $, then $\Pr[|\mathcal{C}_i| \geq \xi \alpha_i n^{1-i/k}] <  0.5$. In addition, $\xi \alpha_{k-1}n^{1/k} = O(kn^{1/k} \log n)$.
\end{lemma}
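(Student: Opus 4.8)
The plan is to analyze $|\mathcal{C}_i|$ as a sum of indicator variables and apply the $d$-wise Chernoff bound of Theorem~\ref{thm:d-wise chernoff}. For each cluster $C \in \mathcal{C}_{i-1}$, let $X_C$ be the indicator that $C$ is sampled into $\mathcal{C}_i$, so $\Pr[X_C=1]=p:=n^{-1/k}$ and these variables are $d$-wise independent by assumption, with $d = 2\log 2n$. Then $|\mathcal{C}_i| = \sum_{C \in \mathcal{C}_{i-1}} X_C$, and conditioning on the hypothesis $|\mathcal{C}_{i-1}| \le \xi\alpha_{i-1}n^{1-(i-1)/k}$ we get $\mu := \E[|\mathcal{C}_i|] = |\mathcal{C}_{i-1}| \cdot n^{-1/k} \le \xi\alpha_{i-1}n^{1-i/k}$. (Strictly, since the statement concerns a probability we should think of $\mathcal{C}_{i-1}$ as fixed with the stated size bound, or work with the worst case; the cleanest route is to note the bound is monotone and assume $|\mathcal{C}_{i-1}|$ equals its upper bound.) The target $\xi\alpha_i n^{1-i/k} = \xi\alpha_{i-1}(1+\tfrac{1}{k-i})n^{1-i/k}$ is then $(1+\epsilon)\mu$ with $\epsilon = 1/(k-i)$ (or slightly larger if $\mu$ is below its maximum, which only helps).

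First I would verify the hypothesis of the ``strong'' branch of Theorem~\ref{thm:d-wise chernoff}, namely $d > \lfloor \epsilon^2 \mu e^{-1/3}\rfloor$. We have $\epsilon^2\mu \le \tfrac{1}{(k-i)^2}\xi\alpha_{i-1}n^{1-i/k}$; I want to argue this is at most $d e^{1/3} = e^{1/3}\cdot 2\log 2n = \xi$ — wait, that is exactly the definition of $\xi$, so the relevant comparison is whether $\tfrac{1}{(k-i)^2}\alpha_{i-1}n^{1-i/k} \le 1$ after absorbing $\xi$. Here I would use that $i \le k-1$ so $n^{1-i/k} \le n^{1/k}$ — hmm, this is not bounded by a constant in general. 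Let me reconsider: more likely the intended reading is that for the \emph{last} relevant step and for the small values of $k$ assumed ($k \le \tfrac12\log n$), one has $n^{1/k} \ge \sqrt n$ is large, so in fact we are in the \emph{first} branch of the theorem, where $\Pr[|X-\mu|\ge \epsilon\mu] \le e^{-\lfloor d/2\rfloor} = e^{-\lfloor \log 2n\rfloor} \le \tfrac{1}{2n} < \tfrac12$. To be in that branch we need $d \le \lfloor \epsilon^2\mu e^{-1/3}\rfloor$, i.e. $\epsilon^2 \mu e^{-1/3} \ge 2\log 2n$, i.e. $\mu \ge \xi/\epsilon^2 = \xi(k-i)^2$. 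Since $\mu$ can be as small as... — actually $\mu = |\mathcal{C}_{i-1}|n^{-1/k}$ and $|\mathcal{C}_{i-1}|$ could be tiny. So the honest statement must handle both branches: when $\mu$ is large we use the first branch to get failure probability $\le e^{-\lfloor d/2\rfloor}\le \tfrac1{2n}$; when $\mu < \xi(k-i)^2$ is small, note then the target $\xi\alpha_i n^{1-i/k} \ge \xi\alpha_{i-1}n^{1-i/k} = \xi\alpha_{i-1}\cdot |\mathcal{C}_{i-1}|/\mu \cdot \mu$... I would instead argue directly that when $\mu$ is below the threshold the deviation we are allowed, $\epsilon\mu$ with a suitably enlarged $\epsilon$ (since the target exceeds $(1+1/(k-i))\mu$ when $\mu$ is not maximal), still lands us in a regime where $d \le \lfloor\epsilon^2\mu e^{-1/3}\rfloor$ holds — that is the crux of the case analysis.

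Second, I would dispatch the ``in addition'' clause, which is a pure calculation: $\alpha_{k-1} = \prod_{j=1}^{k-1}(1+\tfrac{1}{k-j}) = \prod_{m=1}^{k-1}\tfrac{m+1}{m} = k$ by telescoping, so $\xi\alpha_{k-1}n^{1/k} = e^{1/3}\cdot 2\log 2n \cdot k \cdot n^{1/k} = O(kn^{1/k}\log n)$, which is exactly the claimed bound.

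The main obstacle I anticipate is the branch-selection in Theorem~\ref{thm:d-wise chernoff}: one must carefully choose $\epsilon$ (the hypothesis only guarantees the target is \emph{at least} $(1+1/(k-i))\mu$, so when $\mathcal{C}_{i-1}$ is smaller than its bound we get a larger effective $\epsilon$) so that, uniformly over the possible values of $|\mathcal{C}_{i-1}|$, either $d \le \lfloor \epsilon^2\mu e^{-1/3}\rfloor$ (giving failure probability $e^{-\lfloor d/2\rfloor} = e^{-\lfloor\log 2n\rfloor}\le 1/(2n) < 1/2$) or the deviation is so large relative to $\mu$ that the second branch's bound $e^{-\lfloor \epsilon^2\mu/3\rfloor}$ is itself $< 1/2$. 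Getting this case split clean — and confirming that $d = 2\log 2n$ was chosen precisely to make $e^{-\lfloor d/2\rfloor} \le 1/(2n)$ — is the heart of the argument; everything else (linearity of expectation, telescoping $\alpha_{k-1}=k$, monotonicity in $|\mathcal{C}_{i-1}|$) is routine.
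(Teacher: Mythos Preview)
Your overall approach matches the paper's: express $|\mathcal{C}_i|$ as a sum of $d$-wise independent indicators, apply Theorem~\ref{thm:d-wise chernoff} with $\epsilon = 1/(k-i)$, and compute $\alpha_{k-1}$ separately. Your telescoping observation that $\alpha_{k-1}=k$ exactly is in fact cleaner than the paper's bound $\alpha_{k-1}\le e^{H_{k-1}}=O(k)$. You also correctly flag the issue that the actual mean depends on $|\mathcal{C}_{i-1}|$ and that monotonicity (stochastic domination) lets you assume $|\mathcal{C}_{i-1}|$ attains its upper bound; the paper silently does exactly this, setting $\mu_i = \xi\alpha_{i-1}n^{1-i/k}$.

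The gap is in the branch selection. Once you fix $\mu=\mu_i$ via monotonicity, the first-branch condition $d \le \lfloor \epsilon^2\mu e^{-1/3}\rfloor$ becomes precisely $\alpha_{i-1}\,n^{1-i/k}/(k-i)^2 \ge 1$, and this \emph{does} hold for every $1\le i\le k-1$ --- there is no need for a two-branch case analysis. You abandoned this line too early. The paper verifies it directly: since $\alpha_{i-1}\ge 1$ it suffices that $n^{1-i/k}\ge (k-i)^2$; one checks that $n^{1-i/k}/(k-i)^2$ is decreasing in $i$ (the ratio of consecutive terms is $(k-i+1)/(k-i)\le 2$ on one side versus $n^{1/2k}\ge n^{1/\log n}=2$ on the other, using the standing assumption $k\le \tfrac12\log n$), and at $i=k-1$ its value is $n^{1/k}\ge 1$. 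With that, the first branch gives $\Pr[|\mathcal{C}_i|\ge (1+\epsilon_i)\mu_i]\le e^{-\lfloor d/2\rfloor}\le e^{-\lfloor\log 2n\rfloor}<1/2$, and since $(1+\epsilon_i)\mu_i=\xi\alpha_i n^{1-i/k}$ you are done. Your proposed case split over small versus large $\mu$ is both unnecessary and, as you wrote it, incomplete; the missing ingredient is simply the inequality $n^{1-i/k}\ge (k-i)^2$, which is where the hypothesis $k\le\tfrac12\log n$ is actually used.
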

\begin{proof}
	We define for every cluster $C$ the indicator random variable $X(C)$ for the event that the cluster remains for the next iteration. Note that $|\mathcal{C}_i|=\sum X(C)$ and $\E[X(C)]= n^{-1/k}$. By the assumption of the lemma, for the $(i-1)$-th iteration we know that we have at most $\xi \alpha_{i-1}n^{1-(i-1)/k} $ clusters left from the previous iteration. Thus $\E[\sum X(C)] \leq \xi \alpha_{i-1}n^{1-(i-1)/k}  \cdot n^{-1/k} \leq \xi \alpha_{i-1}n^{1-i/k} $.
	
	We wish to apply Theorem~\ref{thm:d-wise chernoff} with $d=2\log 2n, \mu_i = \xi \alpha_{i-1}n^{1-i/k}$ and $\epsilon_i=1/(k-i)$. We note that $\alpha_i \geq 1$ for every $i$.
	We now show that it is always the case that $d \leq \lfloor \epsilon_{i}^{2} \mu_i e^{-1/3} \rfloor$, so we can use the first case of Theorem~\ref{thm:d-wise chernoff}.
	Plugging in $\epsilon_i,\mu_i,d$ gives that we need to prove that: 
	$$2\log 2n \leq  2\log (2n) \cdot e^{1/3}e^{-1/3} \alpha_{i-1} n^{1-i/k} /(k-i)^2,$$
	which holds if and only if  
	$$\alpha_{i-1} n^{1-i/k} /(k-i)^2 \geq 1 . $$
	We bound the left hand side from below by
	$$\alpha_{i-1} n^{1-i/k} /(k-i)^2 \geq  n^{1-i/k} /(k-i)^2.$$
	To prove that the above is at least $1$, we claim that (a) the function $n^{1-i/k} /(k-i)^2$ is monotonically decreasing for $1 \leq i \leq k - 1$, and (b) that  $ n^{1-i/k} /(k-i)^2 \geq 1 $ when $i=k-1$. To prove (a), we prove that $n^{1-i/k} /(k-i)^2 \leq n^{1-(i-1)/k} /(k-(i-1))^2$. Taking the square root of both sides gives that we need to prove that
		$$n^{1/2-i/2k} /(k-i) \leq \\ n^{1/2-(i-1)/2k} /(k-(i-1)),$$
		which holds if and only if
		 $$(k-(i-1)) / (k-i) \leq n^{1/2k}.$$
	For the left hand side of the above, it holds that
	$$ (k-(i-1))/(k-i) \leq 1 + 1/(k-i) \leq 2,$$
	and since we assumed that $k < 0.5 \log n$, we have that $n^{1/2k} \geq n^{1/\log n} = 2$. Therefore, $n^{1/2k}  \geq (k-(i-1))/(k-i)$ as required for (a). 
	
	We now show (b), that is, that $ n^{1-i/k} /(k-i)^2 \geq 1 $ when $i=k-1$. This holds since for $i=k-1$ we have $ n^{1-i/k} /(k-i)^2 = n^{1-(k-1)/k} / (k-(k-1))^2 = n^{1/k} \geq 1$, giving (b). This establishes that $d \leq \lfloor \epsilon_{i}^{2} \mu_i e^{-1/3} \rfloor$, and thus the first condition of Theorem~\ref{thm:d-wise chernoff} always holds.
	


	Since $\alpha_i = (1+1/(k-i))\alpha_{i-1}=(1+\epsilon_i )\alpha_{i-1}$ we have 
	$$\Pr[|\mathcal{C}_i| \geq \xi \alpha_{i} n^{1-i/k}] = \Pr[\sum X(C) \geq \xi (1+\epsilon_i)\alpha_{i-1} n^{1-i/k}] = \Pr[\sum X(C) \geq (1+ \epsilon_i ) \mu_i] .$$
	We now apply Theorem~\ref{thm:d-wise chernoff} and obtain
	$$\Pr[\sum X(C) - \mu_i \geq  \epsilon_i \mu_i] < e^{-\lfloor d/2 \rfloor}  < 0.5,
	$$
	which proves the first part of the lemma, that $\Pr[|\mathcal{C}_i| \geq \xi \alpha_{i} n^{1-i/k}]  < 0.5$.
	
	Finally, we have
	$$ \alpha_{k-1} = \prod_{j=1}^{k-1} (1+1/(k-j)) \leq e^{\sum_{j=1}^i 1/(k-j)} = e^{\sum_{j=1}^{k-1} 1/j} = O(k).$$
	Which implies the second part of the lemma, that $\xi \alpha_{k-1}n^{1/k} = O(kn^{1/k} \log n)$, and completes the proof.
\end{proof}

Fix an iteration $i$ and consider an $i$-unclustered node $v$. Denote by $X_v$ the indicator variable for the event that node $v$ adds more than $t=2 n^{1/k} \log n$ edges in this iteration.
\begin{lemma}
	\label{lem: edges added prob}
	The probability that there exists a node $v$ at some iteration which adds more than $t$ edges to the spanner is less than 0.5. Formally, $\Pr[\vee X_v=1] <0.5$.
\end{lemma}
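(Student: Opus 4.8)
The statement bounds the probability that some node adds more than $t = 2n^{1/k}\log n$ edges in a given iteration, and asserts this is at most $0.5$ (after a union bound over nodes, or over nodes and iterations). The natural route is: (1) fix an iteration $i$ and an $i$-unclustered node $v$, and understand how many edges $v$ adds; (2) show that this count is stochastically dominated by a sum of $d$-wise independent indicators with small mean; (3) apply the $d$-wise Chernoff bound (Theorem~\ref{thm:d-wise chernoff}) to get a per-node failure probability that is polynomially small in $n$; (4) union-bound over the at most $n$ nodes (and, if needed, over the at most $k \le 0.5\log n$ iterations) to get a total failure probability below $0.5$.

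First I would pin down the combinatorial meaning of ``$v$ adds an edge to cluster $C$.'' In the Baswana-Sen iteration, $v$ scans the lightest edges $e_{v,C}$ to clusters $C \in \mathcal{C}_{i-1}$ in ascending weight order, adding one edge per cluster, and \emph{stops} as soon as it hits a cluster that was sampled into $\mathcal{C}_i$. So if we order the clusters $v$ is adjacent to (in $\mathcal{C}_{i-1}$) by the weight of their lightest connecting edge as $C_1, C_2, \dots$, then $v$ adds more than $t$ edges exactly when none of $C_1, \dots, C_t$ survives into $\mathcal{C}_i$. Each cluster survives independently-ish with probability $p = n^{-1/k}$; with full independence this event has probability $(1-p)^t \le e^{-pt} = e^{-2\log n} = n^{-2}$, and a union bound over $n$ nodes gives $n^{-1}$. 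The whole point of the lemma is to recover essentially this under only $d = 2\log 2n$-wise independence.

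For the limited-independence version I would let $Z_j$ be the indicator that cluster $C_j$ (the $j$-th nearest, for $j = 1, \dots, t$) is \emph{not} sampled, so $\Pr[Z_j = 1] = 1-p$ and the $Z_j$ are $d$-wise independent; then $\{X_v = 1\} = \{\sum_{j=1}^t Z_j = t\}$, i.e. the AND of the $Z_j$'s. Here the cleanest tool is actually Theorem~\ref{theorem:Mndp} (the Benjamini et al.\ bound on the AND probability): $\Pr[X_v = 1] \le M(t, d, 1-p) \le (1-p)^t / \Pr[\mathrm{Bin}(t, p) \le d/2]$ for even $d$. Since $\E[\mathrm{Bin}(t,p)] = pt = 2\log n > d/2 = \log 2n$, the denominator is a constant bounded away from $0$ (a Chernoff/Markov estimate on the binomial lower tail), so $\Pr[X_v=1] = O((1-p)^t) = O(n^{-2})$. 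A union bound over the $\le n$ nodes (and the $\le k$ iterations) then yields $\Pr[\vee_v X_v = 1] = O(n^{-1}) < 0.5$. Alternatively one can phrase the same estimate through Theorem~\ref{thm:d-wise chernoff} applied to $\sum_{j \le t} Y_j$ with $Y_j = 1 - Z_j$ the survival indicators and $\mu = pt = 2\log n$, deviation $\epsilon = 1$: checking $d = 2\log 2n \le \lfloor \epsilon^2 \mu e^{-1/3}\rfloor = \lfloor 2 e^{-1/3}\log n\rfloor$ — which is where the constant $2$ in $t$ is calibrated — and concluding $\Pr[\sum Y_j = 0] \le \Pr[|\sum Y_j - \mu| \ge \mu] \le e^{-\lfloor d/2\rfloor} \le 1/(2n)$.

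The main obstacle I anticipate is not the tail bound itself but the subtle point that the set of clusters relevant to $v$ — which clusters are among $v$'s ``first $t$'' and hence matter — is itself a random quantity depending on earlier iterations' sampling (it is determined by $\mathcal{C}_{i-1}$ and by which edges remain in $E'$). I would handle this by conditioning on $\mathcal{C}_{i-1}$ and on $E'$ at the start of iteration $i$: once these are fixed, the ordering $C_1, C_2, \dots$ of $v$'s neighboring clusters is deterministic, and the only randomness left is the $d$-wise independent sampling $\mathcal{C}_{i-1} \to \mathcal{C}_i$, to which the above argument applies verbatim; since the bound holds for every such conditioning, it holds unconditionally. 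A minor secondary point is the edge case where $v$ has fewer than $t$ neighboring clusters, in which case $X_v = 0$ trivially and there is nothing to prove. Finally I would make sure the constant bookkeeping ($d = 2\log 2n$, $t = 2n^{1/k}\log n$, $p = n^{-1/k}$, the assumption $k \le 0.5\log n$ guaranteeing $n^{1/k} \ge 2$ and hence $pt \ge$ the needed threshold) all lines up so that the per-node probability is genuinely below $1/(2n)$.
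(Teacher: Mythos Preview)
Your overall architecture is right and matches the paper: union bound over nodes, the combinatorial observation that $X_v=1$ exactly when the first $t$ clusters (in the lightest-edge ordering) all fail to be sampled, and then an appeal to Theorem~\ref{theorem:Mndp}. The conditioning remark and the $\ell\le t$ edge case are also handled the same way.

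The gap is in how you control the denominator in Theorem~\ref{theorem:Mndp}. You assert that because $\E[\mathrm{Bin}(t,p)]=pt=2\log n> d/2=\log 2n$, the quantity $\Pr[\mathrm{Bin}(t,p)\le d/2]$ is ``a constant bounded away from $0$''. The inequality points the wrong way for that conclusion: the threshold $d/2\approx \log n$ is roughly \emph{half} the mean $2\log n$, so $\{\mathrm{Bin}(t,p)\le d/2\}$ is a genuine lower-tail event, and a Chernoff bound gives $\Pr[\mathrm{Bin}(t,p)\le d/2]\le e^{-\Theta(\log n)}=n^{-\Theta(1)}$, not $\Omega(1)$. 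With a polynomially small denominator you no longer get $\Pr[X_v=1]=O((1-p)^t)$, and the argument collapses. The paper does not try to show the denominator is constant; instead it lower-bounds it by the single binomial term $\binom{t}{d}p^{d}(1-p)^{t-d}$, so that the $(1-p)^{t-d}$ cancels almost all of the numerator $(1-p)^t$, leaving
\[
M\ \le\ \frac{(1-p)^d}{\binom{t}{d}\,p^{d}}\ \le\ \left(\frac{d}{t\,p}\right)^{d}\ =\ \left(\frac{d}{2\log n}\right)^{d},
\]
which (with the right $\Theta(\log n)$ choice of $d$) is below $1/(2n)$. That cancellation is the actual mechanism, and it is missing from your sketch. (Your idea would become valid if you took $d$ large enough that $d/2\ge \mu=2\log n$, since then the binomial median is below $d/2$ and the denominator is at least $1/2$; but that is not the paper's parameter choice.)

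Your alternative route through Theorem~\ref{thm:d-wise chernoff} also does not go through as stated. You need $d\le \lfloor \epsilon^2\mu e^{-1/3}\rfloor=\lfloor 2e^{-1/3}\log n\rfloor$, but $d=2\log 2n>2e^{-1/3}\log n$ since $e^{-1/3}<1$; so you fall into the second case, which yields only $e^{-\lfloor \mu/3\rfloor}=e^{-\lfloor (2/3)\log n\rfloor}$, and that is not small enough to survive a union bound over $n$ nodes.
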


\begin{proof}
	From the union bound it holds that $\Pr[\vee X_v=1] \leq \Pr[X_v=1]$. Next,  we bound $\sum \Pr[X_v=1]$. 
	We show that every $\Pr[X_v=1]$ is smaller than $1/2n$, completing the proof by applying a union bound over all nodes.
	Let $\ell$ be the number of neighboring clusters of $v$ in $\mathcal{C}_{i-1}$. If $\ell \leq t$ then $\Pr[X_v=1]=0$. Otherwise, we might add $t$ edges to $H$, if and only if the clusters corresponding to the $t$ lightest edges in $L$ are not in $\mathcal{C}_i$. This is the value $M(t,2d,p)$ (we use $2d$ to avoid fractions in the binomial coefficient) with $p=1-n^{-1/k}$. Let us bound $M(t,2d,p)$ as follows.
	\begin{align*}
		M(t,2d,p) &\leq \frac{p^t}{\Pr[Bin(t,1-p) \leq d]} \leq \frac{p^t}{\binom{t}{d}(1-p)^d p^{t-d}} = \frac{p^d}{\binom{t}{d}(1-p)^d}  \\
		& \leq \frac{1}{\binom{t}{d}(1-p)^d} \leq \frac{1}{(t/d)^d(1-p)^d} \leq \frac{d^d}{t^d(1-p)^d}.
	\end{align*}
	Plugging in $p=1-n^{-1/k}$ and $t=2 n^{1/k} \log n$ gives	
	$$M(2n^{1/k}\log n,2d,1-n^{-1/k}) \leq \frac{d^d}{(2n^{1/k}\log n)^d(n^{-1/k})^d} = \frac{d^d}{(2\log n)^d}.$$
	Now let us plug in $d=2\log 2n$ and we get:
	
	$$M(2n^{1/k}\log n,2\log 2n,1-n^{-1/k}) \leq (1/2)^{2\log 2n} < 1/2n. $$
	Finally, as explained, we use a union bound to get that $\Pr[\vee X_v=1] \leq \sum \Pr[X_v=1]<0.5$.
\end{proof}

The above lemmas do not guarantee that the algorithm yields the same expected spanner size as the algorithm with full independence, but using these lemmas we can now construct a deterministic algorithm.

Let us define two bad events that can occur during some iteration $i$ of the algorithm. Let $A$ be the event that not enough clusters were removed during the iteration, and let $B$ be the event that there exists a node that adds too many edges to the spanner. We will define these events formally later on. Let $X_A, X_B$ be the corresponding indicator random variables for the events. Assume that it holds that $\E[X_A]+\E[X_B] < 1$. In this case we can use the method of conditional expectations in order to get an assignment to our random coins such that no bad event occurs.

Let $\rhobar$ be the vector of coin flips used by the clusters. Let $\ybar$ be the seed randomness from Lemma~\ref{lem: d-wise independent} used to generate $\rhobar$ such that its entries are $d$-wise independent, where $d=O(\log n)$. We use $\ybar$ to select a function $h\in \Hcal_{\gamma,\beta}$, where $\gamma=\log n$ and $\beta = \log n^{1/k}$. Each node $v$ uses $\ybar$ to generate $h$ and then uses the value $h(ID(v))$ to generate $\rhobar[v]$.

Let $Z= (z_1, \dots, z_n)$ be the final assignment generated by the method of conditional expectations. Then, $\E[X_A \mid \ybar=Z]+\E[X_B \mid \ybar=Z] < 1$. Because $X_A$ and $X_B$ are binary variables that are functions of $\ybar$, it must be the case that both are zero.
We can write our expectation as follows:
\begin{align*}
& \E[X_A]+\E[X_B] = \Pr[X_A=1] + \Pr[X_B=1] = \Pr[X_A=1] + \Pr[\vee X_v=1] 
\end{align*}

At every iteration of the algorithm we would like to keep $\E[X_A]+\E[X_B]$ below 1, which would guarantee both bad events do not occur. Unfortunately, it is unclear how to compute $\Pr[\vee X_v=1]$ conditioned on some assignment to $\ybar$. Thus, we must use a pessimistic estimator. We consider $\sum_v \Pr[X_v=1]$, and we have that:
\begin{align*}
\Pr[X_A=1] + \Pr[\vee X_v=1] \leq \Pr[X_A=1] + \sum_v \Pr[X_v=1].
\end{align*}
We define our pessimistic estimator $\Psi = X_A + \sum_v X_v$.
Note that the above inequality holds conditioned on any partial assignment to $\ybar$, because it is derived via a union bound. Thus, if we show that $\E[\Psi] = \Pr[X_A=1] + \sum \Pr[X_v=1] < 1$, it is enough to apply the method of conditional expectations for $\Psi$, keeping the expression below 1. For the assignment $Z$ resulting from this process it will hold that $\E[X_A \mid \ybar=Z]+\E[X_B \mid \ybar = Z] < \E[\Psi \mid \ybar = Z] < 1$, as required.

It remains only to bound the pessimistic estimator $\Psi$. 
This can be achieved using Lemma~\ref{lem: cluster size prob} and Lemma~\ref{lem: edges added prob}. In each iteration of the algorithm, because the bad event $A$ did not occur in the previous iteration, the condition that $|\mathcal{C}_{i-1}| \leq \xi \alpha_{i-1} n^{1-(i-1)/k}$ holds for Lemma~\ref{lem: edges added prob}. This yields $\Pr[X_A=1] + \sum \Pr[X_v=1]<1$.

\paragraph{The deterministic spanner construction in the congested clique.}
We are now ready to describe our algorithm in the congested clique. We first show the we can indeed compute the conditional expectation of our pessimistic estimator $\Psi$. 
We are interested in $\Pr[X_A=1 \mid y_1=b_1,\dots, y_i=b_i]$ and $\Pr[X_v=1 \mid y_1=b_1,\dots, y_i=b_i]$. 
 Knowing some partial assignment to $\ybar$, we can iterate over all possible selections of $h \in \Hcal_{\gamma, \beta}$ and compute the coin flip for every cluster using its ID alone.
The expression $\Pr[X_A=1 \mid y_1=b_1,\dots, y_i=b_i]$ is just the probability of enough clusters getting removed given some partial assignment. It does not depend on the graph topology, and can easily be computed by a node only knowing the IDs of clusters currently active. To compute $\Pr[X_v=1 \mid y_1=b_1,\dots, y_i=b_i]$ the node $v$ can collect all of the IDs from neighboring clusters and go over all possibilities for calculate the probability of adding too many edges.


\RestyleAlgo{boxruled}
\LinesNumbered
\begin{algorithm}[htbp]
	\caption{deterministic $(2k-1)$-spanner algorithm}
	\label{alg-spanner-det}

	$H= \emptyset$\\
	$\mathcal{C}_0 = \set{\set{v} \mid v \in V}$\\
	$E'=E$
	
	\For{$i$ from 1 to $k$}
	{
		$\phi = \emptyset$ //partial assignment\\
		\ForEach{$v \in V$ simultaneously}
		{
			\If{$v$ is cluster leader for $C \in \mathcal{C}_{i-1}$}
			{
				Send $ID(v)$ to all nodes
			}
			\For{$j \in [(\log n)]$}{
				\For{$\tau \in [\log n]$}{
					Compute $x_\tau=\Pr[X_v \mid \phi,y_j=\tau]$\\
					Send $(x_\tau, \tau)$ to $u\in V, ID(u) = \tau$\\
				}
				
				\If{$ID(v)=\tau, \tau \in [\log n]$ }
				{
					
					$s = \Pr[X_A \mid \phi, y_j=\tau] + \sum_{(x,\tau)} x $\\
					Send $(\tau,s)$ to main leader\\
				}
				\If{leader}
				{
					$\tau_{min} = argmin_\tau \{ s \mid (\tau,s) \}$\\
					$\phi = \phi \cup \set{y_j = \tau_{min}}$\\
					send updated $\phi$ to all nodes
				}
			}
		}
		\If{$i=k$}{
			$\mathcal{C}_i=\emptyset$
		}
		\Else
		{
			$\mathcal{C}_i$ sampled from $\mathcal{C}_{i-1}$ using $\ybar$
		}
		Run Algorithm~\ref{alg-spanner-rand-iteration}\\
		
	}	
	
\end{algorithm}

Algorithm~\ref{alg-spanner-det} is the pseudocode, where before running an iteration of the Baswana-Sen algorithm we first find a seed randomness $\ybar$, such that both bad events $A$ and $B$ do not occur. We then execute an iteration of the Baswana-Sen algorithm using the seed to assign a random coin for each cluster. Because neither of the bad events occur in any iteration, no node adds more than $2n^{1/k}\log n$ edges in any iteration, and we reach the final iteration with $O(n^{1/k})$ clusters. Therefore, each iteration adds no more than $O(n^{1+1/k}\log n)$ edges, and the final iteration adds no more than $O(kn^{1+1/k}\log n)$ edges (assuming a loose bound of having all nodes connect to all remaining clusters).
We conclude that our spanner has $O(kn^{1+1/k} \log n)$ edges.

We find $\ybar$ via the method of conditional expectations, keeping the pessimistic estimator below 1. We consider the value of the pessimistic estimator under some partial assignment to $\ybar$, and extend the assignment such that the pessimistic estimator is kept below 1. 

When finding $\ybar$ we bring the power of the congested clique to our aid. The sequential approach would go over $\ybar$ bit by bit, setting it to the value which optimizes the pessimistic estimator until all values of $\ybar$ are fully set. In the congested cliques we can go over blocks of $\ybar$ of size $\log n$, calculating the value of the pessimistic estimator for each one of the $n$ possible assignments of the block. We achieve this by assigning each node to be responsible for aggregating the data in order to calculate the pessimistic estimator for one of the possible $n$ values. This speeds up our calculation by a $\log n$ factor.

The above is implemented in the algorithm as follows: Each node $v\in V$ iterates over all $\log n$ blocks of $\ybar$, each of size $\log n$. For each block it computes $\Pr[X_v]$ conditioned on all $n$ values of the block. For every value $\tau$ of the block it sends each the conditional probability to $u_\tau$ which is responsible for computing the value of the pessimistic estimator conditioned on the value $\tau$ for the block. Knowing the conditional value of $\Pr[X_v]$ for every $v\in V$ and the IDs of the active clusters, the node $u_\tau$ can now compute the value of the conditional pessimistic estimator. All of the conditional values of the pessimistic estimator are then aggregated to a leader node which picks the value that minimizes the pessimistic estimator. Finally, the leader broadcasts the selected value for the block to all nodes. All nodes then continue to the next iteration. After computing $\ybar$ we run an iteration of Baswana-Sen where the coin tosses of clusters are generated from $\ybar$.

Another benefit of running the Baswana-Sen algorithm in the congested clique is that we save an $O(k)$ factor in our round complexity. This is because cluster nodes may now communicate with the cluster leader directly, instead of propagating their message via other cluster nodes. This takes $O(k)$ in the standard distributed setting because the distance to the center of each cluster is at most the iteration number.

We conclude that the round complexity of our algorithm is the number of iterations of the Baswana-Sen main loop in the congested clique, which is $O(k)$, multiplied by the overhead of guaranteeing the bad events $A,B$ will not happen during the iteration. We guarantee this by applying the method of conditional expectation over $\ybar$, using a block of size $\log n$ at each step of the method of conditional expectations.

\sloppy{
We note that each cluster flips a biased coin with probability $n^{-1/k}$, and we require $d$-wise independence between the coin flips. We conclude from Lemma~\ref{lem: d-wise independent} that the size of $\ybar$ is $O(d \max \set{\log n^{1/k}, \log n}) = O(\log^2 n)$ bits. Because, we pay $O(1)$ rounds for every $\log n$ chunk of $\ybar$, we conclude from the above that our algorithm takes a total of $O(k\log n)$ communication rounds. This completes the proof of Theorem~\ref{theorem:Spanner}.
}

\section{Discussion}

The main research question this paper addresses is the deterministic complexity of so-called local problems under bandwidth restrictions. Specifically, we derandomize an MIS algorithm and a spanner construction in the congested clique model, and derandomize an MIS algorithm in the CONGEST model. This greatly improves upon the previously known results. Whereas our techniques imply that many local algorithms can be derandomized in the congested-clique (e.g., hitting set, ruling sets, coloring, matching etc.), the situation appears to be fundamentally different for global task such as connectivity, min-cut and MST. For instance, the best randomized MST algorithm in the congested-clique has time complexity of $O(\log^*n)$ rounds \cite{Ghaffari-Parter-MST}, but the best deterministic bound is $O(\log \log n)$ rounds\cite{lotker2003mst}.  Derandomization of such global tasks might require different techniques. 

The importance of randomness in \emph{local} computation lies in the fact that recent developments~\cite{ChangKP16} show separations between randomized and deterministic complexities in the unlimited bandwidth setting of the LOCAL model. While some distributed algorithms happen to use small messages, our understanding of the impact of message size on the complexity of local problems is in its infancy. 

This work opens a window to many additional intriguing questions. First, we would like to see many more local problems being derandomized despite congestion restrictions. Alternatively, significant progress would be made by otherwise devising deterministic algorithms for this setting. Finally, understanding the relative power of randomization with bandwidth restrictions is a worthy aim for future research.

\paragraph{Acknowledgments:} We are very grateful to Mohsen Ghaffari for many helpful discussions and useful observations involving the derandomization of his MIS algorithm.

\bibliographystyle{alpha}
\bibliography{derand_main}

\appendix
\section{Pseudocode for the Deterministic MIS algorithm}
\label{sec:code}
\label{sec:MISdetailed}

Let $\mathcal{H}=\mathcal{H}_{\gamma,\beta}$ with $\gamma=\Theta(\log n)$ and $\beta=\Theta(\log \Delta)$ be given by Lemma \ref{lem: d-wise independent}. Let $\mathcal{H}(Y_i) \subseteq \mathcal{H}_{\gamma,\beta}$ be the collection of all hash functions that agree with the partial seed $Y_i$. Each function $h \in \mathcal{H}(Y_i)$ corresponds to a deterministic MIS algorithm.

For a hash function $h \in \mathcal{H}$ and value $p=1/2^i$ representing the probability of a node to be marked, define $m_h(v,p)=1$ if $h(ID(v)) \in [1,2^{\beta-i}]$ and $m_h(v,p)=0$ otherwise. That is, marking a node $v$ with probability $p$ is simulated deterministically by computing $m_h(v,p)$, since we output $1$ only if the value $h(ID(v))$ appears in the top $1/2^i$ fraction of the range $[1, \beta]$.
For $p=1/2^i$ and $p'=1/2^{i'}$, define
$$m_h(v,u, p,p')=
\begin{cases}
1, \mbox{~~ if ~~} h(ID(v)) \in [1,2^{\beta-i}] \mbox{~and~} h(ID(u)) \in [1, 2^{\beta-i'}].\\
0,  \mbox{~~otherwise.}
\end{cases}
$$
That is $m_h(v,u, p,p')=1$ is the deterministic simulation of having both $v$ and $u$ being marked when $v,u$ are marked with probability $p,p'$.
Let $\alpha \in (0,1]$ be the constant such that every golden node is removed with probability at least $\alpha$ when using pairwise independence. Algorithm~\ref{MIS} gives the pseudocode of our deterministic MIS algorithm.

\RestyleAlgo{boxruled}
\LinesNumbered
\begin{algorithm}[htbp]
	\caption{$\texttt{DetMIS}(t,Y_i)$: Code for node $v$ in step $i$ of phase $t$.}
\label{MISt}
Input: \\
\quad\quad A partial graph induced on the undecided nodes\\
\quad\quad A partial assignment $Y_i=(y_1=b_1, \ldots, y_i=b_i)$\\
Output: \\
\quad\quad An assignment $Y_{i+1}=(y_1=b_1, \ldots, y_i=b_i,y_{i+1}=b_{i+1})$\\
\quad\quad // the goal is to extend the assignment for one more variable in the seed

\For{$u \in N(v) \cup \{v\}$ and $b \in \{0,1\}$}
{
	$m_{t,b}(u) \gets \sum_{h \in \mathcal{H}(Y_{i,b})}m_h(u,p_t(u))$
}
\For{$u \in N(v)$, and $b \in \{0,1\}$}
{
	$m_{t,b}(v,u) \gets \sum_{h \in \mathcal{H}(Y_{i,j})}m_h(v,u,p_t(v),p_t(u))$ 
}
$M_{t,b}(v)\gets \sum_{u \in N(u)}m_{t,b}(v,u)$\\
Exchange $M_{t,0}(v),M_{t,1}(v)$ with neighbors\\
\If {$v$ is a golden type-1 node}
{
	$\chi(v,b) \gets m_{t,b}(v)-M_{t,b}(v)$ for $b \in \{0,1\}$\\
	// $\chi(v,b)$ corresponds to $\mathbb{E}(\psi_{v,t} ~\mid~ Y_{i,b})$
}
\If{$v$ is a golden type-2 node}
{
	Define $W(v) \subseteq N(v)$ such that $\sum_{u \in W(v)} p_t(u) \in [1/40,1/4]$\\
	\For{$u \in W(v)$, and $b \in \{0,1\}$}
	{
		$M_{t,b}(u,W(v)) \gets \sum_{h \in \mathcal{H}(Y_{i,j})}\sum_{w \in W(v) \setminus \{u\}} m_h(u,w, p_t(u),p_t(w))$
	}
	$\chi(v,b) \gets \sum_{u} m_{t,b}(u)-M_{t,b}(u)- M_{t,b}(u,W(v))$\\
	// $\chi(v,b)$ corresponds to $\mathbb{E}(\psi_{v,t} ~\mid~ Y_{i,b})$
}
\If{$v$ is golden}
{
	Send to the leader $(1/(1-\alpha))^{age(v)} \cdot \chi(v,b)$ for $b \in \{0,1\}$\\
	// All of the above fits in an $O(\log n)$-bit message \\
	$age(v) \gets age(v)+1$
}
Receive $j^*$ from the leader\\
$y_{i+1} \gets j^*$
\end{algorithm}

\RestyleAlgo{boxruled}
\LinesNumbered
\begin{algorithm}[htbp]
	\caption{$\texttt{DetMIS}(t)$: Code for node $v$ in phase $t$.}
\label{MIS}
Input:\\
\quad A partial graph induced on the undecided nodes\\
Output:\\
\quad Decide whether to join the MIS, be removed from the graph, or remain undecided

Let $p_t(v)$ be the desired level for joining MIS, initially $p_0(v) \gets 1/2$\\
Let $d_t(v)=\sum_{u \in N(v)}p_t(u)$ be the effective degree of node $v$ in phase $t$\\
Let $age(v)$ be the number of golden phases it had so far. Initially, $age(v) \gets 0$\\
Exchange the $p_t(v)$ , $d_t(v)$ and ID with neighbors\\
$Y_0=\gets \emptyset, \beta\gets \Theta(\log n)$\\
\For{$i=0, \ldots, \beta$}
{
	$(b_{i+1}) \gets \texttt{DetMIS}(t,Y_i)$\\
	$Y_{i+1}\gets (y_1=b_1, \ldots, y_{i+1}=b_{i+1})$
}
Let $h_t$ be selected from $\mathcal{H}_{a,\beta}$ using $Y_{\beta}$\\
Let $m_t(v) \gets m_{h_t}(v,p_t(v))$\\
Exchange $m_t(v)$ with neighbor\\
\If{$m_t(v)=1$ and none of your neighbors has $m_t(u)=1$}
{
	join MIS\\
	notify neighbors
}
\If{$\exists$ neighbor that joins MIS}
{
	notify neighbors of being removed from graph
}
\end{algorithm}

%

\end{document}